\newtheorem{theorem}{Theorem}
\newtheorem{observation}{Observation}
\newtheorem{corollary}{Corollary}
\newtheorem{lemma}{Lemma}
\newtheorem{claim}{Claim}
\newcommand{\sm}{\setminus}
\begin{document}

\title{Generating All Minimal Edge Dominating Sets \\
with Incremental-Polynomial Delay\thanks{This work is supported by the European Research Council, the Research Council of Norway, and the French National Research Agency.} }

\author{
Petr A. Golovach\thanks{Department of Informatics, University of Bergen, Norway, 
\texttt{\{petr.golovach, pinar.heggernes,} \texttt{yngve.villanger\}@ii.uib.no}.}
\addtocounter{footnote}{-1}
\and
Pinar Heggernes\footnotemark
\and 
Dieter Kratsch\thanks{LITA, Universit\'e de Lorraine - Metz, France, \texttt{kratsch@univ-metz.fr}.}
\addtocounter{footnote}{-2}
\and
Yngve Villanger\footnotemark\addtocounter{footnote}{-1}
}

\date{}

\maketitle

\begin{abstract}
For an arbitrary undirected simple graph $G$ with $m$ edges, we give an algorithm with running time 
$O(m^4 |\mathcal{L}|^2)$ 
to generate the set $\mathcal{L}$ of all minimal edge dominating sets of $G$. For bipartite graphs we obtain a better result; we show that their minimal edge dominating sets  can be enumerated in time 
$O(m^4 |\mathcal{L}|)$. 
In fact our results are stronger; both algorithms generate the next minimal edge dominating set with incremental-polynomial delay 
$O(m^5 |\mathcal{L}|)$
and 
$O(m^4 |\mathcal{L}|)$
respectively, when $\mathcal{L}$ is the set of already generated minimal edge dominating sets. Our algorithms are tailored for and solve the equivalent problems of enumerating minimal (vertex) dominating sets of line graphs and line graphs of bipartite graphs, with incremental-polynomial delay, and consequently in output-polynomial time. Enumeration of minimal dominating sets in graphs has very recently been shown to be equivalent to enumeration of minimal transversals in hypergraphs. 
The question whether the minimal transversals of a hypergraph can be enumerated in output-polynomial time is a fundamental and challenging question in Output-Sensitive Enumeration; it has been open for several decades and has triggered extensive research in the field.
 
To obtain our results, we present a flipping method to generate all minimal dominating sets of a graph. Its basic idea is to apply a flipping operation to a minimal dominating set $D^*$ to generate minimal dominating sets $D$ such that $G[D]$ contains more edges than $G[D^*]$. Our flipping operation replaces an isolated vertex of $G[D^*]$ with a neighbor outside of $D^*$, and updates $D^*$ accordingly to obtain $D$. The process starts by generating all maximal independent sets, which are known to be minimal dominating sets. Then the flipping operation is applied to every appropriate generated minimal dominating set.  We show that the flipping method for enumeration of minimal dominating sets works successfully for line graphs, resulting in an algorithm with incremental-polynomial delay 
$O(n^2m^2|\mathcal{L}|)$ 
on line graphs and an algorithm with incremental-polynomial delay 
$O(n^2 m|\mathcal{L}|)$
on line graphs of bipartite graphs, where $n$ is the number of vertices and $\mathcal{L}$ is the set of already 
generated minimal dominating sets of the input graph. Finally we show that the flipping method also works for graphs of large girth, resulting in an algorithm with incremental-polynomial delay  
$O(n^2m|\mathcal{L}|^2)$
to enumerate the minimal dominating sets of graphs of girth at least $7$. All given delay times are also the overall running times of the mentioned algorithms, respectively, when $\mathcal{L}$ is the set of all minimal dominating sets of the input graph.
\end{abstract}

\newpage

\section{Introduction} 

Generating all objects that satisfy a specified property, also called {\em enumeration}, plays a central role in Algorithms and Complexity. When enumerating vertex subsets whose number can be exponential in the size of the input graph, one resorts to {\em output-sensitive analysis} for tractability \cite{AvisF96,EiterG95,EiterG02,EiterGM03,JohnsonP88,KhachyanBBEG08,KhachyanBEG08,LawlerL80,Tarjan73},
where the running time is measured in the size of the input plus the number of generated objects. Algorithms that run in {\em output-polynomial time} list all objects in time that is polynomial in the size of the input plus the size of the output.  For various enumeration problems it has been shown that no output-polynomial time algorithm can exist unless P\,=\,NP \cite{KhachyanBBEG08,KhachyanBEG08,LawlerL80}. 
An even better behavior than output-polynomial  is achieved by algorithms with so called {\em incremental-polynomial delay}, which means that the next object in the list of output objects is generated in time that is polynomial in the size of the input plus the number of already generated objects. Incremental-polynomial delay immediately implies output-polynomial time. For practical applications output-polynomial algorithms might be impractical if all the objects are listed only at the end of the algorithm, whereas algorithms with incremental-polynomial delay ensure that within reasonable time one can either obtain a new object or conclude that the list of objects is exhausted. Furthermore, several applications need only to list a certain number of objects and not all of them, in which case incremental-polynomial delay can be very efficient. 

One of the most classical and widely studied enumeration problems is that of listing all minimal transversals of a hypergraph, i.e., minimal hitting sets of its set of hyperedges. This problem has applications in areas like database theory, machine learning, data mining, game theory, artificial intelligence, mathematical programming, and distributed systems; extensive lists of corresponding references are provided by e.g.,~Eiter and Gottlob \cite{EiterGM03}, and Elbassioni, Makino, and Rauf~\cite{ElbassioniMR09}. Whether or not all minimal transversals of a hypergraph can be listed in output-polynomial time, i.e., in time that is polynomial in the size of the hypergraph plus its minimal transversals, has been identified as a fundamental challenge in a long list of seminal papers, e.g.,~\cite{EiterG95,EiterG02,EiterGM03,ElbassioniMR09,FredmanK96,JohnsonP88,Papadimitriou97}, and it remains unresolved despite continuous attempts since the 1980's. 

Recently Kant{\'e}, Limouzy, Mary, and Nourine \cite{KanteLMN12} have proved that enumerating the minimal transversals of a hypergraph is equivalent to enumerating the minimal dominating sets of a graph. In particular, they show that an output-polynomial time algorithm for enumerating minimal dominating sets in graphs implies an output-polynomial time algorithm for enumerating minimal transversals in hypergraphs.  A set of vertices in a graph is a dominating set if every vertex is either in the set or has a neighbor in the set. Such a set is minimal if no proper subset of it is a dominating set. Dominating sets form one of the best studied notions in Algorithms and Complexity; the number of papers on domination in graphs is in the thousands, and several well known surveys and books are dedicated to the topic (see, e.g.,~\cite{HaynesH98}).  

Given the importance of the hypergraph transversal enumeration problem and the failed attempts to resolve whether it can be solved in output-polynomial time, efforts to identify tractable special cases have been highly appreciated \cite{BorosGH98,BorosHIK97,DomingoMP99,Eiter94,EiterG95,EiterGM03,ElbassioniMR09,MakinoI97,MakinoI98}. The newly proved equivalence to domination allows for new ways to attack this long-standing open problem. In fact some results on output-polynomial algorithms to enumerate minimal dominating sets in graphs already exist for graphs of bounded treewidth and of bounded clique-width \cite{Courcelle09}, interval graphs \cite{EiterG95}, strongly chordal graphs \cite{EiterG95},  planar graphs \cite{EiterGM03}, degenerate graphs \cite{EiterGM03}, and split graphs \cite{KanteLMN11}.

In this paper we show that the minimal dominating sets of line graphs and of graphs of large girth can be enumerated with incremental-polynomial delay. More precisely, we give algorithms whose incremental-polynomial delay  is 
$O(n^2m^2|\mathcal{L}|)$
on line graphs,  
$O(n^2 m|\mathcal{L}|)$
on line graphs of bipartite graphs, and  
$O(n^2m|\mathcal{L}|^2)$ 
on graphs of girth at least $7$, where $\mathcal{L}$ is the set of already generated minimal dominating sets of an input graph on $n$ vertices and $m$ edges. 
Line graphs form one of the oldest and most studied graph classes \cite{HararyN60,HemmingerB78,Krausz43,Whitney32} and can be recognized in linear 
time \cite{Roussopoulos73}.
Our results, in addition to proving tractability for two substantial cases of the hypergraph transversal enumeration problem, imply enumeration algorithms with incremental-polynomial delay of minimal edge dominating sets in {\em arbitrary} graphs.  
In particular, we obtain an algorithm with delay 
$O(m^5 |\mathcal{L}|)$
to enumerate the minimal edge dominating sets of any graph on $m$ edges, where $\mathcal{L}$ is the set of already generated edge dominating sets. For bipartite graphs, we are able to reduce the delay to linear in $\mathcal{L}$: 
$O(m^4 |\mathcal{L}|)$.
The given delay is also the asymptotic overall running time for each of the algorithms.

As the central tool in our algorithms, we present a {\em flipping} approach to generate new minimal dominating sets from a parent minimal dominating set.  The basic idea is to first enumerate all maximal independent sets of the input graph $G$ using the algorithm of Johnson, Papadimitriou, and Yannakakis \cite{JohnsonP88}, and then to apply a flipping operation to every appropriate minimal dominating set found, to find new minimal dominating sets inducing subgraphs with more edges. Starting from a parent minimal dominating set $D^*$, our flipping operation replaces an isolated vertex of $G[D^*]$ with a neighbor outside of $D^*$, and supplies the resulting set with necessary additional vertices to obtain new minimal dominating sets $D$ such that $G[D]$ has more edges compared to $G[D^*]$. We show that on all graphs, we can identify a unique parent for each minimal dominating set. On line graphs and graphs of girth at least 7, we are able to prove additional (different) properties of the parents, which allow 
us to obtain the desired running time on these graph classes.

\section{Definitions and Preliminary Results}\label{sec:defs}
We consider finite undirected (if it is not stated explicitly otherwise) graphs without loops or multiple
edges. Given such a graph $G=(V,E)$, its vertex and edge sets, $V$ and $E$, are also denoted by $V(G)$ and $E(G)$, respectively.
The subgraph of $G$ induced by a subset $U\subseteq V$ is denoted by $G[U]$. 
For a vertex $v$, we denote by $N(v)$ its
\emph{(open) neighborhood}, that is, the set of vertices that are
adjacent to $v$. The \emph{closed neighborhood} of $v$ is the set
$N(v)\cup\{v\}$, and it is denoted by $N[v]$. If $N(v)=\emptyset$ then $v$ is {\em isolated}.
For a set $U\subseteq V$, $N[U]=\cup_{v\in U}N[v]$, and $N(U)=N[U] \sm U$. 
The \emph{girth} $g(G)$ of a graph $G$ is the length of a shortest cycle in $G$; if $G$ has no cycles, then $g(G)=+\infty$.  A set of vertices is a {\em clique} if it induces a complete subgraph of $G$. A clique is {\em maximal} if no proper superset of it is a clique.

Two edges in $E$ are adjacent if they share an endpoint.
The \emph{line graph $L(G)$ of $G$} is the graph whose set of vertices is $E(G)$, such that two vertices $e$ and $e'$ of $L(G)$ are adjacent if and only if $e$ and $e'$ are adjacent edges of $G$. A graph $H$ is a \emph{line graph} if $H$ is isomorphic to $L(G)$ for some graph $G$. 
Equivalently, a graph is a line graph if its edges can be partitioned into maximal cliques such that no vertex lies in more than two maximal cliques. This implies in particular that the neighborhood of every vertex can be partitioned into at most two cliques. It is well known that line graphs do not have induced subgraphs isomorphic to $K_{1,3}$, also called a {\it claw}.

Vertex $v$ \emph{dominates} vertex $u$  if $u\in N(v)$; similarly $v$ dominates a set of vertices $U$ if $U\subseteq N[v]$. 
For two sets $D,U\subseteq V$, $D$ dominates $U$ if $U\subseteq N[D]$. A set of vertices $D$ is a \emph{dominating set} of $G=(V,E)$ if $D$ dominates $V$. A dominating set is \emph{minimal} if no proper subset of it is a dominating set. 
Let $D$ be a dominating set of $G$, and let $v \in D$. Vertex $u$ is a \emph{private vertex}, or simply \emph{private}, for vertex $v$ (with respect to $D$) if $u$ is dominated by $v$ but is not nominated by $D \sm \{v\}$. Clearly, $D$ is a minimal dominating set if and only if each vertex of $D$ has a private vertex.  We denote by $P_D[v]$ the set of all private vertices for $v$. Notice that a vertex of $D$ can be private for itself. Vertex $u$ is a \emph{private neighbor} of $v\in D$ if $u\in N(v)\cap P_D[v]$. The set of all private neighbors of $v$ is denoted by $P_D(v)$. Note that $P_D[v]=P_D(v)\cup \{v\}$ if $v$ is isolated in $G[D]$, and otherwise $P_D[v]=P_D(v)$.

A set of edges $A\subseteq E$ is an \emph{edge dominating} set if each edge $e\in E$ is either in $A$ or is adjacent to an edge in $A$. An edge dominating set is {\em minimal} if no proper subset of it is an edge dominating set.
It is easy to see that $A$ is a (minimal) edge dominating set of $G$ if and only if $A$ is a (minimal) dominating set of $L(G)$. 

Let $\phi(X)$ be a property of a set of vertices or edges $X$ of a graph (e.g., ``$X$ is a minimal dominating set''). 
The \emph{enumeration problem for property $\phi(X)$} for a given graph $G$ asks for the set $\mathcal{C}$ of all subsets of vertices or edges $X$ that satisfy $\phi(X)$.  
An \emph{enumeration algorithm} for a set $\mathcal{C}$ is an algorithm that lists the elements of $\mathcal{C}$ without repetitions. 
An enumeration algorithm $\mathcal{A}$ for $\mathcal{C}$ is said to be \emph{output-polynomial} if there is a polynomial $p(x,y)$ such that 
all elements of $\mathcal{C}$ are listed in time $O(p(|G|,|\mathcal{C}|))$. Assume now that $X_1,\ldots,X_\ell$ are the elements of $\mathcal{C}$ 
enumerated in the order in which they are generated by $\mathcal{A}$. 
Algorithm $\mathcal{A}$ enumerates $\mathcal{C}$ with \emph{incremental-polynomial delay} if there is a polynomial $p(x,i)$ such that for each $i\in\{1,\dots,\ell\}$, $X_i$ is generated in time $O(p(|G|,i))$. Finally, 
$\mathcal{A}$ enumerates $\mathcal{C}$ with \emph{polynomial delay} if there is a polynomial $p(x)$ such that for each $i\in\{1,\dots,\ell\}$, the time delay between outputting 
$X_{i-1}$ and $X_{i}$ is $O(p(|G|))$.

A set of vertices $U\subseteq V$ is an \emph{independent set} if no two vertices of $U$ are  adjacent in $G$, and an independent set is \emph{maximal} if no proper superset of it is an independent set. The following observation is folklore.

\begin{observation}\label{obs:mis_mds}
Any maximal independent set of a graph $G$ is a minimal dominating set of $G$. Furthermore, the set of all maximal independent sets of $G$ is exactly the set of all its minimal dominating sets $D$ such that $G[D]$ has no edges.
\end{observation}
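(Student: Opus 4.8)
The plan is to prove the two claims of Observation~\ref{obs:mis_mds} in sequence, relying only on the definitions of dominating set, minimal dominating set, independent set, and maximal independent set given above. The statement has two parts: first, that every maximal independent set is a minimal dominating set; second, that the maximal independent sets are exactly those minimal dominating sets $D$ for which $G[D]$ is edgeless.

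First I would show that a maximal independent set $U$ is a dominating set. Suppose some vertex $v\in V$ is not dominated by $U$, i.e.\ $v\notin N[U]$; then $v\notin U$ and $v$ has no neighbor in $U$, so $U\cup\{v\}$ is still independent, contradicting maximality of $U$. Hence $U$ dominates $V$. Next I would show minimality: since $U$ is independent, no two vertices of $U$ are adjacent, so for each $v\in U$ we have $N(v)\cap U=\emptyset$, which means $v$ is only dominated by itself within $U$. Thus $v$ is private for itself (with respect to $U$), and since every vertex of $U$ has a private vertex, the characterization stated in the preliminaries (a dominating set is minimal iff each of its vertices has a private vertex) gives that $U$ is a minimal dominating set. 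This simultaneously establishes that $G[U]$ has no edges, since $U$ is independent.

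For the second part I would prove the two inclusions. The forward direction is immediate from the first part together with the preceding sentence: every maximal independent set $U$ is a minimal dominating set and $G[U]$ is edgeless. For the reverse direction, let $D$ be a minimal dominating set with $G[D]$ edgeless; since $G[D]$ has no edges, $D$ is an independent set. It remains to show $D$ is maximal. Because $D$ is a dominating set, every vertex $v\notin D$ satisfies $v\in N(D)$, i.e.\ $v$ has a neighbor in $D$, so $D\cup\{v\}$ is not independent. Therefore no proper superset of $D$ is independent, and $D$ is a maximal independent set.

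I do not expect any genuine obstacle here, as this is a folklore fact and every step follows directly from the definitions; the only point requiring a little care is to invoke the correct direction of the private-vertex characterization of minimality and to keep the logical equivalence of the second part organized as two clean inclusions rather than conflating them.
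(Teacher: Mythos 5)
Your proof is correct. The paper states this observation as folklore and supplies no proof of its own, so there is nothing to compare against; your argument---domination from maximality of the independent set, minimality via the ``every vertex of $U$ is private for itself'' characterization, and the two clean inclusions for the set equality---is the standard one, and every step follows from the definitions exactly as you claim.
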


Tsukiyama et al.~\cite{TsukiyamaIAS77} showed that maximal independent sets can be enumerated with polynomial delay. Johnson et al.~\cite{JohnsonP88} showed that such an enumeration can be done in lexicographic order. 

\begin{theorem}[\cite{JohnsonP88}]\label{thm:mis} 
All maximal independent sets of a graph with $n$ vertices and $m$ edges can be enumerated in lexicographic order with polynomial delay $O(n(m+n\log |\mathcal{I}|))$, where $\mathcal{I}$ is the set of  already generated maximal independent sets.
\end{theorem}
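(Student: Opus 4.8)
The plan is to prove the statement by exhibiting the supergraph-traversal scheme of Johnson, Papadimitriou, and Yannakakis, organized around a greedy completion operator and a lexicographic tree on the maximal independent sets. Fix the vertex ordering $v_1, v_2, \dots, v_n$ that underlies the lexicographic comparison, and for an independent set $S$ let $\mathrm{comp}(S)$ denote the maximal independent set obtained by scanning $v_1, \dots, v_n$ in increasing order and adding each $v_j$ that is not adjacent to any already-selected vertex. Then $\mathrm{comp}(S)$ is the lexicographically smallest maximal independent set containing $S$, it contains $S$, and it is computable in $O(n+m)$ time; in particular $\mathrm{comp}(\emptyset)$ is the lexicographically smallest maximal independent set of $G$ and will serve as the root of the enumeration.

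Next I would define, for a maximal independent set $M$ and an index $i$ with $v_i \notin M$, the candidate
\[
C_i(M) = \mathrm{comp}\big(\,\big[(M \cap \{v_1,\dots,v_{i-1}\}) \sm N(v_i)\big] \cup \{v_i\}\,\big),
\]
that is, I keep the part of $M$ preceding $v_i$, discard the neighbors of $v_i$, force $v_i$ in, and complete greedily. The first thing to establish is that each $C_i(M)$ is a maximal independent set and that $C_i(M) >_{\mathrm{lex}} M$: one verifies that $C_i(M)$ and $M$ first differ at some index at most $i$, and that the value taken there makes $C_i(M)$ the lexicographically larger set. Consequently every edge of the generation relation points from a set to lexicographically larger ones, which is exactly what lets a priority-queue discipline emit the sets in increasing lexicographic order.

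The heart of the proof is the structural lemma asserting that this relation organizes \emph{all} maximal independent sets into a tree rooted at $\mathrm{comp}(\emptyset)$: every maximal independent set $M' \ne \mathrm{comp}(\emptyset)$ equals $C_i(M)$ for exactly one lexicographically smaller maximal independent set $M$ and one index $i$. I would obtain the parent of $M'$ canonically, e.g.\ by taking $i$ to be the largest index at which $M'$ deviates from the greedy completion of its own truncation $M' \cap \{v_1, \dots, v_{i-1}\}$, and showing that removing the influence of $v_i$ and recompleting yields a lex-smaller set $M$ with $C_i(M) = M'$. Establishing existence and uniqueness of this parent guarantees simultaneously that the traversal reaches every maximal independent set (completeness) and that none is produced twice. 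This is the step I expect to be the main obstacle: the lexicographic bookkeeping inside the flip, together with the verification that the canonical parent is well defined and strictly smaller, are the delicate parts, while the rest is mechanical.

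With the tree in hand, the enumeration maintains a priority queue ordered lexicographically, seeded with $\mathrm{comp}(\emptyset)$, together with a dictionary of the sets already inserted. Repeatedly I extract the lexicographically smallest set $M$ from the queue, output it, generate its $O(n)$ candidates $C_i(M)$, and insert each one that is not already recorded. Because all candidates are lexicographically larger than $M$, every set is output in lexicographic order, and by the unique-parent lemma exactly once. For the delay between consecutive outputs: generating the candidates costs $O(n)$ greedy completions, i.e.\ $O(n(n+m))$ time, while each queue insertion and dictionary lookup compares and stores sets of size $O(n)$ against the at most $|\mathcal{I}|$ sets seen so far, costing $O(n \log |\mathcal{I}|)$ apiece and $O(n^2 \log |\mathcal{I}|)$ over the $O(n)$ candidates. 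Summing yields delay $O(n(m + n \log |\mathcal{I}|))$, as claimed, where $\mathcal{I}$ is the set of maximal independent sets already generated.
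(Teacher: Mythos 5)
First, a point of reference: the paper does not prove this statement at all; Theorem~\ref{thm:mis} is imported verbatim from Johnson, Papadimitriou, and Yannakakis \cite{JohnsonP88}. So your reconstruction can only be judged against their original argument, whose architecture you have correctly recalled (greedy completion $\mathrm{comp}$, candidates obtained by forcing in a vertex $v_i$ and recompleting, a lexicographic priority queue plus a dictionary of already-seen sets, and the stated delay accounting). However, two of the structural claims on which your correctness argument rests are false. (1) It is \emph{not} true that $C_i(M) >_{\mathrm{lex}} M$ for every candidate. Take $G$ with vertices $v_1,v_2,v_3$ and the single edge $v_2v_3$, and let $M=\{v_1,v_3\}$. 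Then $C_2(M)=\mathrm{comp}\bigl((\{v_1\}\setminus N(v_2))\cup\{v_2\}\bigr)=\{v_1,v_2\}$, and the first index where the two sets differ is $2$, with $v_2$ in the candidate; hence $C_2(M) <_{\mathrm{lex}} M$. Your sentence ``Because all candidates are lexicographically larger than $M$, every set is output in lexicographic order'' therefore rests on a false premise, and this is the claim carrying your whole lex-order correctness. In the genuine argument one must allow candidates that are lex-smaller than their generator and prove, by induction on the number of sets output, that any such candidate has already been output (so the dictionary absorbs it); that induction uses only the \emph{existence} half of your structural lemma, namely that every non-root maximal independent set arises as a candidate of at least one lex-smaller maximal independent set.

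(2) The uniqueness half of your ``tree'' lemma is also false: the generation relation is a DAG, not a tree. In $G$ with vertices $v_1,\dots,v_4$ and edges $v_1v_3$ and $v_2v_4$, the maximal independent set $\{v_2,v_3\}$ equals both $C_3(\{v_1,v_2\})$ and $C_3(\{v_1,v_4\})$, and both generators are lex-smaller than it. This error is less damaging in itself --- your dictionary already prevents duplicate output, so uniqueness was never needed --- but you present existence-and-uniqueness jointly as ``the heart of the proof,'' and a canonical-parent construction can only be used to prove existence, not exclusivity. Moreover, the one step you yourself flag as delicate (that recompleting after ``removing the influence of $v_i$'' yields a lex-smaller set $M$ with $C_i(M)=M'$) is only sketched: with your definition of $C_i$, the recompletion $\mathrm{comp}(M'\cap\{v_1,\dots,v_{i-1}\})$ may acquire greedy vertices before index $i$ that are not neighbors of $v_i$, in which case $C_i(M)\neq M'$, so the choice of index and the verification need real work. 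In summary: the algorithmic skeleton and the delay analysis match \cite{JohnsonP88}, but the two lemmas you assert in place of their completeness lemma are both false, so the proposal as written does not establish the theorem.
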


Let $v_1,\ldots,v_n$ be the vertices of a graph $G$.
Suppose that $D'$ is a dominating set of $G$. We say that a minimal dominating set $D$ is obtained from $D'$ by \emph{greedy removal of vertices} ({\em with respect to order $v_1,\ldots,v_n$}) if 
we initially let $D=D'$ and then recursively apply the following rule:

\medskip 
\noindent
{\em If $D$ is not minimal, then find a vertex $v_i$ with the smallest index $i$ such that
 $D\setminus\{v_i\}$ is a dominating set in $G$, and set $D=D\setminus\{v_i\}$.}

\medskip
\noindent Clearly, when we apply 
this rule,  we never remove vertices of $D'$ that have private neighbors. 
   
\section{Enumeration by flipping: the general approach}\label{sec:main} 
In this section we describe the general scheme of our enumeration algorithms. Let $G$ be a graph; we fix an (arbitrary) order of its vertices: $v_1,\ldots,v_n$. Observe that this order induces a lexicographic order on the set $2^{V(G)}$. Whenever greedy removal of vertices of a dominating set is performed further in the paper,  it is done with respect to this ordering.

\begin{figure}[ht]
\centering\scalebox{0.7}{\input{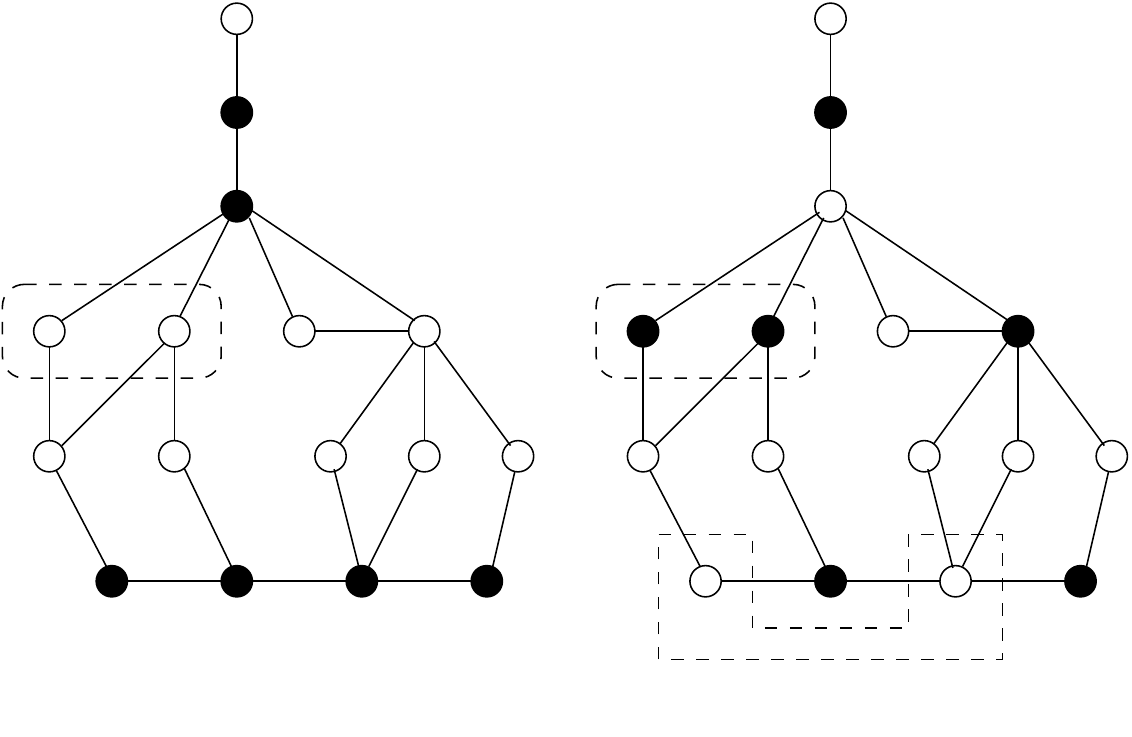_t}}
\caption{A minimal dominating set $D$ and its parent $D^*$; the vertices of $D$ and $D^*$ are black. 
\label{fig:parent}}
\end{figure}

Let $D$ be a minimal dominating set of $G$ such that $G[D]$ has at least one edge $uw$. Then vertex $u\in D$ is dominated by vertex $w\in D$. Let $v\in P_D(u)$.  
Let  $X_{uv}\subseteq P_D(u)\setminus N[v]$ be a maximal independent set in $G[P_D(u)\setminus N[v]]$ selected greedily
 with respect to ordering $v_1,\ldots,v_n$, i.e., we initially set $X_{uv}=\emptyset$ and then recursively include in $X_{uv}$ the vertex of 
$P_D(u)\setminus (N[\{v\}\cup X_{uv}])$ with the smallest index as long as it is possible. 
Consider the set $D'=(D\setminus\{u\})\cup X_{uv}\cup \{v\}$. Notice that $D'$ is a dominating set in $G$, since all vertices of $P_D(u)$ are dominated by $X_{uv}\cup\{v\}$. 
Let $Z_{uv}$ be the set of vertices 
%PG:
that are removed to ensure minimality, 
and let   $D^*=((D\setminus \{u\})\cup X_{uv}\cup \{v\})\setminus Z_{uv}$. See Figure~\ref{fig:parent} for an example.

\begin{lemma}\label{lem:parent}
The set $D^*$ is a minimal dominating set in $G$ such that $X_{uv}\cup\{v\}\subseteq D^*$,
$|E(G[D^*])|<|E(G[D])|$ and $v$ is an isolated vertex of $G[D^*]$.
\end{lemma}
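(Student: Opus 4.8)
The plan is to verify each of the four claimed properties of $D^*$ in turn, using the explicit construction $D^* = ((D\setminus\{u\})\cup X_{uv}\cup\{v\})\setminus Z_{uv}$, where $Z_{uv}$ is exactly the set of vertices removed during greedy removal to restore minimality. Let me denote $D' = (D\setminus\{u\})\cup X_{uv}\cup\{v\}$, so that $D^*$ is obtained from $D'$ by greedy removal. The statement that $D^*$ is a minimal dominating set is almost immediate: the excerpt already observes that $D'$ dominates $G$ (every private neighbor of $u$ is covered by $X_{uv}\cup\{v\}$, and every other vertex was already dominated by $D\setminus\{u\}$), and greedy removal by definition produces a minimal dominating set from any dominating set.

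The substantive work is in showing $X_{uv}\cup\{v\}\subseteq D^*$ and that $v$ is isolated in $G[D^*]$. The plan is to argue that none of the vertices of $X_{uv}\cup\{v\}$ is ever removed by greedy removal, i.e. $Z_{uv}\cap(X_{uv}\cup\{v\})=\emptyset$. The comment in the excerpt after the greedy-removal definition is the key lever: greedy removal never deletes a vertex that has a private neighbor. So I would identify a private neighbor for each vertex of $X_{uv}\cup\{v\}$ with respect to $D'$ (and check it remains private throughout removal). Since $X_{uv}$ is an independent set inside $P_D(u)\setminus N[v]$, each $x\in X_{uv}$ is a private neighbor of $u$ in $D$, hence adjacent only to $u$ among $D$-vertices; after we swap out $u$ for $v$ and the $X_{uv}$ vertices themselves are mutually nonadjacent, each such $x$ should witness itself or $u$ (now removed) as private, so I must locate a vertex that only $x$ dominates. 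For $v$: by construction $X_{uv}\subseteq P_D(u)\setminus N[v]$, so $v$ is nonadjacent to every vertex of $X_{uv}$, and I would argue $v$ retains a private neighbor (for instance $v$ itself, or a neighbor that was formerly dominated only through $u$), which simultaneously forces $v$ into $D^*$ and shows $v$ is isolated in $G[D^*]$ because its only potential $D^*$-neighbors among the new vertices are exactly those in $X_{uv}\cup\{u\}$, all of which are either absent or nonadjacent.

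For the edge-count inequality $|E(G[D^*])|<|E(G[D])|$, the plan is to compare the two induced subgraphs via the transformation $D\to D'\to D^*$. Passing from $D$ to $D\setminus\{u\}$ removes vertex $u$ and all edges incident to it in $G[D]$; since $uw$ is an edge of $G[D]$, at least one edge is destroyed, and in fact $u$ may have further neighbors in $D$. Adding $X_{uv}\cup\{v\}$ reintroduces some vertices, but the point is that $v$ is isolated in $G[D^*]$ (contributing no edges) and the $X_{uv}$ vertices, being former private neighbors of $u$ and mutually independent, contribute few edges. I would bound the number of edges newly created against the number destroyed by removing $u$, and use that $Z_{uv}$ only removes more edges, to conclude strict inequality.

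\textbf{The main obstacle} I anticipate is the bookkeeping around private neighbors \emph{during} greedy removal rather than merely with respect to $D'$: a vertex that has a private neighbor in $D'$ could in principle lose it if an earlier-removed vertex of $Z_{uv}$ was jointly dominating something — but this cannot happen, since greedy removal only deletes vertices that are redundant, so private-neighbor status is preserved. Making this invariant precise (that $X_{uv}\cup\{v\}$ retain private neighbors at every step of the removal process) is the delicate part, and it is exactly what guarantees both the containment $X_{uv}\cup\{v\}\subseteq D^*$ and, through the independence of $X_{uv}$ from $v$, the isolation of $v$ and hence the strict drop in edge count.
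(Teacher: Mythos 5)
Your proposal follows essentially the same route as the paper's proof: the vertices of $X_{uv}\cup\{v\}$ are private for themselves with respect to $D'$ (they have no neighbors in $D\setminus\{u\}$ because they are private neighbors of $u$ in $D$, and none among themselves by independence and the choice $X_{uv}\subseteq P_D(u)\setminus N[v]$), so greedy removal never deletes them, which simultaneously gives $X_{uv}\cup\{v\}\subseteq D^*$, the isolation of $v$, and the inclusion $E(G[D^*])\subseteq E(G[D'])\subseteq E(G[D])\setminus\{uw\}$. The only point to tighten in your write-up is the edge count: the added vertices create \emph{no} new edges at all (not merely ``few''), which is exactly what makes the inequality strict even when $uw$ is the only edge of $G[D]$ incident to $u$.
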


\begin{proof}
Since vertices of $X_{uv}\cup\{v\}$ are privates for themselves, they are not removed by the greedy removal procedure.  Notice that  $E(G[D^*])\subseteq E(G[(D\rq{}])\subseteq E(G[D])\setminus\{uw\}$ as $G[X_{uv} \cup \{v\}]$ is an independent 
set and vertices $X_{uv} \cup \{v\}$ have no neighbors in $D \sm \{u\}$.
Finally $v$ is an isolated vertex as there are no edges incident to $v$ in $G[D\rq{}]$.
\end{proof}

Our main tool, the {\em flipping} operation is exactly the {\em reverse} of how we generated $D^*$ from $D$; i.e., it replaces an isolated vertex $v$ of $G[D^*]$ with a neighbor $u$ in $G$ to obtain $D$.  In particular, we are interested in all minimal dominating sets $D$ that can be generated from $D^*$ in this way. 
%PG:
%Recall that our aim is to start with minimal dominating sets that induce edgeless graphs and generate minimal dominating sets whose induced graphs contain more edges.

Given $D$ and $D^*$ as defined above, we say that $D^*$ is a \emph{parent of $D$ with respect to flipping $u$ and $v$}. We say that $D^*$ is a \emph{parent} of $D$ if there are vertices $u,v\in V(G)$ such that  
$D^*$ is a parent with respect to flipping $u$ and $v$. It is important to note that each minimal dominating set $D$ such that $E(G[D])\neq\emptyset$ has a unique parent with respect to flipping of any vertices 
$u\in D\cap N[D\setminus\{u\}]$ and $v\in P_D(u)$,
as both sets $X_{uv}$ and $Z_{uv}$ are lexicographically first sets selected by a greedy algorithm.
Similarly, we say that $D$ is a \emph{child} of $D^*$ (with respect to flipping $u$ and $v$) if $D^*$ is the parent of $D$ (with respect to flipping $u$ and $v$).

Assume that there is an enumeration algorithm $\mathcal{A}$ that, given a minimal dominating set $D^*$ of a graph $G$ such that $G[D^*]$ has isolated vertices, an isolated vertex $v$ of $G[D^*]$, and a neighbor $u$ of $v$ in $G$, generates with polynomial delay a set of minimal dominating sets $\mathcal{D}$ with the property that $\mathcal{D}$ contains all minimal dominating sets $D$ 
that are children of $D^*$ with respect to flipping $u$ and $v$.
In this case we can enumerate all minimal dominating sets of the graph $G$ with $n$ vertices and $m$ edges as follows. 

\medskip
%PG: New text starts from here.
We define a directed graph $\mathcal{G}$ whose nodes are minimal dominated sets of $G$ and we add a special \emph{root} node $r$
(we use the term \emph{nodes} here to distinguish elements of $V(\mathcal{G})$ from the vertices of $G$).
Recall that by Observation~\ref{obs:mis_mds}, maximal independent sets are minimal dominating sets, i.e., they are vertices of $\mathcal{G}$ . We join the root $r$ with all maximal independent sets by arcs. For each minimal dominating set $D^*\in V(\mathcal{G})$, we join it by an arc with every minimal dominating set $D$ if $\mathcal{A}$ generates $D$ from $D^*$ for some choice of $u$ and $v$.  

We run the depth-first search in $\mathcal{G}$ from $r$. Observe that we should not construct $\mathcal{G}$ to do it, as for each node $W\neq r$ of $\mathcal{G}$ we can use $\mathcal{A}$ to generate all out-neighbors of $W$, and we can use the polynomial delay algorithm by  Johnson et al.~\cite{JohnsonP88} (see Theorem~\ref{thm:mis}) to obtain the out-neighbors of $r$.
Hence, we maintain a list $\mathcal{L}$ of minimal dominating sets of $G$ sorted in lexicographic order that are already visited nodes of $\mathcal{G}$.
Also we keep  a stack $\mathcal{S}$ of records $R_W$ for $W\in V(\mathcal{G})$ that are on the path from $r$ to the current node of $\mathcal{G}$.
These records are used to generate out-neighbors.  
The record $R_r$ contains the last generated maximal independent set and the information that is necessary to proceed with the enumeration of maximal independent sets.
Respectively, the records $R_W$ for $W\neq r$ contain the current choice of $u$ and $v$, the last set $D$ generated by $\mathcal{A}$ for the instance $(W,u,v)$, and 
 the information that is necessary for $\mathcal{A}$ to proceed with the enumeration.
 
\begin{lemma}\label{lem:main}
Suppose that $\mathcal{A}$ generates the elements of $\mathcal{D}$ for a triple $(D^*,u,v)$
with polynomial delay $O(p(n,m))$.
Let also  $\mathcal{L}^*$ be the set of all minimal dominating sets.
Then the described algorithm enumerates all minimal dominating sets in the following running time:
\begin{itemize}
\item incremental-polynomial delay is 
$O((p(n,m)+n^2)m|\mathcal{L}|^2)$ 
and total running time is  
$O((p(n,m)+n^2)m|\mathcal{L}^*|^2)$;
\item if for any $D\in\mathcal{D}$, $|E(G[D])|>|E(G[D^*])|$, then the  
enumeration is done
with incremental-polynomial delay 
$O((p(n,m)+n^2)m^2|\mathcal{L}|)$ and in total running time   
$O((p(n,m)+n^2)m|\mathcal{L}^*|^2)$;
\item if $\mathcal{D}$ contains only children of $D^*$ with respect to flipping of $u$ and $v$, then the enumeration is done with incremental-polynomial delay 
$O((p(n,m)+n^2)m|\mathcal{L}|)$
and in total running time
$O((p(n,m)+n^2)m|\mathcal{L}^*|)$.
\end{itemize}
\end{lemma}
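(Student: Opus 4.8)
The plan is to verify three things about the depth-first search in $\mathcal{G}$: that it visits \emph{every} minimal dominating set exactly once, that the total work matches the claimed bound, and that the time between two consecutive outputs matches the claimed delay; the last item is the crux and is where the three cases diverge. \textbf{Completeness and correctness.} I would first prove reachability from $r$ by induction on $|E(G[D])|$. If $E(G[D])=\emptyset$ then $D$ is a maximal independent set by Observation~\ref{obs:mis_mds}, hence an out-neighbour of $r$. If $E(G[D])\neq\emptyset$, pick an edge $uw$ of $G[D]$ and a vertex $v\in P_D(u)$; Lemma~\ref{lem:parent} produces a parent $D^*$ with $|E(G[D^*])|<|E(G[D])|$ in which $v$ is isolated, and by construction $D$ is a child of $D^*$ with respect to flipping $u$ and $v$, so $D$ lies in the set $\mathcal{D}$ generated by $\mathcal{A}$ on $(D^*,u,v)$ and the arc $D^*\to D$ is present in $\mathcal{G}$. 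By the induction hypothesis $D^*$ is reachable, hence so is $D$. Since the search records visited nodes in $\mathcal{L}$ and never expands a node twice, every minimal dominating set is listed exactly once; this argument needs no acyclicity of $\mathcal{G}$, so the search terminates and is correct even when $\mathcal{A}$ outputs sets that are not children (such sets are themselves valid minimal dominating sets and are simply discovered, or skipped if already in $\mathcal{L}$).

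\textbf{Bounding the basic quantities.} The number of expanded nodes is $|\mathcal{L}^*|+1$. At a node $W$ the relevant flips $(u,v)$ are the pairs with $v$ isolated in $G[W]$ and $u\in N(v)$; since $\sum_{v}\deg_G(v)=2m$ there are $O(m)$ such pairs, and running $\mathcal{A}$ on each costs $O(p(n,m))$ per produced set, plus an $O(p(n,m))$ charge even when $\mathcal{A}$ produces nothing for that pair. For every produced set I would spend an additional $O(n^2)$ on bookkeeping: a membership test in the lexicographically sorted list $\mathcal{L}$ and, upon recursing, the computation of the isolated vertices of its induced subgraph. The only case-dependent quantity is the number of produced sets. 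In general each $\mathcal{D}$ consists of distinct minimal dominating sets, so it has size at most $|\mathcal{L}^*|$ and a single node produces $O(m\,|\mathcal{L}^*|)$ sets; this already gives the total $O((p(n,m)+n^2)\,m\,|\mathcal{L}^*|^2)$ common to the first two items. When $\mathcal{D}$ contains only children (third item), each produced set is accounted for by one valid flip of its target, and since the sets $P_D(u)$ for distinct $u\in D$ are pairwise disjoint a fixed target admits at most $n$ valid flips; hence the number of produced sets over the whole search is $O(n\,|\mathcal{L}^*|)=O(m\,|\mathcal{L}^*|)$, improving the total to $O((p(n,m)+n^2)\,m\,|\mathcal{L}^*|)$.

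\textbf{The delay, and the main obstacle.} The difficulty is that between two consecutive outputs the search may climb the stack, resuming the generator at each node it passes and discarding already-visited sets, so I must bound the product (stack depth) $\times$ (sets produced at one node in one pass) $\times$ (cost per produced set). The cost per produced set is $O(p(n,m)+n^2)$ as above, and the number produced at one node in a single pass is at most its total output: $O(m\,|\mathcal{L}^*|)$ in general and $O(|\mathcal{L}^*|)$ in the only-children case. The remaining subtlety is the stack depth. In the general case I can bound it only by the number of already-discovered nodes, i.e.\ by $|\mathcal{L}|$, which yields the delay $O((p(n,m)+n^2)\,m\,|\mathcal{L}|^2)$. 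When $\mathcal{A}$ guarantees $|E(G[D])|>|E(G[D^*])|$ for every produced $D$ (second and third items), edge counts strictly increase along every tree arc below $r$, so the stack depth is $O(m)$; combined with the per-node output bounds this gives $O((p(n,m)+n^2)\,m^2\,|\mathcal{L}|)$ in the second item and $O((p(n,m)+n^2)\,m\,|\mathcal{L}|)$ in the third. The maximal independent sets attached to $r$ are produced with polynomial delay by Theorem~\ref{thm:mis}, which is dominated by these bounds. The obstacle I expect to spend the most care on is making this delay argument rigorous: I must use the stored records $R_W$ to ensure that no generator call $(W,u,v)$ is ever repeated, so that the crude product above is a legitimate per-interval bound even though the corresponding amortized cost, reflected in the sharper total, is much smaller.
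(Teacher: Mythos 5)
Your overall architecture matches the paper's: DFS correctness by induction on $|E(G[D])|$, $O(m)$ flip pairs per node, cost $O(p(n,m)+n^2)$ per produced set (generation plus membership test in the sorted list $\mathcal{L}$), stack depth $|\mathcal{L}|$ in general and $O(m)$ under edge-monotonicity, and your total running time bounds are sound. The genuine gap is in the delay analysis. You bound the number of sets produced at one node in a single pass by ``its total output,'' namely $O(m|\mathcal{L}^*|)$ in general and $O(|\mathcal{L}^*|)$ in the only-children case, and then multiply by stack depth and per-set cost. But those products give $O((p(n,m)+n^2)m|\mathcal{L}||\mathcal{L}^*|)$, $O((p(n,m)+n^2)m^2|\mathcal{L}^*|)$ and $O((p(n,m)+n^2)m|\mathcal{L}^*|)$, i.e.\ bounds in the \emph{unknown total} $|\mathcal{L}^*|$, not the claimed bounds in $|\mathcal{L}|$; silently replacing $|\mathcal{L}^*|$ by $|\mathcal{L}|$ is invalid, since early in the run $|\mathcal{L}|$ can be far smaller than $|\mathcal{L}^*|$, and a delay bound involving $|\mathcal{L}^*|$ is not incremental-polynomial in the sense of the paper's definition. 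The missing step, which is the paper's key observation, is that every set produced before the next new output is a \emph{duplicate}, hence already in $\mathcal{L}$, and that for a fixed node and fixed pair $(u,v)$ the generator $\mathcal{A}$ never repeats a set; therefore one pass at one node wastes at most $m(|\mathcal{L}|+1)$ productions. With this, your product scheme correctly yields the first item ($|\mathcal{L}|\cdot m|\mathcal{L}|$ productions) and the second item ($m\cdot m|\mathcal{L}|$ productions).

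For the third item, however, the scheme (stack depth) $\times$ (per-pass production) cannot give $O((p(n,m)+n^2)m|\mathcal{L}|)$ at all: even with the corrected per-pass bound $m|\mathcal{L}|$ and stack depth $O(m)$ you only recover the second item's $m^2|\mathcal{L}|$. The paper instead counts duplicates \emph{globally across all stack frames}: when $\mathcal{D}$ consists only of children, every duplicate produced anywhere in the interval between two consecutive outputs is a set $D\in\mathcal{L}$ produced at its unique parent with respect to some flip pair, so each $D\in\mathcal{L}$ is produced at most $m$ times during the entire run; hence the total number of productions in the interval is at most $m|\mathcal{L}|$ irrespective of the stack depth. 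You invoke exactly this uniqueness-of-parent argument, but only to bound the overall running time; it must be applied to the delay. (A further small flaw: your per-node per-pass bound $O(|\mathcal{L}^*|)$ in the only-children case is unjustified as stated, since the same child can be a child of the same node with respect to several flip pairs and thus be produced up to $m$ times at that node.)
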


\begin{proof}
Recall that any minimal dominating set $D$ with at least one edge has a parent $D^*$  and $|E(G[D^*])|<|E(G[D])|$. 
Because $\mathcal{A}$ generates $D$ for $D^*$, $(D^*,D)$ is an arc in $\mathcal{G}$. It follows that for any 
minimal dominating set $D\in V(\mathcal{G})$ with at least one edge, there is a maximal independent set $I\in V(\mathcal{G})$ such that $I$ and $D$ are connected by a directed path in $\mathcal{G}$. As $(r,I)$ is an arc in $\mathcal{G}$, $D$ is reachable from $r$.  We conclude that the depth-first search algorithm enumerates all vertices of $\mathcal{G}$. 

It remains to evaluate the running time.

To get a new minimal dominating set, we consider the records in $\mathcal{S}$. 
For each record $R_W$ for $W\neq r$, we have at most $m$ possibilities for $u$ and $v$ to get a new set $D$. As soon as a new set is generated it is added to $\mathcal{L}$ unless it is already in $\mathcal{L}$. Hence, we generate at most $m|\mathcal{L}|$ sets for $W$ in time $(p(n,m)+n^2)m|\mathcal{L}|$, as 
each set is generated with polynomial delay $O(p(n,m))$, and after its generation we immediately test whether or not it is already in $\mathcal{L}$, which takes $O(n\log|\mathcal{L}|) =O(n^2)$ time, because $|\mathcal{L}|\leq 2^n$.
For $R_r$, we generate at most $|\mathcal{L}|$ sets.
Because any isolated vertex of $G$ belongs to every maximal independent set, 
each set is generated with polynomial delay $O(n\rq{}(m+n\rq{}\log |\mathcal{L}|))$, i.e., in time $O(n\rq{}(m+n\rq{}^2))$ by Theorem~\ref{thm:mis}, where $n\rq{}$ is the number of non-isolated vertices. As $n\rq{}\leq 2m$, these sets are generated in time $O(n^2m|\mathcal{L}|)$.
Since $|\mathcal{S}|\leq |\mathcal{L}|$,  in time $O((p(n,m)+n^2)m|\mathcal{L}|^2)$ we either obtain a new minimal dominating set or conclude that the list of minimal dominating sets is exhausted.

To get the bound for the total running time, recall that the depth-first search runs in time that is linear in $|E(\mathcal{G})|$. As for each arc we perform   
$O((p(n,m)+n^2)m)$ operations, the total running time is $O((p(n,m)+n^2)m|\mathcal{L}^*|^2)$.

If for any $D\in\mathcal{D}$, $|E(G[D])|>|E(G[D^*])|$, then the  
incremental-polynomial delay is less. To prove it, we observe that the number of edges in any minimal dominating set is at most $m$. Hence, any directed path starting from $r$ in $\mathcal{G}$ has length at most $m$ and, therefore, $|\mathcal{S}|\leq m+1$. By the same arguments as above, we get that 
 in time $O((p(n,m)+n^2)m^2|\mathcal{L}|)$ we either obtain a new minimal dominating set or conclude that the list of minimal dominating sets is complete.

Assume finally that $\mathcal{D}$ contains only children of $D^*$ with respect to flipping of $u$ and $v$.
Since each minimal dominating set $D$ with $E(G[D])\neq\emptyset$ has a unique parent 
with respect to flipping of any vertices $u\in D\cap N[D\setminus\{u\}]$ and $v\in P_D(u)$, each $D$ has at most $m$ parents. 
Hence, we generate at most $m|\mathcal{L}|$ sets until we obtain a new minimal dominating set or conclude that the list is exhausted. 
As to generate a set and check whether it is already listed we spend time $O(p(n,m)+n^2)$, the delay between two consecutive minimal dominating sets that are output is $O((p(n,m)+n^2)m|\mathcal{L}|)$ and the total running time is $O((p(n,m)+n^2)m|\mathcal{L}^*|)$.
\end{proof}

To be able to apply our approach, we have to show how to construct an algorithm, like algorithm $\mathcal{A}$ above, that produces $\mathcal{D}$ with polynomial delay. We will use the following lemma for this purpose.

\begin{lemma}\label{lem:child-obs}
 Let $D$ be a child of $D^*$ with respect to flipping $u$ and $v$;  $D^*=((D\setminus \{u\})\cup X_{uv}\cup \{v\})\setminus Z_{uv}$. Then for every vertex $z\in Z_{uv}$, the following three statements are true:
\begin{enumerate}

\vspace{-2mm}
\item $z\notin N[X_{uv}\cup\{v\}]$, 

\vspace{-2mm}
\item $z$ is dominated by a vertex of $D^*\setminus (X_{uv}\cup\{v\})$,

\vspace{-2mm}
\item there is a vertex  $x\in N[X_{uv}\cup\{v\}]\setminus N[u]$ adjacent to $z$
such that $x\notin N[D^*\setminus(X_{uv}\cup\{v\})]$.
\end{enumerate}
\vspace{-1mm}
Furthermore, for any  $x\in N[X_{uv}\cup\{v\}]\setminus N[u]$ 
such that $x\notin N[D^*\setminus(X_{uv}\cup\{v\})]$, there is a vertex $z\in Z_{uv}$ such that $x$ and $z$ are adjacent.
\end{lemma}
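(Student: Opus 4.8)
The plan is to run the whole argument on private vertices \emph{with respect to the original set $D$}, exploiting the set decomposition forced by the construction, so that the order in which greedy removal deletes vertices never needs to be examined. First I would record the basic structural facts. Since $X_{uv}\cup\{v\}\subseteq P_D(u)$, these vertices are dominated by $u$ but not by $D\setminus\{u\}$; in particular none of them lies in $D\setminus\{u\}$ nor has a neighbor there. By the proof of Lemma~\ref{lem:parent} they survive the greedy removal, so $Z_{uv}\subseteq D\setminus\{u\}$ and $D^*=((D\setminus\{u\})\setminus Z_{uv})\cup X_{uv}\cup\{v\}$; consequently $D\setminus\{u\}$ is the disjoint union of $D^*\setminus(X_{uv}\cup\{v\})$ and $Z_{uv}$. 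Statement~1 is then immediate: if $z\in Z_{uv}\subseteq D\setminus\{u\}$ were equal to, or adjacent to, a vertex $p\in X_{uv}\cup\{v\}$, then $p\in N[D\setminus\{u\}]$, contradicting that $p$ is private for $u$. Statement~2 follows at once, since $D^*$ dominates $z$ and, by Statement~1, the dominator cannot lie in $X_{uv}\cup\{v\}$, hence lies in $D^*\setminus(X_{uv}\cup\{v\})$.

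The crux is Statement~3, and here the key lemma I would prove is that \emph{every} private vertex $x$ of $z$ with respect to $D$ satisfies $x\in N[X_{uv}\cup\{v\}]$. Indeed $x\notin N[D\setminus\{z\}]$, and since $u,z\notin D^*\setminus(X_{uv}\cup\{v\})$ we have $D^*\setminus(X_{uv}\cup\{v\})\subseteq D\setminus\{z\}$; as $D^*$ dominates $x$ while $x\notin N[D^*\setminus(X_{uv}\cup\{v\})]$, the domination of $x$ must come from $X_{uv}\cup\{v\}$. I would then pick any $x\in P_D[z]$, which is nonempty because $D$ is minimal, and verify the three required properties. The relation $x\notin N[D\setminus\{z\}]$ simultaneously gives $x\notin N[u]$ (as $u\in D\setminus\{z\}$) and $x\notin N[D^*\setminus(X_{uv}\cup\{v\})]$. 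The one remaining point, adjacency of $x$ and $z$, is exactly where Statement~1 re-enters: since $x\in N[X_{uv}\cup\{v\}]$ but $z\notin N[X_{uv}\cup\{v\}]$, we have $x\neq z$, so $x\in N(z)$. (This incidentally forces $z$ to be non-isolated in $G[D]$, so its only private vertices are genuine neighbors.)

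For the \emph{Furthermore} claim I would argue dually. Given $x\in N[X_{uv}\cup\{v\}]\setminus N[u]$ with $x\notin N[D^*\setminus(X_{uv}\cup\{v\})]$, note that $x$ is dominated by $D$, and since $x\notin N[u]$ it is dominated by $D\setminus\{u\}=(D^*\setminus(X_{uv}\cup\{v\}))\sqcup Z_{uv}$; the hypothesis then places a dominator $z$ of $x$ in $Z_{uv}$. Finally Statement~1 again rules out $x=z$ (because $x\in N[X_{uv}\cup\{v\}]$ while $z\notin N[X_{uv}\cup\{v\}]$), so $x$ and $z$ are adjacent, as claimed.

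The main obstacle, as I see it, is not any single deduction but pinning down the bookkeeping among $D$, $D'$ and $D^*$ precisely — above all establishing the disjoint decomposition $D\setminus\{u\}=(D^*\setminus(X_{uv}\cup\{v\}))\sqcup Z_{uv}$ and recognizing that, once one argues via private vertices relative to $D$ together with the containment $D^*\setminus(X_{uv}\cup\{v\})\subseteq D\setminus\{z\}$, Statement~1 alone supplies every non-adjacency and non-domination fact needed, and the deletion order of the greedy removal can be ignored entirely.
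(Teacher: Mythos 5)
Your proposal is correct and follows essentially the same route as the paper's proof: part 1 from the privacy of $X_{uv}\cup\{v\}$ for $u$ with respect to $D$, part 2 from $D^*$ being a dominating set combined with part 1, part 3 from the minimality of $D$ supplying a private vertex of $z$ that only $X_{uv}\cup\{v\}$ can dominate, and the furthermore claim from $D$ dominating $x$ while neither $u$ nor $D^*\setminus(X_{uv}\cup\{v\})$ does. The only (harmless, arguably cleaner) deviation is inside part 3: where the paper invokes $P_{D'}(z)=\emptyset$ because $z$ was deleted by the greedy removal, you instead use that $D^*$ dominates $x$ together with the containment $D^*\setminus(X_{uv}\cup\{v\})\subseteq D\setminus\{z\}$, and you recover the adjacency $x\in N(z)$ from part 1 rather than from part 2.
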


\begin{proof}
{\it 1.} To show that $z \notin N[X_{uv}\cup \{v\}]$, it is sufficient to observe that $v$ and the vertices of $X_{uv}$ are private neighbors of $u \in D$ and, therefore, no vertex of $X_{uv} \cup \{v\}$ is adjacent to a vertex of $D\setminus\{u\}\supseteq Z_{uv}$.

{\it 2.} Vertices of $Z_{uv}$ are removed from $D'$ by greedy removal in order to obtain a minimal dominating set $D^*$. 
Thus, there is at least one vertex in $D^* = D' \sm Z_{uv}$ that dominates $z$. By {\it 1.} this is not a vertex of $X_{uv}\cup\{v\}$.

%PG:
{\it 3.} By {\it 2.} %every vertex 
$z$ has a neighbor in $D^* \cap D$, i.e., $z$ is not a private for itself. As $D$ is a minimal dominating set there exists a vertex $x \in P_D(z)$.
Vertex $z$ is removed from $D'$ by greedy removal, implying that $P_{D'}(z) = \emptyset$, thus we can conclude that 
$x \in N(X_{uv} \cup \{v\})$ as these are the only vertices added to $D$ in order to obtain $D'$. 

Finally, let  $x\in N[X_{uv}\cup\{v\}]\setminus N[u]$ be a vertex
such that $x\notin N[D^*\setminus(X_{uv}\cup\{v\})]$. Observe that $x$ is not dominated by the set $D^*\setminus (X_{uv}\cup\{v\})\cup\{u\}$, but then it should be dominated by a vertex of $Z_{uv}$ in the dominating set $D=(D^*\setminus (X_{uv}\cup\{v\}))\cup\{u\}\cup Z_{uv}$.
\end{proof}

We use this lemma to construct an algorithm for generating $\mathcal{D}$. The idea is to generate $\mathcal{D}$ by considering all possible candidates for $X_{uv}$ and $Z_{uv}$. 
It is interesting to know whether this can be done efficiently in general. On line graphs and graphs of girth at least 7, we are able to prove additional properties of the parent minimal dominating sets which result in efficient algorithms for generating  $\mathcal{D}$, as will be explained in the sections below.

\section{Enumeration of minimal edge dominating sets}\label{sec:edge}
In this section we show that all minimal edge dominating sets of an {\em arbitrary} graph can be enumerated with incremental-polynomial delay. We achieve this by enumerating the minimal dominating sets in line graphs.

 \subsection{Enumeration of minimal dominating sets of line graphs}\label{sec:line}
%PG:
For line graphs, we construct an enumeration algorithm that, given a minimal dominating set $D^*$ of a graph $G$ such that $G[D^*]$ has isolated vertices, an isolated vertex $v$ of $G[D^*]$, and a neighbor $u$ of $v$ in $G$, generates with polynomial delay a set of minimal dominating sets $\mathcal{D}$ that contains all children of $D^*$ with respect to flipping $u$ and $v$ and
has the property: for any $D\in\mathcal{D}$, $|E(G[D])|>|E(G[D^*])|$.

On line graphs, we can prove some additional properties of a parent. Let $D$ be a minimal dominating set of a graph $G$ such that $G[D]$ has at least one edge 
$uw$,
and assume that $v\in P_D(u)$.
Recall that $D^*$ is defined by choosing a maximal independent set  $X_{uv}\subseteq P_D(u)\setminus N[v]$ in $G[P_D(u)\setminus N[v]]$, then considering the set 
$D'=(D\setminus\{u\})\cup X_{uv}\cup \{v\}$, and letting $D^*=D'\setminus Z_{uv}$ where $Z_{uv}\subseteq D\cap D'$.

\begin{lemma}\label{lem:parent-line}
If $G$ is a line graph, then: 
\begin{itemize}

\vspace{-2mm}
\item $X_{uv}=\emptyset$,

\vspace{-2mm}
\item each vertex of $Z_{uv}$ is adjacent to 
%PG:
%at most 
exactly
one vertex of $P_{D^*}(v)\setminus N[u]$,

\vspace{-2mm}
\item each vertex of $P_{D^*}(v)\setminus N[u]$ is adjacent to exactly one vertex of $Z_{uv}$.
\end{itemize}
\end{lemma}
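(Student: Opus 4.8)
The plan is to prove the three bullet points of Lemma~\ref{lem:parent-line} in order, exploiting the two structural facts about line graphs stated in Section~\ref{sec:defs}: that a line graph is claw-free (has no induced $K_{1,3}$), and that the neighborhood of every vertex can be partitioned into at most two cliques. The key objects are the private neighbors $P_D(u)$ of $u$ with respect to the child $D$, and the relationship between the greedily-chosen set $X_{uv}$ and the removed set $Z_{uv}$.

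First I would establish that $X_{uv}=\emptyset$. Recall $X_{uv}$ is a maximal independent set inside $G[P_D(u)\setminus N[v]]$. Both $v$ and every vertex of $X_{uv}$ lie in $P_D(u)$, so they are all neighbors of $u$; hence $\{v\}\cup X_{uv}\subseteq N(u)$. Since $X_{uv}$ is independent and each of its vertices is non-adjacent to $v$ (because $X_{uv}\subseteq P_D(u)\setminus N[v]$), if $X_{uv}$ contained two vertices $x_1,x_2$, then $u$ together with $v,x_1,x_2$ would induce a claw $K_{1,3}$ centered at $u$ — contradicting claw-freeness. The same argument shows $|X_{uv}|\le 1$; and in fact $|X_{uv}|\ge 1$ would already give a claw with $u,v,x_1$ plus the vertex $w$ (the neighbor of $u$ in $D$ witnessing that $uw\in E(G[D])$), provided $w$ is non-adjacent to $v$ and to $x_1$. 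I would verify that $w$ is indeed non-adjacent to the private neighbors $v,x_1$ of $u$: since $v,x_1\in P_D(u)$, they are private for $u$ and hence dominated only by $u$ within $D$, so they cannot be adjacent to $w\in D\setminus\{u\}$. Thus $\{u; v, x_1, w\}$ is a claw, forcing $X_{uv}=\emptyset$.

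With $X_{uv}=\emptyset$ in hand, the set added to $D\setminus\{u\}$ to form $D'$ is just $\{v\}$, and $D^*=(D\setminus\{u,Z_{uv}\})\cup\{v\}$ with $v$ isolated in $G[D^*]$ by Lemma~\ref{lem:parent}. For the second and third bullets I would use Lemma~\ref{lem:child-obs}, which (with $X_{uv}=\emptyset$) says that each $z\in Z_{uv}$ has an associated vertex $x\in N[v]\setminus N[u]$ with $x\notin N[D^*\setminus\{v\}]$, i.e.\ $x$ is a private neighbor of $v$ in $D^*$ lying outside $N[u]$; and conversely every such $x\in P_{D^*}(v)\setminus N[u]$ is adjacent to some $z\in Z_{uv}$. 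This sets up a correspondence between $Z_{uv}$ and $P_{D^*}(v)\setminus N[u]$; the claim is that this correspondence is a bijection realized by edges, which is exactly what the two "exactly one" statements assert. The main obstacle will be proving these uniqueness (\emph{exactly one}) assertions rather than the mere existence already supplied by Lemma~\ref{lem:child-obs}.

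To get uniqueness I expect to lean on the two-clique decomposition of $N(v)$. Every vertex $x\in P_{D^*}(v)\setminus N[u]$ is a neighbor of $v$, and every relevant $z\in Z_{uv}$ is adjacent to some such $x\in N(v)$; I would argue that all the vertices $x$ involved, together with the $z$'s adjacent to them, must sit in a controlled way inside the at most two cliques partitioning $N(v)$. Concretely, to show a vertex $z\in Z_{uv}$ is adjacent to \emph{exactly one} $x\in P_{D^*}(v)\setminus N[u]$, suppose it were adjacent to two such vertices $x_1,x_2$; since $x_1,x_2\in P_{D^*}(v)$ they are private for $v$ hence non-adjacent to each other's dominators, and I would derive an induced claw (centered at $z$ or at $v$) or violate the two-clique structure, using that $x_1,x_2\notin N[u]$ and that $z\notin N[v]$ by Lemma~\ref{lem:child-obs}(1). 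Symmetrically, to show each $x\in P_{D^*}(v)\setminus N[u]$ is adjacent to \emph{exactly one} $z\in Z_{uv}$, I would assume two such $z_1,z_2$ and again produce a claw or a forbidden configuration among $x,z_1,z_2$ and a suitable fourth vertex. The delicate part is keeping careful track of adjacencies forced or forbidden by the private-neighbor definitions and by $v$ being isolated in $G[D^*]$, so that the claw-freeness and two-clique conditions can be invoked cleanly; this bookkeeping, rather than any single deep idea, is where the real work lies.
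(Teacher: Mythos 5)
Your argument for the first bullet is correct and is essentially the paper's (the paper phrases it via the two-clique partition of $N(u)$, you via a claw $\{u;v,x_1,w\}$; these are interchangeable here). The genuine gaps are in the second and third bullets, where your sketch relies on tools that are provably insufficient. For the second bullet: suppose $z\in Z_{uv}$ is adjacent to two vertices $x,y\in P_{D^*}(v)\setminus N[u]$. Claw-freeness at $z$ (using the dominator $z'\in D^*\setminus\{v\}$ of $z$, which is non-adjacent to the private vertices $x,y$) does force $xy\in E(G)$, but then the configuration you must exclude is the diamond on $\{v,x,y,z\}$ with a pendant $u$ at $v$ and a pendant $z'$ at $z$ (all non-adjacencies follow from privacy, from $x,y\notin N[u]$, and from $v$ being isolated in $G[D^*]$). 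This graph contains no claw, and the neighborhood of every one of its vertices splits into at most two cliques — so the two local facts you plan to exploit can never produce a contradiction. The graph is nonetheless not a line graph: it is one of Beineke's nine forbidden induced subgraphs, and this is exactly what the paper invokes (its Figure~2, with the three cases $uz,uz'\notin E$; $uz'\in E$, $uz\notin E$; $uz\in E$). You need Beineke's characterization, or equivalently the global Krausz edge-partition into cliques, not just claw-freeness and two-clique neighborhoods.

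For the third bullet no structural argument of any kind can succeed. If $x\in P_{D^*}(v)\setminus N[u]$ were adjacent to two vertices $z_1,z_2\in Z_{uv}$, claw-freeness at $x$ (with $v$ non-adjacent to $z_1,z_2$ by Lemma~\ref{lem:child-obs}) forces $z_1z_2\in E(G)$, and the resulting configuration — a triangle $x z_1 z_2$ with pendant vertices attached — is itself a line graph, hence perfectly realizable as an induced subgraph of a line graph. So there is no claw and no forbidden configuration to be found; the contradiction must come from the domination structure, which your proposal never touches. The paper's argument: since $z_1\in Z_{uv}\subseteq D$ and the child $D$ is a minimal dominating set, $z_1$ has a private neighbor $p$ with respect to $D$; because $D^*\setminus\{v\}\subseteq D$ and $D^*$ is dominating, such a $p$ must lie in $P_{D^*}(v)\setminus N[u]$; by the second bullet the only candidate is $x$; but $x$ is adjacent to both $z_1,z_2\in D$, so $x$ is private for neither — contradiction. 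This interplay between the second bullet, the minimality of $D$, and the location of private neighbors of vertices in $Z_{uv}$ is the missing idea in your plan.
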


\begin{proof}
Because $G$ is a line graph, the neighborhood of $u$ can be partitioned into two cliques $K_1$ and $K_2$. 
Vertex $v$ is in $P_D(u)$, and for each $x \in P_D(u)$, $xw\notin E(G)$, since $w\in D$.
Assume that $w\in K_1$. Then $P_D(u) \subseteq K_2 \subseteq N[v]$. 
Hence, $X_{uv}\subseteq P_D(u)\setminus N[v]=\emptyset$.

%PG:
By the definition, each vertex of $Z_{uv}$ is adjacent to 
at least one vertex of $P_{D^*}(v)\setminus N[u]$.
Assume that a vertex $z\in Z_{uv}$ is adjacent to  
at least two vertices $x,y\in P_{D^*}(v)\setminus N[u]$. By the construction of $Z_{uv}$, $z$ is adjacent to a vertex $z'\in D^*\setminus\{v\}$.
Notice that $x$ and $y$ are not adjacent to $z'$, since $x,y \in P_{D^*}(v)$, and recall that $x$ and $y$ are not adjacent to $u$.  Since $G$ is a line graph, $x$ and $y$ are adjacent.
Because $v$ is a private vertex for $u$ with respect to $D$, $v$ and $z'$ are not adjacent,
since  $z' \in D \cap D^*$  by Lemma~\ref{lem:child-obs}. 
Also by Lemma~\ref{lem:child-obs}, we know that $v$ and $z$ are not adjacent.
If $uz,uz\rq{}\notin E(G)$, then we get the left graph in Figure~\ref{fig:forb},
if $uz'\in E(G)$ and $uz\notin E(G)$, then we get the center graph in Figure~\ref{fig:forb}, and finally, 
if the edge $uz$ exists,  we get the right graph in Figure~\ref{fig:forb}.
Beineke has shown that none of these graphs can be an induced subgraph of a line graph~\cite{Beineke70}, and hence we obtain a contradiction.

\begin{figure}[ht]
\centering\scalebox{0.7}{\input{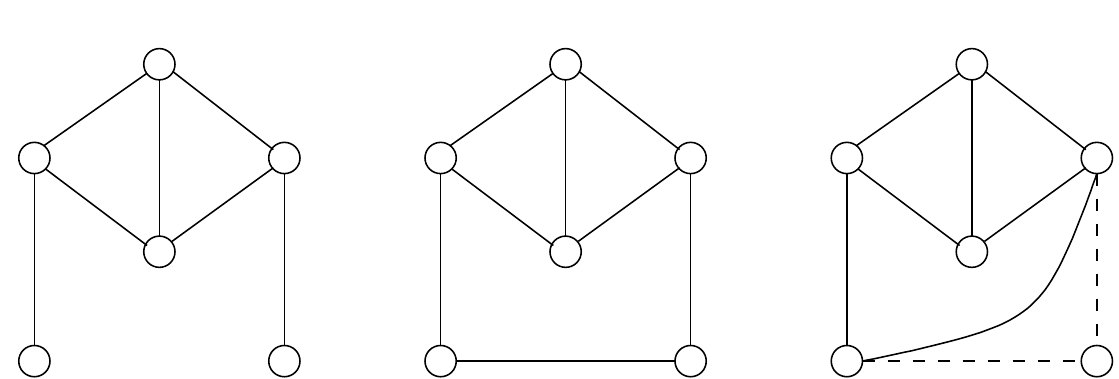_t}}
\caption{Subgraphs $G[\{u,v,x,y,z,z'\}]$ and $G[\{u,v,x,y,z\}]$ that are forbidden induced subgraphs of line graphs~\cite{Beineke70}. 
\label{fig:forb}}
\end{figure}

Finally, observe that each vertex of  $P_{D^*}(v)\setminus N[u]$ is adjacent to at least one vertex of $Z_{uv}$
by Lemma~\ref{lem:child-obs}, because the set of all vertices $x\in N[X_{uv}\cup\{v\}]\setminus N[u]$ 
such that $x\notin N[D^*\setminus(X_{uv}\cup\{v\})]$ is exactly the set  $P_{D^*}(v)\setminus N[u]$. To obtain a contradiction, assume that there is a vertex $x\in P_{D^*}(v)\setminus N[u]$ adjacent to two distinct vertices $y,z\in Z_{uv}$. 
Both vertices $y$ and $z$ belong to $Z_{uv}$ that by definition is a subset of $D$, hence  
$x\notin P_D[y]$ and $x\notin P_D[z]$. 
By the second claim, $x$ is the unique vertex of $P_{D^*}(v)\setminus N[u]$ adjacent to $y$ and $z$.
But since $y\in Z_{uv}$, we get that $P_D(y)\cap(P_{D^*}(v)\setminus N[u])\neq\emptyset$ and $y$ has a neighbor in $P_{D^*}(v)\setminus N[u]$ different from $x$, which gives the desired contradiction.
\end{proof}

%PG: New text.
Consider a line graph $G$ with $n$ vertices $v_1,\ldots,v_n$ and $m$ edges.  
Let $D^*$ be a minimal dominating set and let $v$ be an isolated vertex of $G[D^*]$. Suppose that $u$ is a neighbor of $v$. Let $\{x_1,\ldots,x_k\}=P_{D^*}(v)\setminus N[u]$. 
We construct minimal dominating sets from $(D^*\setminus\{v\})\cup\{u\}$ by adding a set $Z$ that contains a neighbor of each $x_i$ from $N(x_i)\setminus N[v]$. 
Recall that the vertices  $x_1,\ldots,x_k$ should be dominated by $Z_{uv}$ for any child of $D^*$ 
by Lemma~\ref{lem:child-obs}, and by the same lemma each $x_i$ is dominated by a vertex from $N(x_i)\setminus N[v]$. 

Let $U=N[u]\cup (\bigcup_{i=1}^k(N[x_i]\setminus N[v])\cup\{x_i\})$.
We need the following straightforward observation that also will be used in the next section.
To prove it, it is sufficient to notice that because $G$ has no claws, $N[x_i]\setminus N[v]$ is a clique. 

\begin{lemma}\label{lem:U}
For any choice of a set $Z=\{z_1,\ldots,z_k\}$ such that $z_i\in N(x_i)\setminus N[v]$ for $i\in\{1,\ldots,k\}$,  $U$ is dominated by $Z\cup\{u\}$. 
\end{lemma}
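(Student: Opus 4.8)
The plan is to show directly that $U \subseteq N[Z \cup \{u\}]$, treating the two pieces in the definition of $U$ separately. The set $N[u]$ is dominated by $u$ itself, so it remains only to show that for each $i \in \{1,\ldots,k\}$ the set $(N[x_i] \setminus N[v]) \cup \{x_i\}$ is contained in $N[z_i]$; since $z_i \in Z$, this gives that each such piece is dominated by $Z$, and combining with $N[u]$ finishes the argument.

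The crux is the hinted structural fact that $N[x_i] \setminus N[v]$ is a clique. First I would record that $x_i \in P_{D^*}(v) \subseteq N(v) \subseteq N[v]$, so that $x_i$ is deleted when we intersect with the complement of $N[v]$, giving $N[x_i] \setminus N[v] = N(x_i) \setminus N[v]$. To see that this set is a clique, I would argue by contradiction: suppose it contains two distinct nonadjacent vertices $a$ and $b$. Then $x_i$ is adjacent to each of $a$, $b$, and $v$ (the last because $x_i \in N(v)$), while $v$ is adjacent to neither $a$ nor $b$ (both lie outside $N[v]$), and $a,b$ are nonadjacent by assumption. Hence $\{x_i, a, b, v\}$ induces a claw $K_{1,3}$ centered at $x_i$, contradicting the fact that line graphs are claw-free.

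With the clique established the conclusion is immediate. Since $z_i \in N(x_i) \setminus N[v]$, it is a vertex of the clique $N[x_i] \setminus N[v]$; being a clique vertex, $z_i$ is adjacent to or equal to every other vertex of that set, so $N[x_i] \setminus N[v] \subseteq N[z_i]$. Moreover $z_i \in N(x_i)$ gives $x_i \in N(z_i) \subseteq N[z_i]$, so the separately adjoined vertex $x_i$ is also covered. Therefore $(N[x_i] \setminus N[v]) \cup \{x_i\} \subseteq N[z_i] \subseteq N[Z]$. Taking the union over $i \in \{1,\ldots,k\}$ and adjoining $N[u]$ yields $U \subseteq N[u] \cup N[Z] = N[Z \cup \{u\}]$, which is exactly the claim.

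I expect no real obstacle beyond invoking claw-freeness correctly; the one point requiring care is the bookkeeping around $x_i$ itself. Because $x_i \in N[v]$, it is excluded from $N[x_i] \setminus N[v]$, which is precisely why the definition of $U$ re-adds $\{x_i\}$ and why the argument must dominate $x_i$ separately through its adjacency to $z_i$ rather than through the clique.
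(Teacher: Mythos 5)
Your proof is correct and follows exactly the paper's intended argument: the paper's entire justification is the remark that, since $G$ is claw-free, $N[x_i]\setminus N[v]$ is a clique, which you establish via the induced claw $\{x_i,v,a,b\}$ and then use to conclude that $z_i$ dominates $(N[x_i]\setminus N[v])\cup\{x_i\}$ while $u$ dominates $N[u]$. The only difference is that you spell out the bookkeeping (in particular that $x_i$ itself must be covered through its adjacency to $z_i$), which the paper leaves implicit.
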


We want to ensure that by subsequent removal (to guarantee the minimality) of  vertices of $D^*\setminus\{v\}$, the number of edges in the obtained minimal dominating set is not decreased.
To do it, for each vertex $v_j\in V(G)$, we construct the sets of vertices $R_j$ that 
cannot 
belong to $Z_{uv}$ for any child $D$ of $D^*$, where both $D$ and $D^*$ contain $v_j$.
First, we set   $R_j = \emptyset$ for every $v_j \not\in D^*\setminus\{v\}$.
Let $v_j$ be a vertex of $D^*\setminus\{v\}$ that has a neighbor $v_s$ such that either $v_s\in D^*$ or $v_s=u$. 
As $G$ is a claw-free graph, $K=N(v_j)\setminus N[v_s]$ is a clique. 
Then we set $R_j=K$ in this case.
Notice that we can have several possibilities for $v_s$. In this case $v_s$ is chosen arbitrary.
For all other $v_j\in D^*\setminus\{v\}$, $R_j=\emptyset$. 
Denote by $R$ the set $\cup_{j=1}^n R_j$. 

For each $i\in\{1,\ldots,k\}$, denote by 
$$Z_i=\{z\in V(G) \mid z\in N(D^*\setminus\{v\}) \cap(N(x_i)\setminus (N[v]\cup R)),  N(z)\cap (P_{D^*}(v)\setminus N[u])=\{x_i\}\}.$$ 
We generate a set $\mathcal{D}$ of minimal dominating sets as follows.

\medskip
\noindent
{\bf Case 1.} If at least one of the following three conditions is fulfilled, then we set $\mathcal{D}=\emptyset$:
\begin{itemize}

\vspace{-2mm}
\item[i)] there is a vertex $x\in D^*\setminus\{v\}$ such that $N[x]\subseteq N[ D^*\setminus\{v,x\}]\cup U$,

\vspace{-2mm}
\item[ii)] $k\geq 1$ and there is an index $i\in\{1,\ldots,k\}$ such that $Z_i=\emptyset$,

\vspace{-2mm}
\item[iii)] $u$ is not adjacent to any vertex of $D^*\setminus\{v\}$ and $N(u)\cap (\cup_{i=1}^kZ_i)=\emptyset$.
\end{itemize}

Otherwise, we consider two other cases.

\medskip
\noindent
{\bf Case 2.} If $u$ is adjacent to a vertex of $D^*\setminus\{v\}$, then we consecutively construct all sets $Z=\{z_1,\ldots,z_k\}$ where $z_i\in Z_i$, for $1 \le i \le k$ (if $k=0$, then $Z=\emptyset$). For each $Z$, we construct the set $D'=(D^*\setminus\{v\})\cup\{u\}\cup Z$. Notice that $D'$ is a dominating set as all vertices of $P_{D^*}[v]$ are dominated by $D'$, but $D'$ is not necessarily minimal. Hence, we construct a minimal dominating set $D$ from $D'$ by the greedy removal of vertices.
The obtained set $D$ is unique for a given set $Z$, and it is added to $\mathcal{D}$.

\medskip
Recall that by the definition of the parent-child relation, $u$ should be dominated by a vertex in a child. If $u$ is not adjacent to a vertex of $D^*\setminus\{v\}$, it should be adjacent to at least one of the  added vertices. This gives us the next case.

\medskip
\noindent
{\bf Case 3.} 
If $u$ is not adjacent to any vertex of $D^*\setminus\{v\}$, and $N(u)\cap (\bigcup_{i=1}^k Z_i)\neq\emptyset$, then we proceed as follows. Let $j$ be the smallest index such that $N(u) \cap Z_j \neq \emptyset$, and let $j'$ be the smallest index at least $j$ such that $Z_{j'} \sm N(u) = \emptyset$ ($j'=k$ if they are all non-empty). For each $t$ starting from $t=j$ and continuing until $t=j'$, we do the following. If $N(u) \cap Z_t =\emptyset$ then we go to next step $t=t+1$. Otherwise, for each $w\in N(u) \cap Z_t$, we consider all possible sets $Z=\{z_1,\ldots,z_{t-1},z_{t+1},\ldots, z_k\}\cup\{w\}$ such that $z_i\in Z_i \sm N(u)$ for $1 \le i \le t-1$, and $z_i \in Z_i$ for $t+1 \le i \le k$. As above, for each such set $Z$, we construct the set $D\rq{}=(D^*\setminus\{v\})\cup\{u\}\cup Z$ and then create a minimal dominating set $D$ from $D'$ by the greedy removal of vertices.
The obtained set $D$ is unique for a given set $Z$, and it is added to $\mathcal{D}$.

\medskip
We summarize the properties of the algorithm in the following lemma.

\begin{lemma}\label{lem:gen_from_par}
The set $\mathcal{D}$ is a set of minimal dominating sets such that $\mathcal{D}$ contains all children of $D^*$  with respect to flipping $u$ and $v$, 
for any $D\in \mathcal{D}$, $|E(G[D])|>|E(G[D^*])|$,
and elements of $\mathcal{D}$ are generated with polynomial delay 
$O(n+m)$.
\end{lemma}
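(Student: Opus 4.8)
The plan is to establish the four assertions of the lemma---minimality, containment of all children, strict growth of the edge count, and the delay bound---in that order, leaning throughout on the claw-freeness of the line graph $G$ and on Lemmas~\ref{lem:child-obs}, \ref{lem:parent-line}, and~\ref{lem:U}. For \emph{minimality}, I would first check that the intermediate set $D'=(D^*\setminus\{v\})\cup\{u\}\cup Z$ dominates $G$: the vertices outside $P_{D^*}[v]$ are already dominated by $D^*\setminus\{v\}$, while $v$ and the private neighbours of $v$ inside $N[u]$ are dominated by $u$ and each $x_i$ is dominated by $z_i\in N(x_i)$, which Lemma~\ref{lem:U} packages as ``$U$ is dominated by $Z\cup\{u\}$''. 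Since $D$ is obtained from $D'$ by greedy removal, it is a minimal dominating set by construction.

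For \emph{containment of all children}, I would start from the structure forced by Lemmas~\ref{lem:parent} and~\ref{lem:parent-line}: as $X_{uv}=\emptyset$, any child has the form $D=(D^*\setminus\{v\})\cup\{u\}\cup Z_{uv}$, and the last two items of Lemma~\ref{lem:parent-line} give a bijection between $Z_{uv}=\{z_1,\dots,z_k\}$ and $\{x_1,\dots,x_k\}=P_{D^*}(v)\setminus N[u]$ in which $z_i$ is the unique neighbour of $x_i$ in $Z_{uv}$. I would then verify $z_i\in Z_i$: membership in $N(D^*\setminus\{v\})$ from item~2 of Lemma~\ref{lem:child-obs}, membership in $N(x_i)\setminus N[v]$ from the bijection and item~1 of Lemma~\ref{lem:child-obs}, and $N(z_i)\cap(P_{D^*}(v)\setminus N[u])=\{x_i\}$ from the bijection. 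The genuinely new point is $z_i\notin R$: if $z_i\in R_j=N(v_j)\setminus N[v_s]$ for some $v_j\in D^*\setminus\{v\}$ with neighbour $v_s\in D^*\cup\{u\}$, then in the minimal set $D$ the vertex $v_j$ still needs a private neighbour $q$, and since $q\notin N[v_s]$ both $q$ and $z_i$ lie in the clique $N(v_j)\setminus N[v_s]$ (a clique by claw-freeness), forcing $q\sim z_i$ and contradicting privacy of $q$. Having placed $Z_{uv}$ in $Z_1\times\cdots\times Z_k$, I would finish by identifying the case: if $u$ has a neighbour in $D^*\setminus\{v\}$ the child appears in Case~2, and otherwise $u$, being dominated in $D$ by $D\setminus\{u\}$, is adjacent to some $z_i\in N(u)$ and the child appears in Case~3; in both situations $D'=D$ is already minimal, so greedy removal returns exactly $D$. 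Conversely, whenever a Case~1 condition holds there is \emph{no} child: for~(i) Lemma~\ref{lem:U} would make the offending vertex removable in every candidate, for~(ii) a child would require $z_i\in Z_i$, and for~(iii) $u$ could never be dominated; hence $\mathcal{D}=\emptyset$ still contains all (zero) children.

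For \emph{strict growth of the edge count}, since $v$ is isolated in $G[D^*]$ we have $|E(G[D^*])|=|E(G[D^*\setminus\{v\}])|$, so it suffices to show greedy removal deletes no vertex of $D^*\setminus\{v\}$ while $u$ contributes a new edge. The vertex $u$ survives because $v$ is its private neighbour in $D'$ (no $z_i$ and no vertex of $D^*\setminus\{v\}$ is adjacent to the isolated $v$), and $u$ is adjacent to a surviving vertex---a neighbour in $D^*\setminus\{v\}$ in Case~2, or the chosen $z_t\in N(u)$ (surviving because $x_t$ is private for it) in Case~3---which, as $u\notin D^*$, is a new edge. For the $D^*$-vertices I would use that Case~1(i) failed, so each $v_j\in D^*\setminus\{v\}$ owns a vertex $p_j\in N[v_j]\setminus(N[D^*\setminus\{v,v_j\}]\cup U)$, and show $p_j$ stays private for $v_j$ in $D'$: it is undominated by $D^*\setminus\{v,v_j\}$ by choice, by $u$ since $N[u]\subseteq U$, and by every $z_i$---this last point being where the exclusion of $R$ and the two-clique decomposition of $N(z_i)$ forbid a neighbour of $z_i$ that lies outside $U$ yet inside $N[v_j]$. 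As greedy removal never deletes a vertex with a private neighbour, all of $D^*\setminus\{v\}$ is retained, so $E(G[D])\supsetneq E(G[D^*\setminus\{v\}])$, as required.

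For the \emph{delay}, exploiting that each $N[x_i]\setminus N[v]$ and each $N(v_j)\setminus N[v_s]$ is a clique, I would precompute $U$, the sets $R_j$, $R$, and $Z_i$, and the three Case~1 tests in $O(n+m)$ time; the families $Z$ are then a Cartesian product (Case~2) or its restriction fixing the least index hitting $N(u)$ (Case~3, where the indices $j,j'$ serve exactly to enumerate without repetition), so successive $Z$ are produced with constant amortised overhead and each $D$ follows from a single greedy-removal pass in $O(n+m)$, giving polynomial delay $O(n+m)$. I expect the main obstacle to be the edge-increase argument, and within it the claim that no $z_i$ dominates the private vertex $p_j$: this is the step that genuinely requires the claw-freeness of $G$ and the carefully engineered set $R$, in the same spirit as the forbidden-induced-subgraph analysis of Lemma~\ref{lem:parent-line}.
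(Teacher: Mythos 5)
Your minimality, containment, and delay portions essentially follow the paper's proof (Claims 1 and the running-time analysis of Lemma~\ref{lem:gen_from_par}) and are sound; the genuine gap is in the edge-count argument. You claim that greedy removal deletes \emph{no} vertex of $D^*\setminus\{v\}$, and you try to derive this from the exclusion of $R$. This is both stronger than what the construction supports and false in general. The sets $R_j$ are nonempty only for vertices $v_j\in D^*\setminus\{v\}$ that have a neighbor $v_s$ with $v_s\in D^*$ or $v_s=u$; for a vertex $v_j$ that is \emph{isolated} in $G[D^*\setminus\{v\}]$ and \emph{not adjacent} to $u$, the definition sets $R_j=\emptyset$, so nothing in the definition of the sets $Z_i$ prevents a chosen $z_i$ from being adjacent to $v_j$ and dominating every private neighbor of $v_j$. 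Such a $v_j$ then has no private vertex in $D'$ and is deleted by greedy removal, so your invariant ``all of $D^*\setminus\{v\}$ is retained'' cannot be established. The paper proves only the weaker statement --- every deleted vertex is isolated in $G[D^*\setminus\{v\}]$ and non-adjacent to $u$ --- and observes that this suffices: deleting such vertices destroys no edge of $G[D^*]$, and the retained neighbor of $u$ (a vertex of $D^*\setminus\{v\}$ in Case~2, or the vertex $w\in Z\cap N(u)$ in Case~3) contributes at least one new edge, giving $|E(G[D])|>|E(G[D^*])|$.

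There is a second, related weakness: even for the protected vertices (those with a neighbor $v_s\in(D^*\setminus\{v\})\cup\{u\}$), your justification that no $z_i$ dominates the private vertex $p_j$ --- ``the exclusion of $R$ and the two-clique decomposition of $N(z_i)$'' --- is not a proof. The condition $z_i\notin R_j$ only restricts adjacency between $z_i$ and $v_j$ itself; it does not prevent $z_i$ from being adjacent to a private neighbor $y\in N(v_j)$ lying outside $U$. Ruling this out is precisely the hard core of the paper's Claim~\ref{cl:mon}: one assumes such a $y$ is dominated by $z_i$, and then a case analysis (Cases a) and b)) using claw-freeness and Beineke's forbidden induced subgraphs of line graphs (Figure~\ref{fig:forb}) forces $y\in N(x_i)\setminus N[v]\subseteq U$, a contradiction. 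Note that this analysis crucially uses the existence of the neighbor $v_s$ to build the forbidden configurations --- which is exactly why it cannot be extended to the isolated, non-$u$-adjacent vertices where your stronger claim fails. As written, your sketch defers this step rather than carrying it out, so the edge-increase assertion of the lemma remains unproved.
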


\begin{proof}
Notice that each set $D$ constructed in Cases 2 or 3 is a minimal dominating set and that
$\{u\}\cup Z \subseteq D \subseteq D'$. Furthermore every $y\in D'$ such that
$y\in  \{u\}\cup Z$ has a private: $v$ is a private for $u$, and each $x_i$ is a private 
for $z_i$ and $x_t$ is a private for $w$ in Case 3. Observe also that all the constructed sets $D$ are distinct, because they are constructed for distinct sets $Z$. 
We prove the following two claims.

\begin{claim}\label{cl:child}
Each minimal dominating set $D$ that is a child of $D^*$ with respect to flipping $u$ and $v$, is in $\mathcal{D}$.
\end{claim}

\begin{proof}[Proof of Claim~\ref{cl:child}]
%We have to show that each minimal dominating set $D$ that is a child of $D^*$ with respect to flipping $u$ and $v$, is in $\mathcal{D}$.
Let $D$ be a child of $D^*$ with respect to flipping $u$ and $v$. 
Then $D=(D^*\setminus\{v\})\cup\{u\}\cup Z_{uv}$.  
Recall that by Lemma~\ref{lem:parent-line}, $X_{uv}=\emptyset$.
This means that $(D^* \sm \{v\}) \cup \{u\}$ dominates all vertices of $G$ except $P_{D^*}(v) \sm N[u]$. 
As $X_{uv} = \emptyset$, it is clear that each child of $D^*$ with respect to flipping $u$ and $v$, is obtained by adding vertices that dominate $P_{D^*}(v) \sm N[u]$, by Lemma~\ref{lem:child-obs}. 
Also by Lemma~\ref{lem:child-obs}, no vertex of $Z_{uv}$ is a neighbor of $v$, which means that $P_{D^*}(v) \sm N[u]$ 
has to be dominated by  $N(P_{D^*}(v) \sm N[u]) \sm N[v]$. 
Clearly, $(N[v]\setminus N[u])\setminus N[D^*\setminus\{v\})]=
P_{D^*}\setminus N[u]=\{x_1,\ldots,x_k\}$, and for each $z\in Z_{uv}$, there is $x_i$ adjacent to $z$. 
By the last claim of Lemma~\ref{lem:child-obs}, each $x_i$ is adjacent to a vertex of $Z_{uv}$.
Now by Lemma~\ref{lem:parent-line}, each $x_i$ is adjacent to exactly one vertex of $Z_{uv}$, and each vertex $z\in Z_{uv}$ is adjacent to a single vertex of $\{x_1,\ldots,x_k\}$.
Denote by $z_i$ the unique neighbor of $x_i$ in $Z_{uv}$. Clearly, $Z_{uv}=\{z_1,\ldots,z_k\}$. 

By Lemma~\ref{lem:U}, the vertices of the set $U$ are dominated by $Z_{uv}\cup\{u\}$.
Hence, for any vertex $x\in D^*\setminus\{v\}$, $N[x]\setminus (N[ D^*\setminus\{v,x\}]\cup U)\neq\emptyset$ and we do not have Case 1 i).

Now we show that each $z_i\in Z_i$ for $i\in\{1,\ldots,k\}$. 
To obtain a contradiction, assume that some $z_i\notin Z_i$.
Because $z_i\in N(x_i)\setminus N[v]$, $N(z_i)\cap (P_{D^*}(v)\setminus N[u])=\{x_i\}$
and by Lemma~\ref{lem:child-obs}, $z_i$ is dominated by a vertex of $D^*\setminus\{v\}$, we have that $z_i\in R_j\neq\emptyset$ for some $j\in\{1,\ldots,n\}$.
Hence, the vertex $v_j\in D^*$ us adjacent to  some vertex $v_s\in D^*$ or $v_j$ is adjacent to $v_s=u$ and $R_j=N(v_j)\setminus N[v_s]$.
Suppose that $v_j$ is adjacent to $v_s\in D^*$.
Since $v$ is an isolated vertex in $D^*$, $v_s\neq v$.
Then $v_j$ is dominated by $v_s$ and all neighbors of $v_j$ that are not dominated by $v_s$ are in the clique $R_j$, but all these vertices are dominated by $z_i$.
As $v_j$ has no privates in $D$, we have a contradiction. 
Let now $v_j$ be adjacent to $u$. Then we conclude that $N[v_j]\subseteq N[u]\cup N[z_i]$ and again get a contradiction.
We obtain that $Z_i\neq\emptyset$ for $i\in\{1,\ldots,k\}$ and we do not have Case 1 ii).
Recall that for the child $D$ of $D^*$, $u$ is not an isolated vertex of $D\subseteq (D^*\setminus \{v\})\cup(\cup_{i=1}^kZ_i)$. Hence, we do not have Case 1 iii) as well. 
Finally, as $z_i\in Z_i$, $Z_{uv}$ is listed in Case 2 or 3. 
We conclude that each child of $D^*$ with respect to flipping $u$ and $v$, is contained in $\mathcal{D}$. 
\end{proof}

\begin{claim}\label{cl:mon}
 For any $D\in \mathcal{D}$, $|E(G[D])|>|E(G[D^*])|$.
\end{claim}

\begin{proof}[Proof of Claim~\ref{cl:mon}]
%Our next aim is to show that for any $D\in \mathcal{D}$, $|E(G[D])|>|E(G[D^*])|$.
Assume that a minimal dominating set $D\in\mathcal{D}$ is obtained by the greedy removal procedure from
$D\rq{}=(D^*\setminus\{v\})\cup\{u\}\cup Z$ where $Z=\{z_1,\ldots,z_k\}$. 

We show that only isolated vertices of $D^*\setminus\{v\}$ that are not adjacent to $u$ can be removed. To obtain a contradiction, 
assume that some vertex  $v_j\in D^*\setminus\{v\}$ such that $v_j$ has a neighbor $v_s$ where $v_s \in D^*\setminus\{v\}$ or $v_s=u$, and $R_j=N(v_j)\setminus N[v_s]$ is removed.
As $v_j$ is removed, $v_j$ has no privates in $D$. Notice that by the construction of the sets $Z_1,\ldots,Z_k$, $R_j\cap Z=\emptyset$. Observe also that $u\notin R_j$ as we would have
$N[v_j]\subseteq N[v_s]\cup N[u]$, and we would have Case 1 i).
Let $P=N(v_j)\setminus N[D^*\setminus \{v,v_j\}]$. Since $v_j$ has no privates in $D$, each vertex $y\in P$ is dominated by a vertex of $Z$ or by $u$. 
Clearly, if $y$ is dominated by $u$, then $y\in N[u]\subseteq U$. 
Suppose that  $y$ is dominated by $z_i\in Z$ but not $u$. Since $x_i$ is not dominated by $D^*\setminus \{v\}$, $y\neq x_i$. 
We consider two cases.

\medskip
\noindent{Case a).} The vertex $v_j$ has a neighbor $v_s\in D^*\setminus\{v\}$ and $R_j=N(v_j)\setminus N[v_s]$.  
By the construction of $Z_i$, $z_i$ is dominated by some vertex $z\in D^*\setminus\{v\}$.
If $z=v_j$, then because $z_i\notin R_j$, $z_i$ is adjacent to $v_s$. Hence, without loss of generality we can assume that $z\neq v_j$, as otherwise we can take $z=v_s$.
Since $y\in P$, $z$ is not adjacent to $y$, and because $x_i$ is not dominated by $D^*\setminus\{v\}$, $z$ is not adjacent to $x_i$.
The graphs $G$ has no claws. Therefore, $yx_i\in E(G)$. If $y$ is adjacent to $v$, then $G[\{z,z_i,y,x_i,v,u\}]$ or $G[\{z_i,y,x_i,v,u\}]$ is isomorphic to one of the graphs shown in Fig.~\ref{fig:forb} and forbidden for line graphs~\cite{Beineke70}.  Then $y$ is not adjacent to $v$. It means  that $y\in N(x_i)\setminus N[v]\subseteq U$.
Hence, we conclude that $y\in U$ in the considered case.

\medskip
\noindent{Case b).} Some vertex $v_j$ is adjacent to $u$ and $R_j=N(v_j)\setminus N[u]$. 
Suppose that $z_i$ is adjacent to $u$. Recall that $u$ is not adjacent to $y$ and $x_i$. Hence, as $G$ has no claws, $yx_i\in E(G)$.
If $y$ is adjacent to $v$, then $G[\{z_i,y,x_i,v,u\}]$ is forbidden for line graphs~\cite{Beineke70}.  Then $y$ is not adjacent to $v$. It means  that $y\in N(x_i)\setminus N[v]\subseteq U$.
Suppose now that $z_i$ is not adjacent to $u$. 
By the construction of $Z_i$, $z_i$ is dominated by some vertex $z\in D^*\setminus\{v\}$.
If $z=v_j$, then because $z_i\notin R_j$, $z_i$ is adjacent to $u$ and we obtain a contradiction.
Then $z\neq v_j$.
Since $y\in P$, $z$ is not adjacent to $y$, and because $x_i$ is not dominated by $D^*\setminus\{v\}$, $z$ is not adjacent to $x_i$.
The graphs $G$ has no claws. Therefore, $yx_i\in E(G)$. If $y$ is adjacent to $v$, then $G[\{z,z_i,y,x_i,v,u\}]$ or $G[\{z_i,y,x_i,v,u\}]$ is  forbidden for line graphs~\cite{Beineke70}.  Then $y$ is not adjacent to $v$. It means  that $y\in N(x_i)\setminus N[v]\subseteq U$.
Hence, we conclude that $y\in U$ in this case too.

\medskip
\noindent
We have that $P\subseteq U$, but in this case 
$N[v_j]\subseteq N[ D^*\setminus\{v,v_j\}]\cup U$, and we have Case 1 i) of our algorithm.

We proved that only isolated vertices of $D^*\setminus\{v\}$ that are not adjacent to $u$ may be removed by the greedy removal. If $u$ is adjacent to a vertex of $D^*$, then  
 $|E(G[D])|>|E(G[D^*])|$ because we do not destroy edges between the vertices of $D^*\setminus\{v\}$ and add at least one edge incident with $u$ to the constructed set. Suppose that $u$ is not adjacent to the vertices of $D^*\setminus\{v\}$. Then we have Case 3, and $Z$ contains at least one vertex adjacent to $u$, i.e., we increase the number of edges. This observation concludes the proof that $|E(G[D])|>|E(G[D^*])|$.
\end{proof}

To complete the proof, it remains to evaluate the running time. Observe that the sets $Z_1,\ldots,Z_k$ can be constructed in time $O(n+m)$.
The sets $Z$ can clearly be generated with delay $O(n)$, and greedy removal of vertices on each generated set can be done in time $O(n+m)$. Note that each set is generated exactly once.
\end{proof}

Combining Lemmas~\ref{lem:main} and \ref{lem:gen_from_par}, we obtain the following theorem.

\begin{theorem}\label{thm:line}
All minimal dominating sets of a line graph with $n$ vertices and $m$ edges can be enumerated
with incremental-polynomial delay 
$O(n^2m^2|\mathcal{L}|)$,
and in total time 
$O(n^2m|\mathcal{L}^*|)$,
where $\mathcal{L}$ is the set of already generated minimal dominating sets and
$\mathcal{L}^*$ is the set of all minimal dominating sets.
\end{theorem}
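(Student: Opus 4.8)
The plan is to combine the two previously established results directly. Theorem~\ref{thm:line} asks for the enumeration complexity of all minimal dominating sets of a line graph, and the entire machinery has been set up precisely so that this follows by feeding Lemma~\ref{lem:gen_from_par} into Lemma~\ref{lem:main}. So my first step is simply to invoke Lemma~\ref{lem:gen_from_par}, which guarantees the existence of an algorithm $\mathcal{A}$ producing, for each triple $(D^*,u,v)$, a set $\mathcal{D}$ of minimal dominating sets with polynomial delay $O(n+m)$, such that $\mathcal{D}$ contains all children of $D^*$ with respect to flipping $u$ and $v$ and satisfies $|E(G[D])|>|E(G[D^*])|$ for every $D\in\mathcal{D}$.

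Next I would match this to the correct branch of Lemma~\ref{lem:main}. The key observation is that $\mathcal{A}$ satisfies the strict-increase property: every generated $D$ has $|E(G[D])|>|E(G[D^*])|$. This is exactly the hypothesis of the \emph{second} bullet of Lemma~\ref{lem:main}, which yields incremental-polynomial delay $O((p(n,m)+n^2)m^2|\mathcal{L}|)$ and total running time $O((p(n,m)+n^2)m|\mathcal{L}^*|^2)$. The point of this branch is that the strict increase in edge count bounds the depth of any search path in $\mathcal{G}$ by $m$, since the number of edges in an induced subgraph on a minimal dominating set is at most $m$; this is what reduces the stack size $|\mathcal{S}|$ from $|\mathcal{L}|$ to $m+1$ and hence drops one factor of $|\mathcal{L}|$ in the delay.

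It then remains to substitute $p(n,m)=O(n+m)$ into the bounds and simplify. For the incremental-polynomial delay, $(p(n,m)+n^2)m^2|\mathcal{L}| = O((n+m+n^2)m^2|\mathcal{L}|) = O(n^2m^2|\mathcal{L}|)$, using $m\le n^2$ so that the $n^2$ term dominates $n+m$. For the total running time I would observe that the claimed bound $O(n^2m|\mathcal{L}^*|)$ is stronger than what the second bullet literally states (which carries $|\mathcal{L}^*|^2$). The correct justification is that Lemma~\ref{lem:gen_from_par} guarantees not merely a strict edge increase but that $\mathcal{D}$ contains \emph{only children} of $D^*$ — more precisely, combining with the uniqueness of parents established after Lemma~\ref{lem:parent}, each minimal dominating set has at most $m$ parents, so the relevant branch is effectively the \emph{third} bullet of Lemma~\ref{lem:main}, giving total time $O((p(n,m)+n^2)m|\mathcal{L}^*|) = O(n^2m|\mathcal{L}^*|)$.

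The main subtlety I expect is precisely this tension between which bullet of Lemma~\ref{lem:main} applies. The delay bound $O(n^2m^2|\mathcal{L}|)$ matches the second bullet (strict edge increase), whereas the total-time bound $O(n^2m|\mathcal{L}^*|)$ matches the third bullet (children-only). Lemma~\ref{lem:gen_from_par} in fact establishes \emph{both} properties simultaneously, so one can legitimately read off the delay from the second bullet and the total time from the third. The cleanest write-up would state up front that $\mathcal{A}$ meets the hypotheses of both the second and third cases of Lemma~\ref{lem:main}, then take the delay from one and the total running time from the other; the arithmetic simplification with $p(n,m)=O(n+m)$ is then routine.
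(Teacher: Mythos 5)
Your derivation of the delay bound is exactly the paper's proof: the paper establishes Theorem~\ref{thm:line} in one line by combining Lemma~\ref{lem:gen_from_par} with Lemma~\ref{lem:main}, and since Lemma~\ref{lem:gen_from_par} gives $p(n,m)=O(n+m)$ together with the strict edge-increase property, the second bullet of Lemma~\ref{lem:main} yields delay $O((n+m+n^2)m^2|\mathcal{L}|)=O(n^2m^2|\mathcal{L}|)$, just as you argue.

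Your justification of the total running time, however, contains a genuine error. Lemma~\ref{lem:gen_from_par} states that $\mathcal{D}$ \emph{contains all} children of $D^*$ with respect to flipping $u$ and $v$; it does not state, and the paper never proves, that $\mathcal{D}$ contains \emph{only} children. Indeed this cannot be assumed: in Cases 2 and 3 of the line-graph construction, the set $D'=(D^*\setminus\{v\})\cup\{u\}\cup Z$ is made minimal by greedy removal, and the proof of Claim~\ref{cl:mon} establishes only that nothing except isolated vertices of $D^*\setminus\{v\}$ non-adjacent to $u$ can be deleted in this step -- it does not exclude that such vertices are deleted. When one is deleted, the unique parent of the output $D$ with respect to flipping $u$ and $v$ is not $D^*$, so $D$ is not a child of $D^*$, and the hypothesis of the third bullet of Lemma~\ref{lem:main} fails. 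The ``children only'' property is exactly what separates the bipartite case: it is proved there (Lemma~\ref{lem:bipartite}), which is why Theorem~\ref{thm:line-bipartite} obtains the third bullet's bounds. If your claim held for all line graphs, Theorem~\ref{thm:line} would have delay $O(n^2m|\mathcal{L}|)$ rather than $O(n^2m^2|\mathcal{L}|)$, and Section~\ref{sec:linebip} would be superfluous. What the second bullet of Lemma~\ref{lem:main} honestly gives as total time is $O(n^2m|\mathcal{L}^*|^2)$, which is consistent with the bound $O(m^4|\mathcal{L}^*|^2)$ stated in Corollary~\ref{cor:eds}; the bound $O(n^2m|\mathcal{L}^*|)$ appearing in the statement of Theorem~\ref{thm:line} cannot be derived from the lemmas as stated (it mixes the delay of the second bullet with the total time of the third), and is best regarded as an inconsistency in the paper rather than something to be repaired by asserting an unproved hypothesis.
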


This theorem immediately gives us the following corollary.

\begin{corollary}\label{cor:eds}
All minimal edge dominating sets of an arbitrary graph with $m$ edges can be enumerated
with incremental-polynomial delay 
$O(m^5|\mathcal{L}|)$,
and in total time 
$O(m^4|\mathcal{L}^*|^2)$, 
where $\mathcal{L}$ is the set of already generated minimal edge dominating sets and
$\mathcal{L}^*$ is the set of all minimal edge dominating sets.
\end{corollary}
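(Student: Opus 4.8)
The plan is to read off the corollary from Theorem~\ref{thm:line} through the line-graph correspondence recorded in Section~\ref{sec:defs}, so that essentially no new work is required. Recall that a set of edges $A\subseteq E(G)$ is a (minimal) edge dominating set of $G$ if and only if $A$, viewed as a set of vertices of $L(G)$, is a (minimal) dominating set of $L(G)$. Since $V(L(G))=E(G)$, this correspondence is the identity on subsets: the family of all minimal edge dominating sets of $G$ coincides \emph{verbatim} with the family of all minimal dominating sets of $L(G)$, and in particular the already-generated set $\mathcal{L}$ and the full set $\mathcal{L}^*$ have the same meaning under either reading. Hence I would construct $L(G)$ once and run the line-graph enumeration algorithm of Theorem~\ref{thm:line} on it, emitting each generated dominating set of $L(G)$ directly as a minimal edge dominating set of $G$; no back-translation step is needed.

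The only quantitative input is the size of $L(G)$ measured in $m=|E(G)|$. First, $L(G)$ has exactly $m$ vertices, one per edge of $G$, so the parameter that plays the role of $n$ in Theorem~\ref{thm:line} is $m$. Second, the number of edges of $L(G)$ equals $\sum_{v\in V(G)}\binom{\deg_G(v)}{2}$, which is $O(m^2)$ (and already $\Theta(m^2)$ for a star), so the parameter playing the role of the edge count in Theorem~\ref{thm:line} is $O(m^2)$. Building $L(G)$ from $G$ costs $O(m^2)$ time, a one-time additive cost that is dominated by the enumeration bounds and therefore absorbed. I would then simply substitute $n\leftarrow m$ and $|E(L(G))|\leftarrow O(m^2)$ into the two bounds of Theorem~\ref{thm:line}. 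For the total running time, the graph-size factor $n^2\cdot|E(L(G))|$ becomes $m^2\cdot O(m^2)=O(m^4)$, which yields total time $O(m^4|\mathcal{L}^*|^2)$; the incremental-polynomial delay follows by the identical substitution into the delay bound of the theorem, giving $O(m^5|\mathcal{L}|)$.

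I do not expect any genuine obstacle, since all of the real work lies in Theorem~\ref{thm:line} and the present statement is a pure corollary. The single point that deserves care is the edge count of the line graph: one must use the correct bound $|E(L(G))|=O(m^2)$ rather than an over-optimistic linear-in-$m$ estimate, and one must keep straight the bookkeeping that matches the vertex count of $L(G)$ to the role of $n$ and its edge count to the role of $m$ in Theorem~\ref{thm:line}. Once these identifications are fixed, the remaining step is the routine arithmetic substitution described above, and the stated delay $O(m^5|\mathcal{L}|)$ and total time $O(m^4|\mathcal{L}^*|^2)$ drop out.
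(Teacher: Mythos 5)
Your high-level route is exactly the paper's: the paper proves this corollary by saying it follows ``immediately'' from Theorem~\ref{thm:line}, i.e., by running the line-graph algorithm on $L(G)$ and substituting sizes, which is what you do. Your bookkeeping of the reduction is also right: minimal edge dominating sets of $G$ are verbatim the minimal dominating sets of $L(G)$, the role of $n$ is played by $m$, and the role of the edge count is played by $|E(L(G))|=\sum_{v}\binom{\deg_G(v)}{2}=O(m^2)$ (tight for a star). The total-time claim is unproblematic: substituting into the theorem's stated total $O(n^2m|\mathcal{L}^*|)$ gives $O(m^4|\mathcal{L}^*|)$, which implies the corollary's $O(m^4|\mathcal{L}^*|^2)$; alternatively (and more safely, since the theorem's total time is in tension with the second bullet of Lemma~\ref{lem:main}, which produces $O((p(n,m)+n^2)m|\mathcal{L}^*|^2)$), one substitutes into that bullet directly and gets $O(m^4|\mathcal{L}^*|^2)$ on the nose.

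The genuine gap is in the delay. The ``identical substitution'' into the delay bound $O(n^2m^2|\mathcal{L}|)$ of Theorem~\ref{thm:line} gives $O\bigl(m^2\cdot(m^2)^2\cdot|\mathcal{L}|\bigr)=O(m^6|\mathcal{L}|)$, not $O(m^5|\mathcal{L}|)$; indeed, $m^5$ would be consistent only with the ``over-optimistic'' estimate $|E(L(G))|=O(m^{3/2})$ that you yourself warn against. So as written, your argument proves a delay of $O(m^6|\mathcal{L}|)$ and the stated bound does not ``drop out.'' To actually reach $m^5$ one has to open up the proof of Lemma~\ref{lem:main}: there the delay is (stack depth $|\mathcal{S}|$) times the per-node work $O((p(n,m)+n^2)\,m\,|\mathcal{L}|)$, and the generic depth bound is $m+1$ (edge counts of the induced subgraphs strictly increase along arcs of $\mathcal{G}$), which for $L(G)$ is $O(m^2)$ and yields the $m^6$. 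The missing factor of $m$ is recovered by a refinement specific to the line-graph algorithm: by the proof of Claim~\ref{cl:mon}, the greedy removal deletes only vertices that are isolated in $G[D^*\sm\{v\}]$ and not adjacent to $u$, so every non-isolated vertex of $G[D^*]$ stays in $D$ and keeps its incident edges, while $u\notin D^*$ becomes newly non-isolated in $G[D]$. Hence the number of non-isolated vertices strictly increases along every arc generated by this algorithm, so the depth of any directed path in $\mathcal{G}$ is at most $n_{L(G)}+1=m+1$ rather than $O(m^2)$. With depth $O(m)$, at most $O(m^2)$ flip pairs $(u,v)$ per node, and cost $O(m^2)$ per generated set (generation delay plus dictionary lookup), the delay becomes $O(m^5|\mathcal{L}|)$ as claimed. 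Some such argument (which the paper also leaves implicit behind the word ``immediately'') is indispensable; without it your proof establishes only the weaker $O(m^6|\mathcal{L}|)$ delay.
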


\subsection{Enumeration of minimal dominating sets of line graphs of bipartite graphs}
\label{sec:linebip}
We can improve the dependence on the size of the output if we restrict our attention to edge dominating sets of 
bipartite graphs.  To do it, we construct an algorithm that enumerates children of minimal dominating sets with polynomial delay. 
Again, we work on the equivalent problem of generating minimal dominating sets of line graphs of bipartite graphs. The following observation is not difficult to verify.

\begin{lemma}\label{lem:cliques}
Let $H$ be a non-empty bipartite graph and let $G=L(H)$. Then 
\begin{itemize}

\vspace{-2mm}
\item for each vertex $v$ in $G$, either $N(v)$ is a clique or $N(v)$ is a union of two disjoint cliques $K_1$ and $K_2$ such that no vertex of $K_1$ is adjacent to a vertex of $K_2$, and

\vspace{-2mm}
\item $G$ has no induced cycle on $2k+1$ vertices for any $k>1$.
\end{itemize}
\vspace{-2mm}
Furthermore, if $u$ and $v$ are adjacent vertices of $G$, then for any $x\in N(v)\setminus N[u]$, $N(x)\setminus N[v]$ is a clique, and for any distinct  $x,y\in N(v)\setminus N[u]$, $N(x)\setminus N[v]$ and $N(y)\setminus N[v]$ are disjoint.
\end{lemma}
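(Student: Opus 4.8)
The goal is to prove Lemma~\ref{lem:cliques}, which asserts three structural facts about $G=L(H)$ when $H$ is bipartite. Throughout I would use the standard bijection between vertices of $L(H)$ and edges of $H$: a vertex $a$ of $G$ corresponds to an edge $\alpha$ of $H$, and two vertices of $G$ are adjacent exactly when the corresponding edges of $H$ share an endpoint.

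\textbf{The neighborhood structure.} For the first item, fix a vertex $v$ of $G$ corresponding to an edge $\alpha=pq$ of $H$. Every neighbor of $v$ corresponds to an edge of $H$ incident to $p$ or to $q$. The plan is to partition $N(v)$ into $K_1$, the neighbors corresponding to edges incident to $p$ (other than $\alpha$), and $K_2$, those incident to $q$. Each $K_i$ is a clique, since any two edges of $H$ sharing the common endpoint $p$ (respectively $q$) are adjacent edges, hence adjacent in $G$. If a vertex lay in both $K_1$ and $K_2$, its edge would be incident to both $p$ and $q$, forcing it to be the edge $pq=\alpha$ itself, which is excluded; so $K_1$ and $K_2$ are disjoint. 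The key point, and where bipartiteness enters, is that no vertex of $K_1$ is adjacent to a vertex of $K_2$: such an adjacency would mean an edge $\beta\ni p$ and an edge $\gamma\ni q$ with $\beta,\gamma$ sharing an endpoint. That shared endpoint cannot be $p$ or $q$ (else $\beta$ or $\gamma$ equals $\alpha$), so $\beta=pr$ and $\gamma=qr$ for some common vertex $r$, yielding a triangle $p,q,r$ in $H$ and contradicting bipartiteness. When $p$ or $q$ has degree one, one of $K_1,K_2$ is empty and $N(v)$ is a single clique, giving the dichotomy.

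\textbf{No induced odd cycle of length $\geq 5$.} For the second item I would argue that any induced cycle $C$ in $L(H)$ on at least four vertices is even. An induced cycle in $L(H)$ corresponds to a closed walk in $H$ whose consecutive edges share endpoints but whose non-consecutive edges are pairwise non-adjacent; a standard fact is that induced cycles of length at least four in a line graph correspond to induced cycles in the base graph of the same length. Since $H$ is bipartite it has no odd cycle at all, so $L(H)$ has no induced odd cycle on $2k+1\geq 5$ vertices. (The triangles of $L(H)$, the odd cycles of length three, arise either from triangles of $H$ or from stars, which is why the exclusion is stated only for $k>1$.)

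\textbf{The ``furthermore'' clause.} Suppose $u,v$ are adjacent in $G$, corresponding to edges sharing an endpoint, say $v\leftrightarrow pq$ and $u\leftrightarrow qs$ (common endpoint $q$). A vertex $x\in N(v)\setminus N[u]$ is an edge incident to $v$'s endpoints but not adjacent to $u$; I would show such an $x$ must be incident to $p$ (not to $q$, since an edge through $q$ other than $\alpha$ would be adjacent to $u=qs$ unless it equals $u$, and one checks the edge-through-$q$ case is ruled out), so $x=pr$ for some $r$. Then $N(x)\setminus N[v]$ consists of the edges through $r$ other than $pr$, which all share endpoint $r$ and hence form a clique. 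For disjointness, given distinct $x=pr$ and $y=pt$ in $N(v)\setminus N[u]$ with $r\neq t$, any common vertex of $N(x)\setminus N[v]$ and $N(y)\setminus N[v]$ would be an edge adjacent to both $x$ and $y$ but not to $v$; sharing an endpoint with $pr$ and with $pt$ through a vertex other than $p$ forces a vertex $w$ with edges $rw$ and $tw$, and combined with $pr,pt$ this creates a short odd closed structure, contradicting bipartiteness of $H$. The main obstacle is bookkeeping exactly which endpoint each of $x$ and $y$ shares with $v$ versus $u$; bipartiteness (no triangles, no $C_5$) is precisely the tool that rules out the degenerate adjacencies.
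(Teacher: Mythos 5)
Your proof is correct. Note that the paper itself gives no proof of this lemma --- it introduces it with ``The following observation is not difficult to verify'' --- and your argument via the edge--vertex correspondence (neighborhood of $pq$ splits into the edges at $p$ and the edges at $q$, with a crossing adjacency forcing a triangle; an induced cycle of length at least $4$ in $L(H)$ lifting to an odd cycle in $H$; and the sets $N(x)\setminus N[v]$ sitting inside the edge-stars at the second endpoints) is exactly the intended routine verification. One small inaccuracy: induced cycles of length at least four in $L(H)$ correspond to \emph{cycles} of $H$, not necessarily \emph{induced} cycles, but this is harmless here since bipartiteness excludes all odd cycles, induced or not.
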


Let $G$ be the line graph of a bipartite graph $H$. 
Recall that $n$ is the number of vertices and $m$ is the number of edges of $G$.
Let $V(G)=\{v_1,\ldots,v_n\}$.
Let $D^*$ be a minimal dominating set of $G$ that has an isolated vertex $v$ in $G[D^*]$. 
Suppose that $u$ is a neighbor of $v$ in $G$. Let $\{x_1,\ldots,x_k\}=P_{D^*}(v)\setminus N[u]$.

As with line graphs, we construct minimal dominating sets from $(D^*\setminus\{v\})\cup\{u\}$ by adding a
%PG:
set of vertices $Z$ that contains neighbor of each $x_i$. 
Now we have to ensure that neighbors are selected in such a way that the obtained sets are minimal dominating sets that are children of $D^*$.
To do it, for each vertex $v_j\in V(G)$, we construct the sets of vertices that 
cannot 
belong to $Z_{uv}$ for any child $D$ of $D^*$, where both $D$ and $D^*$ contain $v_j$.
For each $v_j$, we define sets $R_j$ and $S_j$, where
the sets $R_j$ are used to ensure minimality, and the sets $S_j$ are used to guarantee that the obtained set is a child of $D^*$. 
First, we set   $R_j = S_j = \emptyset$ for every $v_j \not\in D^*\setminus\{v\}$.
Recall that by Lemma~\ref{lem:U}, the set 
$U=N[u]\cup (\bigcup_{i=1}^k((N[x_i]\setminus N[v])\cup\{x_i\})$ is dominated for every choice of $Z\subseteq V(G)\setminus N[v]$ that dominates $\{x_1,\ldots,x_k\}$, and that  $\{x_1,\ldots,x_k\}$ should be dominated by $Z_{uv}$ for any child of $D^*$,  by Lemma~\ref{lem:child-obs}.

Let $v_j$ be a vertex of $D^*\setminus\{v\}$. 
By Lemma~\ref{lem:cliques}, $N(v_j)$ is a union of at most two disjoint cliques with no edges between them.
We define sets $R_j$ as follows: 
\begin{itemize}

\vspace{-2mm}
\item[i)] if $P_{D^*}[v_j]\setminus U=\{v_j\}$ and
$N(v_j)\cap (N(v)\setminus N[u])=\emptyset$,  then $R_j=N(v_j)$;

\vspace{-2mm}
\item[ii)] if $P_{D^*}[v_j]\setminus U\neq\{v_j\}$, and  
\begin {itemize}

\vspace{-2mm}
\item for any vertex\footnote{Note that $N(v_j)\cap (N(v)\setminus N[u])$ might be empty.} $x\in N(v_j)\cap (N(v)\setminus N[u])$, $x$ is dominated by at least two vertices of $D^*\setminus\{v\}$, and 

\vspace{-2mm}
\item $P_{D^*}(v_j)\setminus U\subseteq K$, where $K$ is a maximal clique in $N(v_j)$, 
\end{itemize}
\vspace{-2mm}
then $R_j=K$;

\vspace{-2mm}
\item[iii)] in all other cases $R_j=\emptyset$.
\end{itemize}

Suppose that $v_j$ is a vertex of $D^*\setminus\{v\}$, such that $v_j$ is adjacent to a vertex $x\in N(v)\setminus N[u]$ that has no other neighbors in $D^*\setminus\{v\}$, and $(P_{D^*}[v_j]\cap N(x))\setminus U\subseteq \{v_j\}$.  For such a vertex $v_j$, we define $S_j=\{v_\ell\in V(G) \mid v_jv_{\ell} \in E(G),\ell>j\}$. For all other vertices $v_j$, we set $S_j=\emptyset$. 

Now for each $i\in\{1,\ldots,k\}$, we construct the set $Z_i$. A vertex $v_s\in N(x_i)\setminus N[v]$ is included in $Z_i$ 
if and only if
$v_s$ is adjacent to a vertex of $D^*\setminus\{v\}$, and
$v_s\notin R_j$ and $v_s\notin S_j$ for all $j\in\{1,\ldots,n\}$.  
Observe that by Lemma~\ref{lem:cliques}, these sets $Z_i$ are disjoint cliques,
and notice that $Z_i \cap D^* = \emptyset$ as $x_i$ 
is in $P_{D^*}(v)\setminus N[u]$. We generate a set $\mathcal{D}$ of minimal dominating sets as follows.

\label{alg:linebip}

\medskip
\noindent
{\bf Case 1.} If at least one of the following three conditions is fulfilled, then we set $\mathcal{D}=\emptyset$:
\begin{itemize}

\vspace{-2mm}
\item[i)] there is a vertex $x\in D^*\setminus\{v\}$ such that $N[x]\subseteq N[D^*\setminus\{v,x\}]\cup U$,

\vspace{-2mm}
\item[ii)] $k\geq 1$ and there is an index $i\in\{1,\ldots,k\}$ such that $Z_i=\emptyset$,

\vspace{-2mm}
\item[iii)] $u$ is not adjacent to any vertex of $D^*\setminus\{v\}$ and $N(u)\cap (\cup_{j=1}^kZ_j)=\emptyset$.
\end{itemize}

Otherwise, we consider two other cases.

\medskip
\noindent
{\bf Case 2.} If $u$ is adjacent to a vertex of $D^*\setminus\{v\}$,
then one after another we consider all possible sets $Z=\{z_1,\ldots,z_k\}$ such that $z_i\in Z_i$ for $1 \le i \le k$ (if $k=0$ then $Z=\emptyset$). For each $Z$, we construct the set $D=(D^*\setminus\{v\})\cup\{u\}\cup Z$ and add it to $\mathcal{D}$.

\medskip

Recall that by the definition of the parent-child relation, $u$ should be dominated by a vertex of $D$. If $u$ is not adjacent to a vertex of $D^*\setminus\{v\}$, it should be adjacent to at least one of the  added vertices. This gives us the next case.

\medskip
\noindent
{\bf Case 3.} 
If $u$ is not adjacent to any vertex of $D^*\setminus\{v\}$, and $N(u)\cap (\bigcup_{i=1}^k Z_i)\neq\emptyset$, then we proceed as follows. Let $j$ be the smallest index such that $N(u) \cap Z_j \neq \emptyset$, and let $j'$ be the smallest index at least $j$ such that $Z_{j'} \sm N(u) = \emptyset$ ($j'=k$ if they are all non-empty). For each $t$ starting from $t=j$ and continuing until $t=j'$, we do the following. If $N(u) \cap Z_t =\emptyset$ then we go to next step $t=t+1$. Otherwise, for each $w\in N(u) \cap Z_t$, we consider all possible sets $Z=\{z_1,\ldots,z_{t-1},z_{t+1},\ldots, z_k\}\cup\{w\}$ such that $z_i\in Z_i \sm N(u)$ for $1 \le i \le t-1$, and $z_i \in Z_i$ for $t+1 \le i \le k$. As above, for each such set $Z$, we construct the set $D=(D^*\setminus\{v\})\cup\{u\}\cup Z$ and add it to $\mathcal{D}$.

\medskip
The correctness of the described algorithm is proved in the following lemma, whose full proof is given in the appendix.

\begin{lemma}\label{lem:bipartite}
The set $\mathcal{D}$ is the set of all children of $D^*$ with respect to flipping $u$ and $v$.
\end{lemma}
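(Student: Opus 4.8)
The plan is to prove the two inclusions $\mathcal{D} \subseteq \{\text{children of } D^* \text{ w.r.t. flipping } u,v\}$ and the reverse, with the reverse inclusion essentially inherited from the line-graph analysis. First I would note that Lemma~\ref{lem:cliques} gives us stronger structural control than the mere claw-freeness used in the previous subsection: the neighborhood of each vertex splits into at most two cliques \emph{with no edges between them}, the sets $N(x_i)\setminus N[v]$ are pairwise disjoint cliques (so the $Z_i$ are disjoint), and there are no induced odd cycles of length $\geq 5$. The reverse inclusion (every child lies in $\mathcal{D}$) should follow the template of Claim~\ref{cl:child}: by Lemma~\ref{lem:parent-line} we have $X_{uv}=\emptyset$, so any child has the form $(D^*\setminus\{v\})\cup\{u\}\cup Z_{uv}$ where, by Lemmas~\ref{lem:child-obs} and~\ref{lem:parent-line}, $Z_{uv}=\{z_1,\ldots,z_k\}$ with each $z_i\in N(x_i)\setminus N[v]$ the unique neighbor of $x_i$ in $Z_{uv}$. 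The real work is to check that each such $z_i$ survives the definition of $Z_i$, i.e.\ that $z_i\notin R_j\cup S_j$ for all $j$; this is where the newly-defined sets $S_j$ (absent in the line-graph case) must be shown not to exclude genuine child-vertices, and where the Case~1 conditions must be shown not to fire for a child.

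The harder and more novel direction is the forward inclusion: every $D$ produced in Case~2 or Case~3 is \emph{exactly} a child of $D^*$, not merely a minimal dominating set with more edges. Unlike Lemma~\ref{lem:gen_from_par}, here the construction adds $Z$ and outputs $D=(D^*\setminus\{v\})\cup\{u\}\cup Z$ \emph{without} any greedy removal, so I would first verify that this $D$ is already minimal. Minimality of the vertices $\{u\}\cup Z$ is easy ($v$ is private for $u$, and $x_i$ is private for $z_i$ by the disjointness of the cliques $N(x_i)\setminus N[v]$ from Lemma~\ref{lem:cliques}, which guarantees $z_i$ dominates no other $x_{i'}$). Minimality of the retained vertices $v_j\in D^*\setminus\{v\}$ is precisely what the sets $R_j$ are engineered to enforce: I would argue, using the two cases i) and ii) in the definition of $R_j$, that each such $v_j$ still owns a private vertex in $D$ — in case i) where $P_{D^*}[v_j]\setminus U=\{v_j\}$ and $v_j$ touches no $x_i$, the vertex $v_j$ itself stays private because we forbade $Z$ from entering $N(v_j)=R_j$; in case ii) a private outside the clique $K=R_j$ survives because $Z$ avoids $K$. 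The no-edges-between-the-two-cliques property from Lemma~\ref{lem:cliques} is what lets me localize privates to a single clique and conclude they are untouched.

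The main obstacle I anticipate is establishing that $D$ is a child \emph{with the correct, canonically-computed} $X_{uv}$ and $Z_{uv}$ — that is, that when one runs the parent-construction backwards from $D$ (flipping $u$ to $v$, greedily building $X_{uv}$ and $Z_{uv}$) one recovers exactly $D^*$. Since $X_{uv}=\emptyset$ on line graphs, I must show $Z_{uv}=Z$, i.e.\ that the greedy removal applied to $D'=(D\setminus\{u\})\cup\{v\}$ removes precisely the vertices $Z$ and no vertex of $D^*\setminus\{v\}$. This is the role of the sets $S_j$: they rule out configurations where a vertex $v_j\in D^*\setminus\{v\}$ could itself be greedily removable when passing back from $D$ to $D^*$, which would make the recovered parent differ from $D^*$. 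Concretely, for a $v_j$ adjacent to some $x\in N(v)\setminus N[u]$ whose only $D^*$-neighbor is $v_j$, excluding $N(v_j)\cap\{v_\ell:\ell>j\}$ from eligible $Z$-vertices ensures that after flipping back, $v_j$'s potential private is not captured by a smaller-indexed vertex, so the greedy removal does not delete $v_j$ prematurely. I would handle Case~3 by the same reasoning with the extra bookkeeping that the distinguished vertex $w\in N(u)\cap Z_t$ must be the one providing $u$ its private neighbor in the reverse direction, and that the constraint $z_i\in Z_i\setminus N(u)$ for $i<t$ prevents producing the same $D$ twice and prevents $u$ from acquiring a spurious private earlier in the order. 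Verifying that this index-based $S_j$ mechanism exactly matches the lexicographically-first greedy choices in Lemma~\ref{lem:parent} — neither over- nor under-restricting — is the delicate heart of the argument, and I expect it to require careful use of the induced-odd-cycle-freeness from Lemma~\ref{lem:cliques} to exclude the bad six- and five-vertex configurations analogous to Figure~\ref{fig:forb}.
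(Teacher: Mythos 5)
Your plan is essentially the paper's own proof: the paper splits Lemma~\ref{lem:bipartite} into exactly your three tasks (Claims~\ref{cl:A}, \ref{cl:B} and \ref{cl:C}) --- every generated $D$ is already a minimal dominating set, with the $R_j$ supplying the surviving privates; every generated $D$ is a genuine child, i.e.\ reverse greedy removal recovers $D^*$, with the $S_j$ and the vertex ordering blocking premature removals; and every child passes the $R_j$/$S_j$/Case~1 filters --- all driven, as you predict, by Lemma~\ref{lem:cliques} (disjoint cliques, no induced $C_5$, diamond-freeness) together with Lemmas~\ref{lem:child-obs} and \ref{lem:parent-line}. The one caution is that the casework you defer is where nearly all of the paper's effort lies, and your summary of the minimality argument in case~ii) is imprecise: the paper must first show that the would-be lost privates of a removable $v_j$ are dominated by $Z\cap N(v_j)$ and lie in a single clique $K$ of $N(v_j)$ before the $R_j$-contradiction can fire, and the private that ultimately survives is a vertex of $N(v_j)\cap(N(v)\setminus N[u])$ dominated only by $v_j$, not a private ``outside $K$.''
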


\noindent
{\it Proof idea.} The result is immediately implied by the following three claims, which we prove in the appendix. (1) Each $D\in\mathcal{D}$ is a minimal dominating set. (2) Each $D\in\mathcal{D}$ is a child of $D^*$ with respect to flipping $u$ and $v$. (3) If $D$ is a child of $D^*$ with respect to flipping $u$ and $v$, then  $D\in\mathcal{D}$. \qed

\begin{lemma}\label{lem:bipartite-time} The elements of $\mathcal{D}$ can be generated with polynomial delay $O(n+m)$.
\end{lemma}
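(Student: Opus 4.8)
The plan is to bound the delay by carefully accounting for the work done between the generation of two consecutive sets in $\mathcal{D}$. First I would observe that all the preprocessing can be done once, before any set is output: by Lemma~\ref{lem:cliques} the neighborhood of each vertex splits into at most two cliques, so computing $\{x_1,\ldots,x_k\}=P_{D^*}(v)\setminus N[u]$, the auxiliary sets $U$, $R_j$, $S_j$ for all $j$, and the cliques $Z_1,\ldots,Z_k$ each requires only a bounded number of passes over the adjacency structure of $G$, hence $O(n+m)$ time in total. Since these are computed once, they do not contribute to the per-set delay. I would also check the three conditions of Case~1 in this preprocessing phase; if any holds we simply return $\mathcal{D}=\emptyset$ and there is nothing to enumerate.

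Next I would describe the enumeration proper. In Case~2 the family of output sets is exactly the set of all transversals $Z=\{z_1,\ldots,z_k\}$ with $z_i\in Z_i$, and since the $Z_i$ are pairwise disjoint this is a product set, which can be enumerated in a fixed lexicographic order with delay $O(k)=O(n)$ between successive tuples (incrementing the ``odometer'' of choices). Crucially, each output $D=(D^*\setminus\{v\})\cup\{u\}\cup Z$ is formed by a constant-size symmetric difference from the previous one, so writing it out costs $O(n)$. The key simplification over the line-graph case is that here \emph{no greedy removal is performed} --- by Lemma~\ref{lem:bipartite} each constructed $D$ is already a minimal dominating set that is a child of $D^*$ --- so we avoid the $O(n+m)$ removal step entirely and each set is emitted directly. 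For Case~3 the enumeration is organized as an outer loop over the index $t$ and the witness $w\in N(u)\cap Z_t$, and for each fixed $(t,w)$ an inner product enumeration over $z_i\in Z_i\setminus N(u)$ for $i<t$ and $z_i\in Z_i$ for $i>t$; the same odometer technique gives delay $O(n)$ between consecutive tuples, and the transitions between successive $(t,w)$ pairs occur at most $O(m)$ times total and each costs $O(n+m)$, so they can be amortized (or simply absorbed, since between any two of these transitions at least one set is output).

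The main obstacle I expect is ensuring a genuine \emph{delay} guarantee rather than merely an amortized or total-time bound, i.e.\ guaranteeing that we never perform a long stretch of bookkeeping without emitting a set. In Case~3 the subtlety is that for some values of $t$ we have $N(u)\cap Z_t=\emptyset$ and skip immediately, and more seriously that the inner product set over $i\neq t$ could a priori be empty if some $Z_i\setminus N(u)$ is empty for $i<t$; the definition of $j'$ as the first index with $Z_{j'}\setminus N(u)=\emptyset$ is precisely what prevents this, so I would verify that for every $t\le j'$ the relevant restricted sets are nonempty and hence at least one set is output before moving on. Once this is confirmed, each ``step'' of the algorithm either outputs a new set or advances $t$ (which happens $O(m)$ times and can be charged to the next output), so the delay between consecutive elements of $\mathcal{D}$ is $O(n+m)$, as claimed.
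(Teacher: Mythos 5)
Your proposal is correct and follows essentially the same route as the paper's proof: $O(n+m)$ preprocessing of the sets $Z_1,\ldots,Z_k$ (and $Z_i\cap N(u)$, $Z_i\setminus N(u)$) charged as delay before the first output, followed by product/odometer enumeration of the tuples $Z$ with $O(n)$ delay in Cases~2 and~3, where the crucial point --- which the paper merely asserts and you actually verify via the choice of $j'$ and the Case~1(ii) check --- is that every considered pair $(t,w)$ yields at least one output set, so bookkeeping between outputs never accumulates. The only small inaccuracy is the claim that consecutive sets in Case~2 differ by a constant-size symmetric difference (an odometer rollover can change up to $k$ coordinates), but this does not affect the $O(n)$ bound per step, and your conclusion stands.
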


\begin{proof}
The sets $Z_1,\ldots,Z_k$ can be constructed in $O(n+m)$ time. Within the same time we can also compute $Z_i \cap N(u)$ and $Z_i \sm N(u)$ for $1 \le i \le k$, which gives delay $O(n+m)$ before the first minimal dominating set is generated. In Case 2, we can trivially generate the next set $Z$ in $O(n)$ time. In Case 3, every considered vertex $w$ results in the generation of a new set $Z$. Hence, in both Cases 2 and 3, we can construct the next set $Z$ with delay $O(n)$. Note that by the way we generate sets $Z$, each minimal dominating set is generated exactly once.
\end{proof}

By Lemmas~\ref{lem:main}, \ref{lem:bipartite} and \ref{lem:bipartite-time}, we have the following theorem and corollary.

\begin{theorem}\label{thm:line-bipartite}
All minimal dominating sets of the line graph $G$ of a bipartite graph can be enumerated with incremental-polynomial delay 
$O(n^2m|\mathcal{L}|)$,
and in total time 
$O(n^2m|\mathcal{L}^*|)$,
where $n=|V(G)|$, $m=|E(G)|$, $\mathcal{L}$ is the set of already generated minimal dominating sets, 
and $\mathcal{L}^*$ is the set of all minimal dominating sets.
\end{theorem}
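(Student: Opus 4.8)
The plan is to obtain this theorem as a direct instantiation of the master Lemma~\ref{lem:main}, taking the enumeration subroutine $\mathcal{A}$ to be the Case~1/Case~2/Case~3 construction of $\mathcal{D}$ described in Section~\ref{sec:linebip}. The two facts I need about this subroutine have already been isolated: Lemma~\ref{lem:bipartite} says \emph{what} $\mathcal{A}$ produces, and Lemma~\ref{lem:bipartite-time} says \emph{how fast} it produces it. So the work is only to check that these two statements line up exactly with the hypotheses of the appropriate bullet of Lemma~\ref{lem:main}, and then to simplify the resulting bound.

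First I would determine which bullet of Lemma~\ref{lem:main} applies. By Lemma~\ref{lem:bipartite}, on the line graph of a bipartite graph the generated family $\mathcal{D}$ is \emph{exactly} the set of all children of $D^*$ with respect to flipping $u$ and $v$ --- it contains no superfluous sets. This is precisely the hypothesis of the third (and strongest) bullet of Lemma~\ref{lem:main}. This exactness is the crucial gain over the general line-graph case of Theorem~\ref{thm:line}: there the subroutine of Lemma~\ref{lem:gen_from_par} only guaranteed that $\mathcal{D}$ \emph{contains} all children while possibly also producing extra minimal dominating sets (all with strictly more edges), which forced the use of the second bullet and an extra factor of $m$ in the delay. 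Here the carefully engineered forbidden sets $R_j$ and $S_j$ prune $\mathcal{D}$ down to precisely the children, unlocking the faster bullet.

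Next I would read off the delay of the subroutine. Lemma~\ref{lem:bipartite-time} gives polynomial delay $O(n+m)$, so I may take $p(n,m)=O(n+m)$ in Lemma~\ref{lem:main}. Substituting into the third bullet, the incremental-polynomial delay is $O((p(n,m)+n^2)m|\mathcal{L}|)$ and the total running time is $O((p(n,m)+n^2)m|\mathcal{L}^*|)$. Since $G$ is a simple graph, $m\le\binom{n}{2}=O(n^2)$, so $p(n,m)+n^2=O(n+m+n^2)=O(n^2)$. The delay therefore simplifies to $O(n^2m|\mathcal{L}|)$ and the total time to $O(n^2m|\mathcal{L}^*|)$, as claimed.

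The one point to be genuinely careful about is confirming that $\mathcal{A}$ meets the exact interface assumed by Lemma~\ref{lem:main} --- namely that, given a triple $(D^*,u,v)$ with $v$ isolated in $G[D^*]$ and $u\in N(v)$, it returns with polynomial delay a family of minimal dominating sets containing all children of $D^*$ --- and in particular that the exactness in Lemma~\ref{lem:bipartite} is the full ``all and only children'' statement, not merely ``contains all children.'' Both are immediate from Lemmas~\ref{lem:bipartite} and \ref{lem:bipartite-time}, so there is no real obstacle at this assembly step; all of the substance of the argument has been pushed into those two lemmas, especially into the exactness claim of Lemma~\ref{lem:bipartite}.
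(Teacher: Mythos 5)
Your proposal is correct and matches the paper's own proof, which is exactly the one-line combination of Lemma~\ref{lem:main} (third bullet), Lemma~\ref{lem:bipartite}, and Lemma~\ref{lem:bipartite-time}, followed by the same simplification $p(n,m)+n^2 = O(n+m+n^2) = O(n^2)$. Your additional observation---that the ``all and only children'' exactness of Lemma~\ref{lem:bipartite} is what unlocks the third bullet and removes the extra factor of $m$ present in Theorem~\ref{thm:line}---is accurate and consistent with the paper's structure.
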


\begin{corollary}\label{cor:eds-bipartite}
All minimal edge dominating sets of a bipartite graph with $m$ edges can be enumerated
with incremental-polynomial delay 
$O(m^4|\mathcal{L}|)$,
and in total time 
$O(m^4|\mathcal{L}^*|)$,
where $\mathcal{L}$ is the set of already generated minimal dominating sets,
and $\mathcal{L}^*$ is the set of all minimal edge dominating sets.
\end{corollary}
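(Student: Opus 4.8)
The plan is to reduce directly to Theorem~\ref{thm:line-bipartite} via the edge-to-vertex correspondence recorded in Section~\ref{sec:defs}. First I would recall that for any graph $H$, a set $A\subseteq E(H)$ is a minimal edge dominating set of $H$ if and only if $A$ is a minimal dominating set of the line graph $L(H)$. Hence enumerating the minimal edge dominating sets of a bipartite graph $H$ is literally the same task as enumerating the minimal dominating sets of $L(H)$; and $L(H)$ is precisely a line graph of a bipartite graph, so Theorem~\ref{thm:line-bipartite} applies to $G=L(H)$. Since the correspondence between edge subsets of $H$ and vertex subsets of $L(H)$ is a minimality-preserving bijection, the list $\mathcal{L}$ (respectively $\mathcal{L}^*$) of already-generated (respectively, all) minimal edge dominating sets of $H$ has the same cardinality as the corresponding list for $L(H)$, so the output-dependent factors transfer verbatim.

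The substance of the argument is then the parameter translation. Writing $m$ for the number of edges of $H$ (as in the corollary statement) and applying the theorem to $G=L(H)$, the number of vertices of $G$ is $|V(L(H))|=|E(H)|=m$, because the vertices of $L(H)$ are exactly the edges of $H$. The number of edges of $G$ is $|E(L(H))|=\sum_{w\in V(H)}\binom{\deg_H(w)}{2}$, and since every edge of $L(H)$ is an unordered pair of distinct edges of $H$ sharing an endpoint, this is at most $\binom{m}{2}=O(m^2)$. Thus in the notation $n=|V(G)|$, with $|E(G)|$ the edge count used in Theorem~\ref{thm:line-bipartite}, we have $n=m$ and $|E(G)|=O(m^2)$.

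Substituting into the theorem's bounds then finishes the proof. The incremental-polynomial delay $O\bigl(n^2\,|E(G)|\,|\mathcal{L}|\bigr)$ becomes $O\bigl(m^2\cdot m^2\cdot|\mathcal{L}|\bigr)=O(m^4|\mathcal{L}|)$, and the total running time $O\bigl(n^2\,|E(G)|\,|\mathcal{L}^*|\bigr)$ becomes $O(m^4|\mathcal{L}^*|)$, exactly as claimed. The only point needing care—and the closest thing to an obstacle—is the one-time preprocessing of constructing $L(H)$ from $H$ and confirming that it does not dominate the delay: building $L(H)$ costs $O\bigl(|V(L(H))|+|E(L(H))|\bigr)=O(m+m^2)=O(m^2)$ time, which is absorbed into the first delay bound and contributes nothing asymptotically. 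Apart from this bookkeeping the corollary is an immediate specialization of Theorem~\ref{thm:line-bipartite}, with the quadratic blow-up $|E(L(H))|=O(m^2)$ being precisely what converts the theorem's $n^2\,|E(G)|$ factor into $m^4$.
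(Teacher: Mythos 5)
Your proposal is correct and is essentially the paper's own (implicit) derivation: the paper obtains this corollary directly from Theorem~\ref{thm:line-bipartite} via the equivalence, stated in Section~\ref{sec:defs}, between (minimal) edge dominating sets of $H$ and (minimal) dominating sets of $L(H)$, together with exactly your parameter translation $|V(L(H))|=m$ and $|E(L(H))|=O(m^2)$. Your additional remarks on the bijection preserving the output lists and on the $O(m^2)$ cost of constructing $L(H)$ are correct bookkeeping that the paper leaves unstated.
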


\section{Graphs of large girth and concluding remarks}
On line graphs we were able to observe properties of the parent relation in addition to uniqueness, which made it possible to apply the flipping method and design efficient algorithms for enumerating the minimal dominating sets. In Section \ref{sec:girth}, given in the appendix, we present an additional application of the flipping method, and we show that it also works successfully on graphs of girth at least 7. To do this, we observe other desirable properties of the parent relation on this graph class. As a result, we obtain an algorithm that enumerates the minimal dominating sets of a graph of girth at least 7 with incremental-polynomial delay $O(n^2m|\mathcal{L}|^2)$.

\medskip

To conclude, the flipping method that we have described in this paper has the property that each generated minimal dominating set has a unique parent.
It would be very interesting to know whether this can be used to obtain output-polynomial time algorithms for enumerating
minimal dominating sets in general. For the algorithms that we have given in this paper, on the studied graph classes we were
able to give additional properties of the parents to obtain the desired running times. Are there additional properties of parents
in general graphs that can result in efficient algorithms? 

As a first step towards resolving these questions, on which other graph classes can the flipping method be used to enumerate the minimal dominating sets in output-polynomial time? Another interesting question is whether the minimal dominating sets of line graphs or graphs of large girth can be enumerated with polynomial delay.

Recently and independently Kant{\'e} et al.~\cite{KanteLMN12b} have showed that {\it line graphs} and {\it path graphs}
have bounded neighbourhood Helly, and thus minimal dominating sets can be enumerated in output polynomial time.

\newpage

\newpage

\centerline{\huge {\bf Appendix}}

\medskip

\section*{Proof of Lemma \ref{lem:bipartite}}

The graph obtained from the complete graph on four vertices by the deletion of one edge is called a \emph{diamond}. 
A graph that has no induced subgraph isomorphic to a diamond is said to be \emph{diamond-free}.
It is straightforward to see that the first claim of Lemma~\ref{lem:cliques} implies the following observation. 

\begin{observation}\label{obs:diamond}
Line graphs of bipartite graphs are diamond-free. 
\end{observation}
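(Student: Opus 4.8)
The plan is to derive a contradiction from the existence of an induced diamond by applying the neighborhood decomposition guaranteed by the first claim of Lemma~\ref{lem:cliques}. Suppose for contradiction that $G=L(H)$, with $H$ bipartite, contains an induced diamond on vertices $\{a,b,c,d\}$, where $a$ and $b$ are the two vertices of degree three (the ``hubs''), so that $ab,ac,ad,bc,bd\in E(G)$, while $c$ and $d$ are the two nonadjacent vertices, i.e.\ $cd\notin E(G)$.

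The key idea is to focus on a single hub vertex whose neighborhood witnesses the offending nonedge. First I would consider the vertex $a$ and observe that its open neighborhood $N(a)$ contains all three of $b$, $c$, and $d$. By the first claim of Lemma~\ref{lem:cliques}, $N(a)$ is either a single clique or a disjoint union $K_1\cup K_2$ of two cliques with no edges running between $K_1$ and $K_2$. I would then split into these two cases and rule each out.

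In the first case, if $N(a)$ is a clique, then $c$ and $d$, both lying in $N(a)$, would be adjacent, contradicting $cd\notin E(G)$. In the second case, $N(a)=K_1\cup K_2$ with no cross edges. Since $c\not\sim d$, the vertices $c$ and $d$ cannot lie in the same clique, so without loss of generality $c\in K_1$ and $d\in K_2$. Now the hub $b$ also lies in $N(a)$ and is adjacent to both $c$ and $d$. But $b$ belongs to exactly one of $K_1,K_2$: if $b\in K_1$ then the edge $bd$ is a cross edge between $K_1$ and $K_2$, and if $b\in K_2$ then $bc$ is such a cross edge---either way contradicting the absence of edges between $K_1$ and $K_2$. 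Both cases being impossible, no induced diamond exists, and $G$ is diamond-free.

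I expect no genuine obstacle here, which is why the paper labels it an \emph{observation}; the only substantive step is recognizing that a degree-three vertex of the diamond has a neighborhood simultaneously containing the nonadjacent pair $c,d$ and a vertex $b$ joined to both, which is precisely the configuration the two-clique structure of Lemma~\ref{lem:cliques} forbids. One could equally well argue directly from the edge-intersection representation of $L(H)$ using bipartiteness of $H$, but routing the argument through Lemma~\ref{lem:cliques} is cleaner and is the intended approach.
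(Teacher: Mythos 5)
Your proof is correct and takes exactly the paper's route: the paper's entire proof is the remark that the first claim of Lemma~\ref{lem:cliques} implies the observation, and your case analysis on $N(a)$ (a single clique, versus two cliques with no cross edges, which cannot simultaneously contain the nonadjacent pair $c,d$ and the common neighbor $b$) is precisely the straightforward deduction the paper leaves implicit.
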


Let $G$ be the line graph of a bipartite graph $H$, such that $V(G)=\{v_1,\ldots,v_n\}$.
Let $D^*$ be a minimal dominating set of $G$ that contains a vertex $v$ which is isolated in $G[D^*]$,  and let $u$ be a neighbor of $v$ in $G$. 
Let $\mathcal{D}$ be the collection of sets generated from $(D^*\setminus\{v\})\cup\{u\}$ by the algorithm described on page \pageref{alg:linebip} in Section \ref{sec:linebip}.

\bigskip

\noindent
{\bf Lemma \ref{lem:bipartite}.}
{\it The set $\mathcal{D}$ is the set of all children of $D^*$ with respect to flipping $u$ and $v$.}

\smallskip

\begin{proof}
We prove the following three claims.

\begin{claim}\label{cl:A}
Each $D\in\mathcal{D}$ is a minimal dominating set.
\end{claim}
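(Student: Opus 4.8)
The plan is to fix an arbitrary $D\in\mathcal{D}$, written $D=(D^*\setminus\{v\})\cup\{u\}\cup Z$ with $Z=\{z_1,\dots,z_k\}$ and $z_i\in Z_i$ (in Case~3 one of these $z_t$ is the chosen vertex $w\in N(u)\cap Z_t$, which still lies in $Z_t$), and to check first that $D$ dominates $G$ and then that every vertex of $D$ has a private with respect to $D$. Domination is immediate: deleting $v$ from $D^*$ leaves exactly $P_{D^*}[v]=P_{D^*}(v)\cup\{v\}$ undominated (using that $v$ is isolated in $G[D^*]$); adding $u$ covers $N[u]\ni v$, so only $P_{D^*}(v)\setminus N[u]=\{x_1,\dots,x_k\}$ can remain uncovered; and each $x_i$ is dominated by $z_i\in Z_i\subseteq N(x_i)$.

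For minimality I would treat the three kinds of vertices of $D$ separately. The vertex $u$ has $v$ as a private neighbour: $v$ has no neighbour in $D^*\setminus\{v\}$ and every $z_i\in N(x_i)\setminus N[v]$ is non-adjacent to $v$, so $u$ is the unique dominator of $v$ in $D$. Each $z_i$ has $x_i$ as a private neighbour: $x_i\in P_{D^*}(v)\setminus N[u]$ is dominated neither by $D^*\setminus\{v\}$ nor by $u$, and the last statement of Lemma~\ref{lem:cliques} makes the cliques $N(x_{i'})\setminus N[v]$ pairwise disjoint, so for $i'\neq i$ no $z_{i'}$ is adjacent to $x_i$; hence $z_i$ is the unique dominator of $x_i$ in $D$.

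The real work is to exhibit a private for each $v_j\in D^*\setminus\{v\}$. Writing $Q_j=N[v_j]\setminus N[D^*\setminus\{v,v_j\}]$ for the part of $N[v_j]$ left undominated once $v$ and $v_j$ are deleted, a direct computation gives $P_D[v_j]=Q_j\setminus(N[u]\cup N[Z])$; since $U\subseteq N[\{u\}\cup Z]$ by Lemma~\ref{lem:U} and $N[u]\subseteq U$, this simplifies cleanly to $P_D[v_j]=(Q_j\setminus U)\setminus N[Z]$. Because $D\in\mathcal{D}$, condition i) of Case~1 must fail for $x=v_j$, i.e. $N[v_j]\not\subseteq N[D^*\setminus\{v,v_j\}]\cup U$, which is precisely $Q_j\setminus U\neq\emptyset$. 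It therefore remains to show that $Z$ cannot dominate all of $Q_j\setminus U$, and this is exactly where the sets $R_j$ enter: the strategy is to argue that any vertex $z$ able to dominate a vertex of $Q_j\setminus U$, and so threaten the last surviving private of $v_j$, has been placed in $R_j$ and hence excluded from every $Z_i$, so no such $z$ can appear in $Z$. (The sets $S_j$ only remove further candidates and matter for the child property, not for minimality.)

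To carry this out I would split according to the three cases defining $R_j$, leaning throughout on two structural facts: $N(v_j)$ is the union of at most two cliques (Lemma~\ref{lem:cliques}), and $G$ is diamond-free (Observation~\ref{obs:diamond}). In case~i), where $v_j$ is the only candidate private outside $U$, setting $R_j=N(v_j)$ guarantees $v_j\notin N[Z]$, so $v_j$ survives. In case~ii), where the privates outside $U$ lie in one maximal clique $K=R_j$ and the first bullet prevents new privates appearing from $N(v)\setminus N[u]$, diamond-freeness shows that any $z\in Z$ dominating a vertex of $K$ must lie outside $N(v_j)$ and can then dominate at most one vertex of $K$ (two would create an induced diamond on $\{v_j,p,p',z\}$, and being in $N(v_j)$ would force $z\in K$). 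The hard part will be this case: reconciling the clique-covering count with the exclusion of $K$, i.e. ruling out that the distinct disjoint sets $Z_i$ together pick off every private of $K$, is the genuine obstacle and is where the association of each $z_i$ with its own $x_i$ has to be exploited. Finally, case~iii) ($R_j=\emptyset$) is the residual case; here I would show that the simultaneous failure of i) and ii) forces $Q_j\setminus U$ into a configuration the disjoint cliques $Z_i$ cannot cover at once — privates split across the two cliques of $N(v_j)$, or $v_j$ together with a clique vertex, or a vertex of $N(v)\setminus N[u]$ that becomes a fresh private of $v_j$ — again using diamond-freeness together with the odd-cycle restriction of Lemma~\ref{lem:cliques}. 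Combining the three cases yields $(Q_j\setminus U)\setminus N[Z]\neq\emptyset$, so $v_j$ has a private and $D$ is minimal.
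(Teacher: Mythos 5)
Your setup is sound and in places cleaner than the paper's: the identity $P_D[v_j]=(Q_j\setminus U)\setminus N[Z]$, the observation that the failure of Case~1~i) is exactly $Q_j\setminus U\neq\emptyset$, the privates $v$ for $u$ and $x_i$ for $z_i$, and your treatment of case~i) of the definition of $R_j$ are all correct. But there is a genuine gap exactly where you flag it: in cases~ii) and~iii) you never actually exhibit a vertex of $Q_j\setminus U$ missed by $Z$, and you explicitly defer the ``clique-covering count'' as an unresolved obstacle. Note also that your announced strategy --- that every vertex able to dominate a vertex of $Q_j\setminus U$ has been placed in $R_j$ --- cannot work as stated, since $R_j\subseteq N(v_j)$ (and $R_j=\emptyset$ in case~iii)), while a priori a dominator could lie outside $N(v_j)$. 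What you are missing is the structural step the paper proves first, which eliminates precisely those outside dominators: \emph{any} $z_i\in Z$ that dominates a vertex $y\in Q_j\setminus U$ with $y\notin N[v]$ must itself be adjacent to $v_j$. The argument is short: by the definition of $Z_i$, $z_i$ has a neighbour $z\in D^*\setminus\{v\}$; if $z_i\notin N(v_j)$ then $z\neq v_j$; among the three neighbours $y,x_i,z$ of $z_i$, claw-freeness of the line graph $G$ forces an edge, but $yx_i\notin E(G)$ (otherwise $y\in N(x_i)\setminus N[v]\subseteq U$) and $zx_i\notin E(G)$ (otherwise $x_i$ would be dominated by $D^*\setminus\{v\}$), so $zy\in E(G)$, contradicting $y\in Q_j$, i.e.\ that $y$ has no dominator in $D^*\setminus\{v,v_j\}$. (Vertices $y\in Q_j\setminus U$ lying in $N(v)$ are even easier: $y\in N(v)\setminus N[u]$ and $y\neq x_i$, so by the disjointness statement of Lemma~\ref{lem:cliques} no $z_i\in N(x_i)\setminus N[v]$ is adjacent to $y$ at all.)

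Once dominators are confined to $Z\cap N(v_j)$, your case analysis closes with no counting whatsoever. In case~ii), a dominator of a vertex of $P\cap K$ would have to lie in $K=R_j$ (the other clique of $N(v_j)$ sends no edges into $K$ by Lemma~\ref{lem:cliques}), which is excluded by $Z_i\cap R_j=\emptyset$. In case~iii), the failure of the hypotheses of i) and ii) is exploited as in the paper: either the surviving privates split across the two cliques of $N(v_j)$, whence two dominators $z_i,z_{i'}\in N(v_j)$ yield the induced $5$-cycle $v_j,z_i,x_i,x_{i'},z_{i'}$ forbidden by Lemma~\ref{lem:cliques}; or there is a vertex $x\in N(v_j)\cap(N(v)\setminus N[u])$ whose only dominator in $D^*\setminus\{v\}$ is $v_j$, and such an $x$ is then itself a private for $v_j$ with respect to $D$ (it is not adjacent to $u$, not adjacent to any $z_i$ by the disjointness in Lemma~\ref{lem:cliques}, and not dominated by $D^*\setminus\{v,v_j\}$ by choice), contradicting the assumption that $v_j$ has none. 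By contrast, your route in case~ii) --- dominators outside $N(v_j)$, each hitting at most one vertex of $K$ by diamond-freeness, followed by a covering argument over the disjoint sets $Z_i$ --- is a dead end: the claw argument shows such dominators do not exist at all, and without it the disjointness of the $Z_i$ alone does not give the contradiction you need.
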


\begin{proof}[Proof of Claim~\ref{cl:A}]
Let $D\rq{}=(D^*\setminus\{v\})\cup\{u\}$. 
Recall that by Case 1 i), if there is a vertex $x\in D^*\setminus\{v\}$ such that $N[x]\subseteq N[ (D^*\setminus\{v,x\})\cup\{u\}]\cup U$, then  $\mathcal{D}=\emptyset$. 
Hence, for each $x\in D\rq{}$ such that $x\neq u$, at least one vertex of $N[x]$ is not dominated by $D\rq{} \sm \{x\}$ and is not included in $U$.
Therefore, it is possible to extend $D\rq{}$ to a minimal dominating set, and we do it by including $Z$.
Let $D=D\rq{}\cup Z$. $D$ is a dominating set of $G$, because $P_{D^*}[v]$ is dominated by $Z\cup\{u\}$. 
To obtain a contradiction, assume that $D$ is not minimal. Then there is a vertex $v_j\in D$ such that $v_j$ has no privates. 
Vertex $v$ is a private for $u$, each vertex $x_i$ is a private for $z_i$, and $x_t$ is a private for $w$ in Case 3. Hence, $v_j\in D^*\setminus\{v\}$.
As $U$ is dominated by any minimal dominating set $D$ generated by our algorithm from $D^*$, we can conclude that at least one vertex of $N[v_j]$ is not dominated by $D\rq{} \sm \{v_j\}$ and is not included in $U$, and 
 $P=P_{D^*}[v_j]\setminus U\neq\emptyset$. 
We have that $P\subseteq N[Z]$ as $P_D[v_j] = \emptyset$.
There is a vertex $z_i\in Z$ such that $z_i$ is adjacent to a vertex $y\in P$, and let us assume for now that $z_iv_j\notin E(G)$.
By the construction of the sets $Z_i$, $z_i$ is adjacent to a vertex $z \in D^*\setminus\{v\}$.
Because $z_iv_j\notin E(G)$, $z\neq v_j$.
Since $x_i$ is dominated only by $z_i$, 
$x_i$ not dominated by $z$. Then either $zy\in E(G)$ or $yx_i\in E(G)$, since $G$ is a line graph. 
Since $y \notin U$ by definition, $yx_i\notin E(G)$ as 
%PG:
$(N[x_i]\setminus N[v])\cup\{x_i\} \subseteq U$.
Thus, we can conclude that $zy\in E(G)$, but then $y$ is dominated by at least two vertices of $D^*\setminus\{v\}$, contradicting that $y \in P_D^*(v_j)$. 
As this argument holds for any vertex $y \in P$, we get that $P$ is dominated by  $Z \cap N(v_j)$.

If $P=\{v_j\}$ and $N(v_j)\cap (N(v)\setminus N[u])=\emptyset$, then $R_j=N(v_j)$,  but this contradicts the definition of $Z_i$, as $Z_i \cap R_i = \emptyset$.
Hence if $P=\{v_j\}$ then there is a vertex $x\in N(v_j)\cap (N(v)\setminus N[u])$. In this case,
because $x$ is not a private for $v_j$ with respect to $D$, $x$ is dominated by a vertex $y\in D$. 
As $x$ is adjacent to both $v,v_j \in D^*$, it is clear that $x \not\in \{x_1,x_2,\ldots, x_k\}$, and by Lemma~\ref{sec:linebip} 
we can conclude that $\{x,x_1,x_2,\ldots, x_k\}$ is a clique and that $N(x) \cap N(x_i) = \emptyset$ for $i \in [1, \ldots, k]$ and thus $y \not\in Z_i$.
Hence, $y\in D^*\setminus\{v,v_j\}$. By Lemma~\ref{lem:cliques}, $yv_j\in E(G)$ and $v_j\notin P$, contradicting that $P=\{v_j\}$.
It follows that  $P$ has at least one vertex in $N(v_j)$.  
By Lemma~\ref{lem:cliques}, $N(v_j)$ is a union of at most two cliques. 
Suppose that $N(v_j)$ is a union of two non-empty cliques $K_1,K_2$ such that there are no edges between $K_1$ and $K_2$, and suppose that $P$ intersects both $K_1$ and $K_2$.
Then there are vertices $z_i,z_{i\rq{}}\in Z$ such that $z_i\in K_1$ and $z_{i\rq{}}\in K_2$. Then we conclude that $G$ has an induced cycle 
$v_j,z_i,x_i,x_i',z_i'$ on 5 vertices, which contradicts Lemma~\ref{lem:cliques}. 
Therefore, there is a maximal clique $K$ in $N(v_j)$ such that $P\subseteq K\cup\{v_j\}$, and there is $z_i\in Z$  such that $z_i\in K$.
Since $z_i \notin R_j$, $v_j$ is adjacent to a vertex $x \in N(v)\setminus N[u]$ such that $x$ is not adjacent to a vertex of $D^*\setminus\{v, v_j\} $. 
Again, as $x \in N(v) \cap N(v_j)$, vertex $x \not\in P_D^*(v)$, and by Lemma~\ref{sec:linebip} we have that $N(x) \cap N(x_i) = \emptyset$ for $1 \le i \le k$.
Thus, we can conclude that $x$ is a private for $v_j$ with respect to $D$.
The obtained contradiction shows that $D$ is minimal.
\end{proof}

\begin{claim}\label{cl:B}
Each $D\in\mathcal{D}$ is a child of $D^*$ with respect to flipping $u$ and $v$.
\end{claim}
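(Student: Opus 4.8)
The plan is to show that the unique parent of $D$ with respect to flipping $u$ and $v$ is exactly $D^*$. Fix $D\in\mathcal{D}$, so that $D=(D^*\setminus\{v\})\cup\{u\}\cup Z$ for one of the sets $Z$ produced in Case~2 or Case~3. First I would check that the parent construction applies to $D$ with this choice of $u$ and $v$. Since $v$ is isolated in $G[D^*]$ we have $u\notin D^*$, and since each $x_i\notin N[u]$ we have $u\notin Z$; hence $D\setminus\{u\}=(D^*\setminus\{v\})\cup Z$. Vertex $u$ has a neighbour in $D$ (a vertex of $D^*\setminus\{v\}$ in Case~2, and the vertex $w\in N(u)\cap Z_t$ in Case~3), so $G[D]$ contains an edge incident to $u$. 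Moreover $v\notin D$, and its only neighbour in $D$ is $u$, because no vertex of $D^*\setminus\{v\}$ and no vertex of $Z\subseteq\bigcup_i(N(x_i)\setminus N[v])$ lies in $N[v]$; thus $v\in P_D(u)$. Since $G$ is a line graph, Lemma~\ref{lem:parent-line} gives $X_{uv}=\emptyset$, so the intermediate dominating set of the parent construction is $(D\setminus\{u\})\cup\{v\}=D^*\cup Z$. It therefore remains to prove that greedy removal applied to $D^*\cup Z$ returns $D^*$, i.e.\ that $Z_{uv}=Z$.

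For this I would first reduce to a statement about single vertices. If greedy removal deletes no vertex of $D^*$, then it deletes only vertices of $Z$ and halts at a minimal dominating set $D^{\star\star}$ with $D^*\subseteq D^{\star\star}\subseteq D^*\cup Z$; as $D^*$ already dominates $G$, any vertex of $D^{\star\star}\setminus D^*\subseteq Z$ would be redundant, contradicting minimality, so $D^{\star\star}=D^*$ and $Z_{uv}=Z$, as required. Next, since deleting vertices only shrinks closed neighbourhoods, a vertex that has a private with respect to $D^*\cup Z$ keeps that private throughout the whole removal process and is consequently never deletable. Hence the entire claim reduces to showing that \emph{every vertex of $D^*$ has a private vertex with respect to $D^*\cup Z$}.

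The vertex $v$ is private for itself with respect to $D^*\cup Z$, because no vertex of $Z$ lies in $N[v]$, so $v$ is never removed. For $v_j\in D^*\setminus\{v\}$ I would start from a private $q_j$ of $v_j$ with respect to $D^*$ and track whether it survives; it fails to survive exactly when $q_j\in N[Z]$. Here the sets $R_j$ enter: they are defined so that excluding $R_j$ from every $Z_i$ prevents $Z$ from dominating all of $v_j$'s $D^*$-privates at once. The key structural input is Lemma~\ref{lem:cliques}, which forces the relevant privates of $v_j$ to lie inside a single maximal clique of $N(v_j)$, so that forbidding one clique suffices. The genuinely delicate situation is when $v_j$'s last available private would be a vertex $x\in N(v)\setminus N[u]$ adjacent to $v_j$: such an $x$ is dominated by $v$ in $D^*\cup Z$ and so cannot certify privacy of $v_j$ there, even though it could serve as a private in $D$ (where $v$ is replaced by $u$). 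The sets $S_j$, together with the index restriction $\ell>j$ that mirrors the smallest-index rule of greedy removal, are built precisely to rule out the choices of $Z$ that would leave $v_j$ with no other private in this scenario.

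The main obstacle, and the part requiring real work, is exactly this last case: one must verify that forbidding $R_j$ and $S_j$ really does guarantee a surviving private for each $v_j$, uniformly over all admissible choices of $Z$. I expect the argument to be an exhaustive case analysis leaning on the full structural toolkit of line graphs of bipartite graphs --- that each neighbourhood is a union of at most two cliques, diamond-freeness (Observation~\ref{obs:diamond}), and the absence of induced $5$-cycles (Lemma~\ref{lem:cliques}) --- to exclude the configurations in which $x$ and a competing private of $v_j$ could both be dominated by $Z$, and to confirm that the index-based tie-breaking of greedy removal cannot delete $v_j$ ahead of the intended vertices of $Z$.
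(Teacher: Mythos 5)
You set the problem up correctly: the parent of $D$ is obtained by greedy removal from $(D\setminus\{u\})\cup\{v\}=D^*\cup Z$ (using $X_{uv}=\emptyset$ from Lemma~\ref{lem:parent-line}), and your two preliminary reductions are sound --- if greedy removal deletes no vertex of $D^*$ then minimality forces $Z_{uv}=Z$, and a vertex having a private with respect to $D^*\cup Z$ can never be deleted. The gap is that the statement you then set out to prove, namely that \emph{every} vertex of $D^*$ has a private with respect to the \emph{full} set $D^*\cup Z$, is strictly stronger than what is true, and it is in fact false for some $D\in\mathcal{D}$. Being a child is an order-dependent property: greedy removal deletes vertices in increasing index order, so a vertex $v_j\in D^*$ all of whose $D^*$-privates are dominated by $Z$ can still survive, because the offending vertices of $Z$ have smaller index than $v_j$, are deleted before index $j$ is ever examined, and thereby restore a private for $v_j$ in mid-process. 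This is precisely why $S_j$ forbids only the neighbours $v_\ell$ of $v_j$ with $\ell>j$; in your order-free reformulation that index restriction does nothing, which is the signal that the reduction has discarded exactly the information the definitions are built around. Your final paragraph then defers the remaining structural analysis rather than carrying it out, but no case analysis can succeed, since the target statement is false.

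A concrete counterexample: let $H$ be the bipartite graph with edges $v=a_0b_0$, $u=a_0b_1$, $d=a_4b_1$, $p=a_4b_2$, $x_1=a_1b_0$, $x=a_2b_0$, $f=a_2b_3$, $z=a_1b_3$, $q=a_3b_3$, let $G=L(H)$, take $D^*=\{v,f,d\}$ (a minimal dominating set of $G$ with $v$ isolated in $G[D^*]$), flip at $u$, and fix an order in which $z$ precedes $f$. Then $P_{D^*}(v)\setminus N[u]=\{x_1\}$, the vertex $f$ satisfies both conditions defining $S_f$, so $S_f$ consists of the larger-index neighbours $x,q$ of $f$ but not $z$; one checks $R_f=\emptyset$, $R_d=\{p\}$, $S_d=\emptyset$, hence $Z_1=\{z\}$, Case 1 does not trigger, and Case 2 (as $u$ is adjacent to $d$) puts $D=\{u,z,f,d\}$ into $\mathcal{D}$. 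Here $f$ has \emph{no} private with respect to $D^*\cup Z=\{v,f,d,z\}$: its closed neighbourhood $\{f,x,z,q\}$ is covered since $f,q,z\in N[z]$ and $x\in N[v]$. Nevertheless $D$ is a child of $D^*$, as Claim~\ref{cl:B} asserts: from $\{v,f,d,z\}$ both $z$ and $f$ are initially deletable, greedy removal deletes $z$ first because it has the smaller index, after which $z$ and $q$ again become privates of $f$ and the process stops at $D^*$. (With the opposite order the same $D$ would \emph{not} be a child of $D^*$, and correspondingly $z\in S_f$ would force $\mathcal{D}=\emptyset$.) The paper's proof respects this order-sensitivity: it argues by contradiction inside the removal process, taking the true parent $D^{**}$ of $D$, letting $v_i$ be the smallest-index vertex of the symmetric difference of $D^*$ and $D^{**}$, using the increasing-index property of greedy removal to conclude that $v_i\in Z_{uv}\setminus Z\subseteq D^*$ and that every vertex of $Z\setminus Z_{uv}$ adjacent to $v_i$ has index larger than $i$, and only then invoking the structural machinery ($R_j$, $S_j$, claw-, diamond- and $C_5$-freeness) to rule $v_i$ out. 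Any repair of your argument must prove the order-aware statement --- each $v_j\in D^*$ has a private with respect to $D^*\cup\{z\in Z : z \text{ has index greater than } j\}$ --- which is essentially the paper's argument in disguise.
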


\begin{proof}[Proof of Claim~\ref{cl:B}]
To obtain a contradiction, assume that there is $D\in\mathcal{D}$ that is not a child of $D^*$. 
Let $Z$ be the set associated with $D$ when $D$ was generated.
Denote by $D^{**}$ the parent of $D$ with respect to flipping $u$ and $v$. Hence, $D=(D^{**}\setminus\{v\})\cup\{u\}\cup Z_{uv}$ and $Z\neq Z_{uv}$.
Clearly, $D^* \setminus D^{**} \subseteq Z_{uv}$, 
$D^*\setminus D^{**} \neq\emptyset$,  
$D^{**}\setminus D^{*}\subseteq Z$, and
$D^{**}\setminus D^{*} \neq\emptyset$. 
Suppose that $W=Z\cap Z_{uv}$ and $D\rq{}=((D\setminus\{u\})\cup\{v\})\setminus W$.
Observe that $D\rq{}$ is a dominating set of $G$, however it is not minimal.
Let $v_i$ be the vertex with the smallest index $i$ in   
$(D^{**}\setminus D^{*})\cup(D^{*}\setminus D^{**})$. Since $Z_{uv}$ is chosen by greedy removal of vertices, $v_i\in Z_{uv}$.
Notice also that $v_i$ has no privates with respect to $D\rq{}$ but $v_i$ has privates with respect to $D^*=D\rq{}\setminus (Z\setminus Z_{uv})$, and
$v_i$ is in $N(x)$ for some $x\in N(v)\setminus N[u]$, where $x\notin 
P_{D^*}(v)\setminus N[u]=\{x_1,\ldots,x_k\}$.
We consider two cases.

\medskip
\noindent
{\it Case a}).  Assume that $v_i$ is not adjacent to a vertex of $Z\setminus W$. 
Then because $v_i$ has no privates with respect to $D\rq{}$, $v_i$ is adjacent to another vertex $v_{i\rq{}}\in D^*\setminus\{v\}$.
Let $y\in P_{D^*}[v_i]$ and assume that $v_{j_1},\ldots,v_{j_s}\in Z\setminus Z_{uv}$ are the vertices adjacent to $y$.
Clearly,  these are the only candidates to dominate $y$.
By the construction of $Z$, each vertex $v_{j_\ell}$ is a unique element of $D$ in the neighborhood of some vertex $x_{p_\ell}\in P_{D^*}(v)\setminus N[u]$.
Vertex $y\neq v_i$ and $y\notin \{v_{j_1},\ldots,v_{j_s}\}$
as there are no edges between $v_i$ and vertices of $Z \setminus W$.
Recall that $v_{j_1},\ldots,v_{j_s}$ are dominated by $D^*$, and each vertex $v_{j_\ell}$ is a single element of $D \cap N(x_{p_\ell})$.
Since $G$ is the line graph of a bipartite graph, $y$ is adjacent either to $x$ or to some vertices $x_{p_\ell}$, 
since otherwise $xv_iyv_{j_\ell}x_{p_\ell}$ would be an induced cycle on 5 vertices, which contradicts Lemma~\ref{lem:cliques}.
Note that $y$ cannot be adjacent to both $x$ and some vertex $x_{p_\ell}$,
because otherwise 
the vertices $v_i,y,x,x_\ell$ induce a diamond, contrary to Observation~\ref{obs:diamond}.

Suppose that $y$ is adjacent to $x$, 
and hence not to any vertex of $\{x_{p_1},\ldots,x_{p_s}\}$. 
Then $\{y\}\cup\{v_{j_1},\ldots,v_{j_s}\}$ is a clique, 
because if some $v_{j_\ell},v_{j_{\ell'}}$ are not adjacent, then $N(y)$ would contain three pairwise non-adjacent vertices $x,v_{j_\ell},v_{j_{\ell'}}$ for $s> 1$, 
inducing a claw. 
By the construction of $Z$, each $v_{j_\ell}$ is dominated by a vertex of $D^*$. Then any vertex that dominates $v_{j_\ell}$, is in the maximal clique that contains $\{y\}\cup\{v_{j_1},\ldots,v_{j_s}\}$, and $y$ remains dominated after the removal of $v_{j_1},\ldots,v_{j_s}$, giving a contradiction.
Hence, $y$ is
not adjacent to $x$ but
adjacent to some vertices of $\{x_{p_1},\ldots,x_{p_s}\}$. 
Assume that $yx_{p_1}\in E(G)$. The neighborhood of $y$ is a union of at most two disjoint cliques $K_1$ and $K_2$ such that there are no edges between vertices of $K_1$ and vertices of $K_2$, by Lemma~\ref{lem:cliques}. We conclude that $s=1$.
Since $G$ is a line graph, either $v_{i\rq{}}x\in E(G)$ or $v_{i\rq{}}y\in E(G)$, as otherwise $y,v_{i\rq{}},x,v_i$ induces a claw.
If $v_{i\rq{}}x\in E(G)$, then $N[v_i]\subseteq N[ (D^*\setminus\{v,v_i\})\cup\{u\}]\cup U$, and we would set $\mathcal{D}=\emptyset$ by Case 1 i) of the algorithm.
If $v_{i\rq{}}y\in E(G)$, then $y$ cannot become a private neighbor of $v_i$ with respect to $D^*$ which is obtained from $D\rq{}$ by the removal of $Z$, again giving a contradiction.

\medskip
\noindent
{\it Case b}).  Assume that $v_i$ is adjacent to at least one vertex of $Z\setminus W$. Denote by $v_{j_1},\ldots,v_{j_s}$ these vertices. 
Observe that $i<j_1,\ldots,j_s$, because otherwise  greedy removal would remove these vertices first. 
It follows that $v_{j_1},\ldots,v_{j_s}\notin S_i$, and hence for each vertex $v_{j_t}$, at least one of the conditions in the definition of $S_i$ does not apply.  
Recall that $v_i\in D^*\setminus\{v\}$ and $v_i$ is adjacent to $x\in N(v)\setminus N[u]$. If 
(1) $x$ has no other neighbors in $D^*\setminus\{v\}$, and 
(2) $(P_{D^*}[v_i]\cap N(x))\setminus U\subseteq \{v_i\}$, then $v_{j_1},\ldots,v_{j_s}$ would be in $S_i$.
Hence, at least one of the conditions (1) and (2) is not fulfilled.

Suppose that condition (1) does not hold and there is a vertex $v_{i\rq{}}\in D^*\setminus\{v\}$ such that 
$v_{i\rq{}}\neq v_i$ and $v_{i\rq{}}\in N(x)\setminus N[v]$ as $v$ have no neighbors in $D^*$.
Since $G$ is a line graph, $v_iv_{i\rq{}}\in E(G)$. 
The set $N(v_i)$ is a union of at most two cliques and $v_{i'},v_{j_1}$ are in distinct cliques contained in $N(v_i)$.
Hence,  either $N[v_i]\subseteq N[(D^*\setminus\{v,v_i\})\cup\{u\}]\cup U$, and we would set $\mathcal{D}=\emptyset$ (see Case 1 i), or $v_{j_1}\in R_i$.
To see this, observe that if 
$N[v_i]\setminus( N[(D^*\setminus\{v,v_i\})\cup\{u\}]\cup U)\neq \emptyset$, then
$P_{D^*}[v_i]\setminus U\neq\{v_i\}$ as $v_{i\rq{}}v_i\in E(G)$.
Also $x$ is the unique vertex of $N(v_i)\cap (N(v)\setminus N[u])$, because $G$ is diamond-free by Observation~\ref{obs:diamond}.
Then $P_{D^*}(v_i)\setminus U\subseteq K$, where $K$ is the maximal clique in $N(v_i)\setminus 
N(x)$, 
and $v_{j_1}\in K$.
But by the definition, $R_i=K$, which gives a contradiction.

Therefore, we can assume now that condition (1) holds but not condition (2), and $(P_{D^*}[v_i]\cap N(x))\setminus U$ is not a subset of $\{v_i\}$. 
Then 
there is a vertex $y\in P_{D^*}[v_i]$ such that $y\neq v_i$ and $yx\in E(G)$.
Let $v_{p_1},\ldots,v_{p_l}$  be the vertices of $Z\setminus Z_{uv}$ adjacent to $y$.
By the construction of $Z$, each vertex $v_{p_r}$ is a unique element of $D$ in the neighborhood of some vertex $x_{q_r}\in P_{D^*}(v)\setminus N[u]$.
Recall that $v_{p_1},\ldots,v_{p_l}$ are dominated by $D^*$ and  each vertex $v_{p_r}$ is a single element of $D$ in $N(x_{q_r})$.
Since $G$ is the line graph of a bipartite graph, by 
Observation~\ref{obs:diamond},  $y$ is not adjacent to  $x_{p_1},\ldots,x_{p_l}$.
By the construction of $Z$, each $v_{p_r}$ is dominated by a vertex of $D^*$. Then any vertex that dominates $v_{p_r}$ is in the maximal clique that contains $\{y\}\cup\{v_{p_1},\ldots,v_{p_l}\}$ and $y$ remains dominated after the removal of $v_{p_1},\ldots,v_{p_l}$, resulting in a contradiction.
\end{proof}

\begin{claim}\label{cl:C}
If $D$ is a child of $D^*$ with respect to flipping $u$ and $v$, then  $D\in\mathcal{D}$.
\end{claim}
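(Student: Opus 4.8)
The plan is to show that the specific removal set $Z_{uv}$ produced by the parent construction is exactly one of the sets $Z$ that the algorithm enumerates, so that $D=(D^*\setminus\{v\})\cup\{u\}\cup Z_{uv}$ lands in $\mathcal{D}$. First I would invoke Lemma~\ref{lem:parent-line}, which for the line graph $G$ gives $X_{uv}=\emptyset$ together with a bijection between $\{x_1,\dots,x_k\}=P_{D^*}(v)\setminus N[u]$ and $Z_{uv}=\{z_1,\dots,z_k\}$, where $z_i$ is the unique neighbor of $x_i$ in $Z_{uv}$. Hence $D=(D^*\setminus\{v\})\cup\{u\}\cup Z_{uv}$, and it suffices to prove that $Z_{uv}$ is generated in Case~2 or Case~3.

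Next I would rule out Case~1. For condition~i), since $D$ is a minimal dominating set, every $x\in D^*\setminus\{v\}\subseteq D$ has a private $p$ with respect to $D$; because $p\notin N[\{u\}\cup Z_{uv}]$ and $U\subseteq N[Z_{uv}\cup\{u\}]$ by Lemma~\ref{lem:U}, we get $p\notin U$ and $p\notin N[D^*\setminus\{v,x\}]$, so $N[x]\not\subseteq N[D^*\setminus\{v,x\}]\cup U$. Conditions ii) and iii) are handled once the central membership fact $z_i\in Z_i$ (for every $i$) is established: ii) fails because no $Z_i$ is empty, and iii) fails because in a child $u$ is dominated, hence $u$ is either adjacent to a vertex of $D^*\setminus\{v\}$ or to some $z_i\in Z_i$, so one of the two conjuncts of iii) breaks.

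The heart of the proof is showing $z_i\in Z_i$. The easy membership requirements, namely $z_i\in N(x_i)\setminus N[v]$ and $z_i$ adjacent to a vertex of $D^*\setminus\{v\}$, follow directly from Lemmas~\ref{lem:child-obs} and \ref{lem:parent-line}. The work is to show $z_i\notin R_j$ and $z_i\notin S_j$ for all $j$. For $z_i\notin R_j$ I argue by contradiction: if $z_i\in R_j$ then $v_j\in D^*\setminus\{v\}$ falls under subcase i) or ii) of the definition of $R_j$ and $z_i\in N(v_j)$. Examining the private $p$ of $v_j$ with respect to $D$ (which satisfies $p\notin U$, $p\notin N[D^*\setminus\{v,v_j\}]$, and $p\neq v_j$ since $z_i\in D$ already dominates $v_j$), I would first argue $p\notin N(v)$ using the subcase hypotheses, hence $p\in P_{D^*}[v_j]\setminus U$; this contradicts the cardinality constraint in subcase i), whereas in subcase ii) it places $p$ in the clique $R_j$ together with $z_i$, forcing them adjacent and contradicting $p\notin N[Z_{uv}]$. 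The $S_j$ argument is more delicate and is where I expect the \emph{main obstacle} to lie: I would analyze the greedy removal that produces $D^*$ from $D'=D^*\cup Z_{uv}$, and show that $z_i\in S_j$ (with $v_j$ of smaller index satisfying the $S_j$-conditions) would make $v_j$ removable before $z_i$ during the procedure --- the witness neighbor $x\in N(v)\setminus N[u]$ of $v_j$ stays dominated by $v$ throughout, and the remaining potential privates of $v_j$ are pinned into $U$ or $N(x)$ by the conditions --- contradicting $v_j\in D^*$. This step leans heavily on the structural properties of line graphs of bipartite graphs (Lemma~\ref{lem:cliques} and Observation~\ref{obs:diamond}: neighborhoods split into two cliques, absence of diamonds and of induced $5$-cycles), which is what makes the bookkeeping intricate.

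Finally, with $z_i\in Z_i$ established for all $i$, I would verify that $Z_{uv}$ is actually enumerated. If $u$ is adjacent to a vertex of $D^*\setminus\{v\}$ this is immediate from Case~2. Otherwise $u$ is dominated in $D$ by some $z_t\in Z_{uv}$; taking $t$ to be the smallest index with $z_t\in N(u)$, one checks $j\le t\le j'$ (if $t>j'$ then $Z_{j'}\subseteq N(u)$ would give $z_{j'}\in N(u)$ with $j'<t$, contradicting minimality of $t$), and that the choice $w=z_t$, $z_i\in Z_i\setminus N(u)$ for $i<t$, and $z_i\in Z_i$ for $i>t$ reproduces exactly $Z_{uv}$ in Case~3. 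Since distinct sets $Z$ yield distinct outputs, this places $D$ in $\mathcal{D}$ and completes the claim.
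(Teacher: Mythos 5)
Your overall architecture is the same as the paper's: reduce the claim to showing $z_i\in Z_i$ for every $i$ (using Lemmas~\ref{lem:child-obs}, \ref{lem:parent-line} and \ref{lem:cliques} for the bijection between $\{x_1,\ldots,x_k\}$ and $Z_{uv}$), rule out the three conditions of Case~1, and verify that $Z_{uv}$ is actually one of the sets enumerated in Case~2 or Case~3. Your handling of Case~1~i), your proof that $z_i\notin R_j$ (organized around a single private $p$ of $v_j$ with respect to $D$, with $p\notin U$ following from Lemma~\ref{lem:U} and $p\neq v_j$ from $z_i\in N(v_j)\cap D$), and your index bookkeeping for Case~3 (which the paper dispatches in one sentence) are all correct and consistent with the paper's arguments.

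The genuine gap is exactly where you suspect it: the step $z_i\notin S_j$. Your setup is right --- if $z_i=v_s\in S_j$ then $s>j$, so in the greedy removal that turns $(D\setminus\{u\})\cup\{v\}=D^*\cup Z_{uv}$ into $D^*$, vertex $v_j$ is examined before $z_i$; hence $v_j$ must retain a private with respect to $D^*\cup\{z_i\}$, and the contradiction must come from showing no such private exists. But your proposed mechanism, that the potential privates are ``pinned into $U$ or $N(x)$ by the conditions,'' does not close the argument. Condition (b) in the definition of $S_j$ only converts privates lying in $K=N(x)\setminus N[v]$ into members of $U$, and membership in $U$ is not by itself a contradiction: Lemma~\ref{lem:U} guarantees that $U$ is dominated by $Z_{uv}\cup\{u\}$, but except for $z_i$ none of those vertices belongs to $D^*\cup\{z_i\}$, so a private $y\in K\cap U$ for $v_j$ with respect to $D^*\cup\{z_i\}$ is entirely consistent with everything established up to that point. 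Ruling out such a $y$ is where the paper spends the bulk of its proof: it splits on whether $y$ is dominated by some $z_t$ with $t\neq i$ or $y\in N[u]$, and in each branch produces an induced diamond or claw (Observation~\ref{obs:diamond}, Lemma~\ref{lem:cliques}), crucially invoking properties of the child $D$ itself --- that $u$ must be dominated by some $w\in D$, that each $x_t$ is private for $v$ with respect to $D^*$, and that the vertex $y'\in D^*$ dominating $z_t$ is forced to equal $v_j$. It then separately argues that any private $y\notin K$ must be adjacent to $z_i$ via the two-clique structure of $N(v_j)$, hence is not private after all. None of this forbidden-subgraph analysis appears in your proposal, and it is not routine bookkeeping that follows from the stated conditions; without it, the claim that $v_j$ would have been removed before $z_i$ --- and therefore the whole claim --- remains unproved.
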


\begin{proof}[Proof of Claim~\ref{cl:C}]
Let $D$ be a child of $D^*$ with respect to flipping $u$ and $v$. Then $D=(D^*\setminus\{v\})\cup\{u\}\cup Z_{uv}$.
By Lemmas~\ref{lem:child-obs}, \ref{lem:parent-line} and \ref{lem:cliques}, $Z_{uv}=\{z_1,\ldots,z_k\}$, where
$z_i$ is in the clique $N(x_i)\setminus N[v]$
for $\{x_1,\ldots,x_k\}=P_{D^*}(v)\setminus N[u]$, and the cliques $N(x_i)\setminus N[v]$ are disjoint.
Also by Lemma~\ref{lem:child-obs}, each $z_i$ is adjacent to a vertex of $D^*\setminus\{v\}$.
We show that each $z_i\in Z_i$, and thus $Z_i \neq \emptyset$.

We prove that $z_i\notin R_j$ for any $j\in\{1,\dots,n\}$. To obtain a contradiction, assume that $z_i\in R_j$. Particularly, it means that $R_j\neq\emptyset$.
Recall that in this case $v_j\in D^*\setminus\{v\}$. Furthermore, if $P_{D^*}[v_j]\setminus U=\{v_j\}$ and $N(v_j)\cap (N(v)\setminus N[u])=\emptyset$, then $R_j=N(v_j)$. Moreover, if $P_{D^*}[v_j]\setminus U\neq\{v_j\}$ and (i) for any vertex $x\in N(v_j)\cap (N(v)\setminus N[u])$, $x$ is dominated by at least two vertices of $D^*\setminus\{v\}$, and (ii) $P_{D^*}(v_j)\setminus U\subseteq K$, where $K$ is a maximal clique in $N(v_i)$, then $R_i=K$.
We consider two cases.

Assume first that $P_{D^*}[v_j]\setminus U=\{v_j\}$ and
$N(v_j)\cap (N(v)\setminus N[u])=\emptyset$. Then all neighbors of $v_j$ that are not in $U$ are dominated by $D^*\setminus\{v,v_j\}$.
Therefore, all the neighbors of $v_j$ are dominated by $D\setminus\{v_j\}$.
As $z_i \in R_j$ and $R_j = N(v_j)$, in this case vertex $v_j$ has no private neighbors, contradicting the minimality of $D$.

Now let $P_{D^*}(v_j)\setminus U \neq \emptyset$. If $N(v_j)\cap (N(v)\setminus N[u])=\emptyset$
and $P_{D^*}(v_j)\setminus U\subseteq K$, where $K$ is a maximal clique in $N(v_j)$.
Then $z_i\in K$ as $R_j = K$, and we can again conclude that $D$ is not a minimal dominating set.
Suppose that $N(v_i)\cap (N(v)\setminus N[u])\neq\emptyset$ and each $x\in N(v_i)\cap (N(v)\setminus N[u])$ is dominated by
at least two vertices of $D^*\setminus\{v\}$.
If $P_{D^*}(v_j)\setminus U\subseteq K$, where $K$ is a maximal clique in $N(v_j)$, then $z_i\in K$.
As $P_D[v_j] \neq \emptyset$ and there is a vertex $z_i \in K$, we can conclude that $P_D[v_j] \subseteq N(v_j)\cap (N(v)\setminus N[u])$,
but all these vertices are dominated by $D^*\setminus\{v,v_j\}$, which contradicts the minimality of $D$.

The next step is to show that $z_i\notin S_j$ for $j\in\{1,\ldots,n\}$.
To obtain a contradiction, let $z_i\in S_j$ for some $j\in\{1,\ldots,n\}$.
Clearly, $S_j\neq \emptyset$ in this case, and recall that
$v_j$ is vertex of $D^*\setminus\{v\}$ such that (a) $v_j$  adjacent to a vertex $x\in N(v)\setminus N[u]$ that has no other neighbors in $D^*\setminus\{v\}$, and
(b) $(P_{D^*}[v_j]\cap N(x))\setminus U\subseteq \{v_j\}$.
Then $z_i=v_s$ for $s>j$ and $v_sv_j\in E(G)$.
It follows that $v_j$ is in the clique $K=N(x)\setminus N[v]$, and $K\cap D^*=\{v_j\}$,
as $N(x) \cap D^* = \{v,v_j\}$.
Consider $D'=D^*\cup\{z_i\}$. Vertex $v_j$ has a private with respect to this set, as otherwise $v_j\in D^*\setminus\{v\}$ would be removed by greedy removal of vertices
when reaching parent $D^*$ from $D$.
Notice that $v_j$ is not a private for itself because it is dominated by $z_i$.
Assume that there is a private vertex $y$ for $v_j$ such that $y\in K$.
By condition (b), all privates in $K$ should be in $U$. Hence, $y$ is either
dominated by some vertex $z_t$ or $y$ is adjacent to $u$.
Suppose first that there is a vertex $z_t$ adjacent to $y$ and notice that $z_t$ dominates a vertex $x_t$ where $x \neq x_t$.
Vertices $x$ and $x_t$ are adjacent as $N[v] \sm N(u)$ is a clique by Lemma~\ref{lem:cliques}.
By Lemma \ref{lem:child-obs}, $z_t$ is not adjacent to $v$, and $z_t$ is not adjacent to $x$, as otherwise the vertices 
$v,x_t,z_t,x$  would induce a diamond and violate Observation~\ref{obs:diamond}.
Furthermore, $y$ is not adjacent to $x_t$, as otherwise the vertices 
$x,y,z_t,x_t$ would induce a diamond.
Vertex $z_t$ is dominated by some vertex $y' \in D^*$.
Observe also that $y'$ is not adjacent to $x_t$, because $x_t$ is a private for $v$ with respect to $D^*$.
Since $G$ is a line graph and does not contain a claw as an induced subgraph, and $yx_t,x_ty' \not\in E(G)$ which implies that
$y' = v_j$ as $y$ is private for $v_j$ with respect to $D^*$,
we conclude that $yy'\in E(G)$, which contradicts our assumption that $y$ is private.
Now the vertices $x,v_j,z_t,y$ induce  a diamond which violates Observation~\ref{obs:diamond}.
Suppose now that $y$ is not dominated by $Z_{uv}$. Then $yu\in E(G)$. Since $D$ is a child of $D^*$, $u$ is adjacent to a vertex $w\in D$. Assume that $w\in D^*\setminus\{v\}$.
Notice that $y$ is not adjacent to $v$ as it is private for $v_j$.
Observe also that $w$ is not adjacent to $v$, as $v$ is private for $u$ with respect to $D$.
Since $G$ contains no induced claw, we conclude that $yw\in E(G)$.
As $y$ is private for $v_j$, this means that $v_j = w$.
Now we have a diamond induced by the vertices $u,v_j,x,y$, which violates Observation~\ref{obs:diamond}.
Hence, $w\notin D^*\setminus\{v\}$, which means that $w\in Z_{uv}$.
By the previous case $wy\notin E(G)$.
Now we have that $u$ is adjacent to $v,w,y$, but the vertices $v,w,y$ are pairwise non-adjacent, so $G$ contains a claw, which is a contradiction.
Then for any private vertex $y$ for $v_j$, $y \notin K$.
It follows that $yz_i\in E(G)$ as $G$ is a line graph.
Hence, $y$ is not a private for $v_j$ with respect to $D'$.
The obtained contradiction proves that $z_i \notin S_j$ for $j\in\{1,\ldots,n\}$.

We have thus shown that $z_i\in Z_i$ and $Z_i \neq \emptyset$ for $i\in\{1,\ldots,k\}$. Now we show that none of the three conditions of Case 1 applies, and thus we can conclude that 
$\mathcal{D} \neq \emptyset$.
(i) Since $(D^*\setminus\{v\})\cup\{u\}$ is a subset of minimal dominating set $D$, then for
any $x\in D^*\setminus\{v\}$, $N[x]$ has a vertex that is not in $U$ and not dominated by other vertices
of $D^*\setminus\{v\}$.
ii) If $k\geq 1$, then by the argument above $Z_i \neq \emptyset$ as $z_i \in Z_i$ for $i\in\{1,\ldots,k\}$.
iii) Because $D$ is a child of $D^*$, $u$ is adjacent to some vertex of $D^*\setminus\{v\}$ or
$u$ is adjacent to some vertex of $Z_{uv} \subseteq \cup_{j=1}^kZ_j$.

It remains to observe that $Z=Z_{uv}$ should be considered for the addition to $(D^*\setminus\{v\})\cap\{u\}$.
If $u$ is adjacent to some vertex of $D^*\setminus\{v\}$,
$D$ is included in $\mathcal{D}$ in Case 2, and if
$u$ is not adjacent to the vertices of $D^*\setminus\{v\}$, but $u$ is adjacent to some vertex $w\in Z_{uv}\subseteq \cup_{j=1}^kZ_j$, then
$D$ is included in $\mathcal{D}$ when we consider Case 3.
\end{proof}

To conclude the proof of Lemma \ref{lem:bipartite}, it is sufficient to observe that Claims~\ref{cl:A}, \ref{cl:B}, and \ref{cl:C} immediately imply that $\mathcal{D}$ is the set of all children of $D^*$ with respect to flipping $u$ and $v$.
\end{proof}

\section{Enumeration of minimal dominating sets of graphs of large girth}\label{sec:girth}
On line graphs we were able to observe properties of the parent relation in addition to uniqueness, which made it possible to apply the flipping method and design efficient algorithms for enumerating the minimal dominating sets. In this section we show that the flipping method can also be applied to graphs of girth at least 7. To do this, we observe other desirable properties of the parent relation on this graph class.

Let $D$ be a minimal dominating set of a graph $G$ such that $G[D]$ has at least one edge.
Let also $u\in D$ be a vertex dominated by another vertex $w\in D$ and assume that $v\in P_D(u)$.
Recall that its parent $D^*$ is defined by choosing a maximal independent set  $X_{uv}\subseteq P_D(u)\setminus N[v]$ in $G[P_D(u)\setminus N[v]]$, considering the set 
$D'=(D\setminus\{u\})\cup X_{uv}\cup \{v\}$, and then letting $D^*=D'\setminus Z_{uv}$ where $Z_{uv}\subseteq D\cap D'$.
We can easily observe the following.

\begin{lemma}\label{lem:parent-girth}
If $G$ is a graph of girth at least 7, then $X_{uv}=P_D(u)\setminus\{v\}$, and
each vertex of $Z_{uv}$ dominates at most one vertex of $\cup_{x\in X_{uv}\cup\{v\}}P_{D^*}(x)\setminus N[u]$.
\end{lemma}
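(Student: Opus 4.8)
The plan is to derive both assertions from the single structural fact that girth at least $7$ forbids all cycles of length at most $6$, in particular triangles, $4$-cycles, and $6$-cycles.

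First I would establish $X_{uv}=P_D(u)\setminus\{v\}$. The key observation is that $P_D(u)$ is an independent set: if two distinct vertices $a,b\in P_D(u)$ were adjacent, then since $a,b\in N(u)$ the vertices $u,a,b$ would form a triangle, contradicting $g(G)\geq 7$. Consequently no vertex of $P_D(u)\setminus\{v\}$ is adjacent to $v\in P_D(u)$, so $P_D(u)\setminus N[v]=P_D(u)\setminus\{v\}$, and the induced subgraph $G[P_D(u)\setminus N[v]]$ has no edges. A greedily chosen maximal independent set of an edgeless graph is its whole vertex set, whence $X_{uv}=P_D(u)\setminus\{v\}$, and in particular $X_{uv}\cup\{v\}=P_D(u)$.

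For the second assertion I would write $S=\bigcup_{x\in X_{uv}\cup\{v\}}\bigl(P_{D^*}(x)\setminus N[u]\bigr)=\bigcup_{x\in P_D(u)}\bigl(P_{D^*}(x)\setminus N[u]\bigr)$; since a private neighbour has a unique owner, this union is disjoint. Fix $z\in Z_{uv}$. By the first statement of Lemma~\ref{lem:child-obs}, $z\notin N[X_{uv}\cup\{v\}]=N[P_D(u)]$, so $z$ lies neither in $S$ nor adjacent-or-equal to any vertex of $P_D(u)$; hence $z$ dominating a vertex $y\in S$ means simply $zy\in E(G)$. I would argue by contradiction, assuming $z$ is adjacent to two distinct vertices $y_1,y_2\in S$, and split into two cases according to their owners.

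If $y_1,y_2\in P_{D^*}(x)$ for a common $x\in P_D(u)$, then $x,y_1,z,y_2$ close up into a walk of length $4$ on pairwise distinct vertices ($y_1\neq y_2$ by assumption, $y_i\neq x$ as neighbours, and $z$ differs from $x,y_1,y_2$ since $x,y_1,y_2\in N[P_D(u)]$ while $z\notin N[P_D(u)]$), yielding a cycle of length at most $4$ — impossible under $g(G)\geq 7$. If instead $y_1\in P_{D^*}(x_1)$ and $y_2\in P_{D^*}(x_2)$ with $x_1\neq x_2$, then since $x_1,x_2\in P_D(u)\subseteq N(u)$ the sequence $u,x_1,y_1,z,y_2,x_2$ closes into a walk of length $6$. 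These six vertices are pairwise distinct: $u\neq x_i$ as neighbours; $x_i\in N[u]$ whereas $y_j\notin N[u]$ (as $y_j\in S$), so $x_i\neq y_j$ and $u\neq y_j$; $x_i\in D^*$ whereas $z\notin D^*$, so $z\neq x_i$; and $u,y_j\in N[P_D(u)]$ whereas $z\notin N[P_D(u)]$, so $z\neq u$ and $z\neq y_j$. A closed walk of length $6$ on distinct vertices forces a cycle of length at most $6$, again contradicting $g(G)\geq 7$; indeed any chord such as $x_1x_2$, $y_1y_2$, or an edge $x_iy_j$ would only create an even shorter forbidden cycle. This contradiction shows $z$ is adjacent to at most one vertex of $S$, as required. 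I expect the main technical care to lie in the distinctness bookkeeping of the second case — verifying that all six vertices are genuinely different and that no chord survives — since the cycle-length argument itself is immediate once the walk is seen to sit on distinct vertices.
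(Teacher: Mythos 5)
Your proof is correct, and it follows exactly the route the paper intends: the paper states this lemma without proof (``We can easily observe the following''), and the intended observation is precisely that girth at least $7$ excludes the short cycles that would otherwise arise --- a triangle $u,a,b$ if $P_D(u)$ contained an edge, a $4$-cycle $x,y_1,z,y_2$ if two dominated vertices shared an owner, and a $6$-cycle $u,x_1,y_1,z,y_2,x_2$ otherwise. Your write-up supplies the distinctness bookkeeping (via $z\notin N[X_{uv}\cup\{v\}]$ from Lemma~\ref{lem:child-obs} and $S\cap N[u]=\emptyset$) that the paper leaves implicit, and it is accurate throughout.
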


Let $D^*$ be a minimal dominating set of a graph $G$ of girth at least 7
with $n$ vertices and $m$ edges, and let $v$ be an isolated vertex of $G[D^*]$. Suppose that $u$ is a neighbor of $v$.
Let $\{y_1,\ldots,y_k\}=P_{D^*}(v)\setminus N[u]$. For each $i\in\{1,\ldots,k\}$, denote by $Z_i=N(y_i)\setminus \{v\}$.
We generate a set $\mathcal{D}$  of minimal dominating sets as follows.

\medskip
\noindent
{\bf Case 1.} If $k\geq 1$ and there is an index $i\in\{1,\ldots,k\}$ such that $Z_i=\emptyset$, then $\mathcal{D}=\emptyset$.

\medskip
\noindent
{\bf Case 2.} If $k\geq 1$ and $Z_i\neq \emptyset$ for all $i\in\{1,\ldots,k\}$, then we
successively  consider all sets $Z=\{z_1,\ldots,z_k\}$ where $z_i\in Z_i$. If $k=0$, then $Z=\emptyset$. Observe that $D'=(D^*\setminus\{v\})\cup\{u\}\cup Z$ is a dominating set.
Let $W$ be the set of isolated vertices of $G[D^* \setminus\{v\}]$ belonging to $N(u)$.
We construct a partition of $W$ into three sets $X_0,X_1,X_2$ (which can be empty) as follows.
A vertex $x\in W$ is included in $X_0$ if $P_{D^*}(x)=\emptyset$, $x$ is included in
$X_1$ if $P_{D^*}(x)$ contains a vertex of degree one in $G$, and otherwise $x$ is included in $X_2$.

\medskip
\noindent
{\bf Case 2.1.} If $X_2=\emptyset$, then let $D''=D'\setminus X_0$. Observe that $D''$ is a dominating set, since 
vertices of $X_0$ are dominated by $u$, 
vertices of $X_0$ have no private neighbors outside of $D''$,
and a vertex of $X_0$ does not dominate any neighbor of another vertex of $X_0$ since $g(G)\geq 7$. 
Observe also that $X_0\cap Z=\emptyset$,
as otherwise this would make a cycle of length $5$.
We construct a minimal dominating set $D$ from $D''$ by greedy removal of vertices and add it to $\mathcal{D}$.

\medskip
\noindent
{\bf Case 2.2.} If $X_2\neq\emptyset$, then we consider all subsets $X\subseteq X_2$. Let $X=\{x_1,\ldots,x_p\}$.
Recall that for each $x_j\in X$, $P_{D'}(x_j)\neq\emptyset$. Let $P_{D'}(x_j)=\{x_j^1,\ldots,x_j^{s_j}\}$. Denote by $R_{j,t}$ the set $N(x_j^t)\setminus \{x_j\}$ for $j\in \{1,\dots,p\}$
and $t\in\{1,\ldots,s_j\}$. Note that $N(x_j^t)\setminus \{x_j\}\neq\emptyset$ because 
 $x_j^t$ has degree at least 2.

By Lemma~\ref{lem:parent-girth},
for any $i\in\{1,\ldots,k\}$, $j\in \{1,\dots,p\}$, and $t\in\{1,\ldots,s_j\}$, we have that $Z_i\cap R_{j,t}=\emptyset$, 
since otherwise we get a cycle of length $6$.
Furthermore, for any  $j,j'\in \{1,\dots,p\}$ and $t,t'\in\{1,\ldots,s_j\}$
such that $(j,t)\neq(j',t')$, we have that $R_{j,t}\cap R_{j',t'}=\emptyset$,
since otherwise we get a cycle of length $5$.
Now we consecutively consider all sets $R=\{w_{j,t} \mid 1\leq j\leq p,1\leq t\leq s_j\}$, where each vertex $w_{j,t}$ is chosen from $R_{j,t}$.
For each choice of $X$ and $R$, we construct the set $D''=(D'\setminus (X_0\cup X))\cup R$. 
The set $D''$ is a dominating set, because $u$ dominates $X_0\cup X$,
no vertex of $X_0$ has a private neighbor with respect to $D'$, and each private vertex for any vertex of $X$ with respect to $D'$ 
is dominated by a vertex of $R$.
We construct a minimal dominating set $D$ from $D''$ by greedy removal of vertices and add it to $\mathcal{D}$.

\begin{lemma}\label{lem:gen_from_par-girth}
The set $\mathcal{D}$ is a set of minimal dominating sets such that $\mathcal{D}$ contains all
children of $D^*$ with respect to flipping $u$ and $v$, and elements of $\mathcal{D}$ are generated with polynomial delay 
$O(n+m)$.
\end{lemma}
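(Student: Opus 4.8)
The plan is to split the statement into three parts: (i) every $D\in\mathcal{D}$ is a minimal dominating set; (ii) $\mathcal{D}$ contains every child of $D^*$ with respect to flipping $u$ and $v$; and (iii) the elements of $\mathcal{D}$ are produced with delay $O(n+m)$. Parts (i) and (iii) are routine once the structural work for (ii) is in place, so the proof effort concentrates on completeness.

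For (i), each $D$ is produced from a set $D''$ by greedy removal, so it suffices to check that $D''$ is a dominating set; minimality is then automatic. In Case 2.1 the deleted vertices of $X_0$ are dominated by $u$ and have no private neighbours, and because $g(G)\geq 7$ no vertex of $X_0$ can dominate a neighbour of another vertex of $X_0$ or a vertex of $Z$, so deleting $X_0$ destroys no coverage. In Case 2.2 the same reasoning handles $X_0$, while for each deleted $x_j\in X$ the private vertices $x_j^t$ are re-dominated by $w_{j,t}\in R_{j,t}$; the disjointness facts $Z_i\cap R_{j,t}=\emptyset$ and $R_{j,t}\cap R_{j',t'}=\emptyset$, both forced by the girth, guarantee these re-dominations do not collide. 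Hence $D''$ dominates $G$ and greedy removal yields a minimal dominating set.

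For (ii), the crux, let $D$ be a child, so $D=(D^*\setminus(X_{uv}\cup\{v\}))\cup\{u\}\cup Z_{uv}$, and by Lemma~\ref{lem:parent-girth} we have $X_{uv}=P_D(u)\setminus\{v\}$ and every vertex of $Z_{uv}$ dominates at most one vertex of $\bigcup_{x\in X_{uv}\cup\{v\}}P_{D^*}(x)\setminus N[u]$. First I would match each $y_i\in P_{D^*}(v)\setminus N[u]$ to its dominator $z_i\in Z_{uv}\cap N(y_i)$, which is unique by the last property and satisfies $z_i\in Z_i$; this identifies the tuple $Z$. Next I would classify $X_{uv}$ against the partition $W=X_0\cup X_1\cup X_2$: a vertex of $X_1$ has a degree-one private neighbour that would become undominated if it were removed, so $X_{uv}\cap X_1=\emptyset$; a short-cycle argument using $g(G)\geq 7$ shows that every vertex of $X_0$ must in fact be removed in any child, i.e.\ $X_0\subseteq X_{uv}$; consequently $X_{uv}=X_0\cup(X_{uv}\cap X_2)$, so setting $X=X_{uv}\cap X_2$ recovers the subset chosen in Case 2.2. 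The remaining vertices of $Z_{uv}$ are then exactly the dominators of the private neighbours of the deleted vertices of $X$, and the at-most-one property lets me read them off as a legitimate choice of $R=\{w_{j,t}\}$. With these choices the set $D''$ built by the algorithm equals $D$ itself, since the girth-based disjointness gives $Z\cap(X_0\cup X)=\emptyset$ and $Z\sqcup R=Z_{uv}$; greedy removal therefore leaves $D$ unchanged, and $D\in\mathcal{D}$.

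For (iii), the sets $Z_i$ and $R_{j,t}$, the partition $X_0,X_1,X_2$, and the private sets $P_{D'}(x_j)$ are all computable in $O(n+m)$ time. Enumerating the tuples $Z\in\prod_i Z_i$, the subsets $X\subseteq X_2$, and the tuples $R\in\prod_{j,t}R_{j,t}$ can each be done incrementally with $O(n)$ work per new combination, while constructing $D''$ and running greedy removal costs $O(n+m)$; since every combination yields a set and combinations are produced without repetition, the delay is $O(n+m)$. I expect the main obstacle to be part (ii): aligning the decomposition of $Z_{uv}$ furnished by Lemma~\ref{lem:parent-girth} with the two independent mechanisms of the algorithm (the choice $Z$ covering the private neighbours of $v$ and the choice $R$ covering the privates of the deleted vertices of $X$), and in particular establishing the inclusion $X_0\subseteq X_{uv}$ purely from the girth hypothesis.
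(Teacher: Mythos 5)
Your proposal follows the paper's own route very closely: part (i) matches the paper's justification that each $D''$ is dominating, and part (ii) reproduces the paper's completeness argument (the classification $X_{uv}\cap X_1=\emptyset$ and $X_0\subseteq X_{uv}$, the recovery of $X=X_{uv}\cap X_2$, and the decomposition $Z_{uv}=Z\sqcup R$ via Lemmas~\ref{lem:parent-girth} and~\ref{lem:child-obs}). However, two steps are defective.

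First, your one-line reason for $X_{uv}\cap X_1=\emptyset$ is wrong as stated. If $x\in X_1$ were placed in $X_{uv}$, its degree-one private neighbour $y$ would \emph{not} necessarily become undominated: the child could simply contain $y$ itself (i.e., $y\in Z_{uv}$), and then $y$ is dominated. The correct argument, which is the paper's, is that in this case $y\in D\setminus\{u\}$ dominates $x$, so $x$ ceases to be a private neighbour of $u$ with respect to $D$, contradicting $X_{uv}\subseteq P_D(u)$. The conclusion survives, but only with this repaired reasoning.

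Second, and more substantively, your delay argument in part (iii) conflates distinctness of the combinations $(Z,X,R)$ with distinctness of the output sets. The algorithm runs greedy removal on $D''$ after constructing it, and a priori two distinct sets $D''$ could collapse to the same minimal dominating set; if many combinations produced duplicates, the delay between successive \emph{elements of $\mathcal{D}$} would not be $O(n+m)$, and the counting in the framework of Lemma~\ref{lem:main} (which bounds the number of sets generated per node) would also break. The paper closes this hole explicitly: every vertex of $\{u\}\cup Z\cup R$ has a private with respect to $D''$ ($v$ for $u$, $y_i$ for $z_i$, $x_j^t$ for $w_{j,t}$) and therefore survives greedy removal, and, using $g(G)\geq 7$, every vertex of $X_2\setminus X$ has a private neighbour with respect to $D^*$ that is dominated by none of $Z$, $X$, $R$, so it survives as well; hence the output determines $(Z,X,R)$, and distinct combinations yield distinct outputs. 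Your proof needs this argument (or an equivalent one) before the claimed delay bound is legitimate.
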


\begin{proof}
First, observe that if $\mathcal{D}\neq\emptyset$, then each $D$ is a minimal dominating set. 
Moreover, each  set $D$ constructed in Case 2 is a minimal dominating set that contains $\{u\}\cup Z\cup R$. This is because each vertex in $\{u\}\cup Z\cup R$ has a private with respect to $D''$, due to the fact that $v$ is a private for $u$, each $y_i$ is a private for $z_i$, and each $x_{j,t_j}$ is a private for $w_{j,t}$. We claim that all sets in $\mathcal{D}$ are distinct. 
For Case 2.1, the claim is straightforward, because the sets $D$ are constructed for distinct sets $Z$. 
For Case 2.2, we can observe that the vertices of $X_2\setminus X$ cannot be deleted by greedy removal, because they have private neighbors with respect to $D^*$, and these private neighbors are not dominated by $Z$, $X$, or $R$, as $g(G)\geq 7$. Hence, the sets constructed for distinct $X$ are distinct. Therefore, the sets in $\mathcal{D}$ are distinct in this case as well.  
 
Now we prove that $\mathcal{D}$ contains all children of $D^*$ with respect to flipping $u$ and $v$. Let $D=(D^*\setminus(\{v\}\cup X_{uv}))\cap\{u\}\cap Z_{uv}$.

We claim that $X_0\subseteq X_{uv}\subseteq X_0\cup X_2$.
Recall that for each vertex $x$ of $W$, $x \in X_0$ if $P_{D^*}(x)=\emptyset$,
$x\in X_1$ if $P_{D^*}(x)$ contains a vertex of degree one in $G$, and  $x\in X_2$ otherwise,
where $W$ is the set of isolated vertices of $G[D^*\setminus\{v\}]$ in $N(u)$.
We have $X_{uv}\subseteq W$, since the vertices of $X_{uv}$ are privates for $u$ in $D$. 
Suppose that $x\in X_1$. Then $x$ is adjacent to a vertex $y$ of degree one such that $y\in P_{D^*}(x)$. Vertex $y$ can be dominated either by itself or by its unique neighbor $x$. 
If 
$x\in X_{uv}$,
then $y$ has to be dominated by itself in $D$, but then $x$ is not a private for $u$ with respect to $D$.
It follows that $X_{uv}\subseteq X_0\cup X_2$. Let now $x\in X_0$. We have that $P_{D^*}(x)=\emptyset$, and since the neighbors of $x$ different from $u$ cannot be dominated by $X_{uv}\setminus\{x\}$, we conclude that $x\in X_{uv}$. 

Let $X=X_{uv}\setminus X_0$. Clearly, our algorithm considers this set.
 By Lemma~\ref{lem:child-obs},
for each $z\in Z_{uv}$,  there is a vertex  $x\in N[X_{uv}\cup\{v\}]\setminus N[u]$ adjacent to $z$
such that $x\notin N[D^*\setminus(X_{uv}\cup\{v\})]$, and  for any  $x\in N[X_{uv}\cup\{v\}]\setminus N[u]$ 
such that $x\notin N[D^*\setminus(X_{uv}\cup\{v\})]$, there is an adjacent $z\in Z_{uv}$. 
To see this, it is sufficient to observe that all neighbors of
the vertices of $X_0$ except $u$ are dominated by $D^*\setminus(X_{uv}\cup\{v\})$, because $g(G)\geq 7$.

By Lemma~\ref{lem:parent-girth}, for any $x\in N[X\cup\{v\}]\setminus N[u]$ 
such that $x\notin N[D^*\setminus(X\cup\{v\})]$, $Z_{uv}$ contains the unique vertex adjacent to $x$.
Since we consider all possible ways to dominate such vertices in our algorithm,
there are sets $Z$ and $R$ such that $Z_{uv}=Z$ in Case 2.1 or $Z_{uv}=Z\cup R$ in Case 2.2.
Consequently, $D$ is in $\mathcal{D}$.

To complete the proof, we consider the running time.
Initially all sets $Z_i$ can be generated in $O(n+m)$ time. Likewise, $W, X_0, X_1, X_2$ can be generated within the same time bound. Every set $X$ can be generated with $O(n)$ delay from the previous set $X$. For every set $X$, we need to generate the sets $R$. Before we can start generating the sets $R$, we need to generate a list  of sets $R_{j,t}$. These sets $R_{j,t}$ have empty intersections with each other, hence the sum of the sizes of all sets $R_{j,t}$ is $O(n)$ for each $X$. Thus, after generating $X$, we can generate the list of all $R_{j,t}$ in $O(n+m)$ time. Now, as long as $X$ is fixed, we can generate every $R$ in time $O(n)$ using the list of $R_{i,j}$. Note that every $R$ gives us a new dominating set $D$. This means that the delay between each generated dominating set $D$ is $O(n+m)$. 
\end{proof}

Combining Lemmas~\ref{lem:main} and \ref{lem:gen_from_par-girth}, we obtain the following theorem.

\begin{theorem}\label{thm:girth}
All minimal dominating sets of a graph of girth at least 7 with $n$ vertices and $m$ edges can be enumerated
with incremental-polynomial delay 
$O(n^2m|\mathcal{L}|^2)$, 
and in total time 
$O(n^2m|\mathcal{L}^*|^2)$,
where $\mathcal{L}$ is the set of already generated minimal dominating sets
and $\mathcal{L}^*$ is the set of all minimal dominating sets.
\end{theorem}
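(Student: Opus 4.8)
The plan is to obtain Theorem~\ref{thm:girth} as a direct instantiation of the general enumeration scheme of Lemma~\ref{lem:main}, feeding it the child-generating routine supplied by Lemma~\ref{lem:gen_from_par-girth}. First I would observe that Lemma~\ref{lem:gen_from_par-girth} is precisely an algorithm $\mathcal{A}$ of the type demanded by Lemma~\ref{lem:main}: given a minimal dominating set $D^*$ with an isolated vertex $v$ of $G[D^*]$ and a neighbor $u$ of $v$, it produces with polynomial delay $O(n+m)$ a family $\mathcal{D}$ of minimal dominating sets that contains every child of $D^*$ with respect to flipping $u$ and $v$. Hence the hypothesis of Lemma~\ref{lem:main} is met with $p(n,m)=O(n+m)$, and the depth-first traversal of the implicit digraph $\mathcal{G}$ described there enumerates all minimal dominating sets of $G$ without repetition (recall from Lemma~\ref{lem:parent} that every minimal dominating set inducing at least one edge is reachable from a maximal independent set through parent links, and maximal independent sets are the out-neighbors of the root $r$).

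The next step is to decide which of the three running-time regimes of Lemma~\ref{lem:main} is available on graphs of girth at least $7$. Here Lemma~\ref{lem:gen_from_par-girth} asserts only that $\mathcal{D}$ contains all children of $D^*$; unlike the line-graph case (Lemma~\ref{lem:gen_from_par}) it does not guarantee the edge-monotonicity property $|E(G[D])|>|E(G[D^*])|$ for every $D\in\mathcal{D}$, and unlike the bipartite line-graph case (Lemma~\ref{lem:bipartite}) it does not guarantee that $\mathcal{D}$ consists of children only. Consequently only the first (general) regime of Lemma~\ref{lem:main} applies, and this is exactly the source of the quadratic dependence $|\mathcal{L}|^2$: in the absence of edge-monotonicity the depth-first stack $\mathcal{S}$ is bounded only by $|\mathcal{L}|$ rather than by $m$, and for each of its records one generates and tests up to $m|\mathcal{L}|$ sets against the current list, giving $m|\mathcal{L}|^2$ overall.

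Finally I would substitute $p(n,m)=O(n+m)$ into the bounds of that first regime. Since $G$ is simple we have $m\le\binom{n}{2}=O(n^2)$, so $p(n,m)+n^2=O(n+m+n^2)=O(n^2)$. The incremental-polynomial delay $O((p(n,m)+n^2)m|\mathcal{L}|^2)$ then simplifies to $O(n^2m|\mathcal{L}|^2)$, and the total running time $O((p(n,m)+n^2)m|\mathcal{L}^*|^2)$ simplifies to $O(n^2m|\mathcal{L}^*|^2)$, matching the claim.

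In this composition the theorem itself is routine; essentially all of the mathematical content lives upstream, in the structural facts about girth-$7$ graphs encoded in Lemma~\ref{lem:parent-girth} (that $X_{uv}=P_D(u)\setminus\{v\}$ and that each $z\in Z_{uv}$ dominates at most one of the relevant private neighbors) and in the correctness and delay analysis of the generator in Lemma~\ref{lem:gen_from_par-girth}. The only point that demands genuine care is the one highlighted above: confirming that neither of the two stronger hypotheses of Lemma~\ref{lem:main} is in force for the girth-$7$ generator, so that one correctly lands in the general regime and reports the $|\mathcal{L}|^2$ (rather than a linear) output dependence. Beyond that bookkeeping I expect no obstacle.
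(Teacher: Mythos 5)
Your proposal is correct and takes exactly the paper's route: the paper proves Theorem~\ref{thm:girth} in one line by combining Lemma~\ref{lem:main} with Lemma~\ref{lem:gen_from_par-girth}, implicitly landing in the first (general) regime of Lemma~\ref{lem:main} since the girth-$7$ generator guarantees neither edge-monotonicity nor children-only output. Your write-up merely makes explicit the regime selection and the substitution $p(n,m)+n^2=O(n^2)$, which the paper leaves to the reader.
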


\end{document}